\newcommand{\iid}{\stackrel{\mbox{\scriptsize iid}}{\sim}}
\newcommand{\ind}{\stackrel{\mbox{\scriptsize ind}}{\sim}}
\newcommand{\indicator}{\ensuremath{\mathbbm{1}}}
\newcommand{\calB}{\mathcal{B}}
\newcommand{\calL}{\mathcal{L}}
\newcommand{\calM}{\mathcal{M}}
\newcommand{\Ecr}{\mathscr{E}}
\newcommand{\dd}{\mathrm d}
\newcommand{\Pp}{\mathds{P}}
\newcommand{\X}{\mathds{X}}
\newcommand{\E}{\mathds{E}}
\newcommand{\N}{\mathds{N}}
\newcommand{\CP}{\textsc{cp}}
\newcommand{\CRV}{\textsc{crv}}
\renewcommand{\mid}{\ensuremath{\,|\,}}
\newcommand{\TRsubs}[2]{{\color{purple}}{\sim}}
\date{}
\title{Bayesian nonparametric modeling of multivariate count data with an unknown number of traits}
\author[1]{Lorenzo Ghilotti}
\author[1]{Federico Camerlenghi}
\author[1]{Tommaso Rigon}
\author[2]{Michele Guindani}
\affil[1]{Department of Economics, Management, and Statistics, University of Milano--Bicocca, 20126 Milano, Italy}
\affil[2]{Department of Biostatistics, UCLA Fielding School of Public Health, Los Angeles, CA 90095, US}
\providecommand{\keywords}[1]{
  \small 
  \textbf{\textit{Keywords:}} #1
  \normalsize
}
\newtheorem{theorem}{Theorem}
\newtheorem{lemma}{Lemma}
\newtheorem{proposition}{Proposition}
\theoremstyle{definition}
\newtheorem{definition}{Definition}
\newtheorem{remark}{Remark}
\theoremstyle{remark}
\newtheorem{example}{Example}
\begin{document}

\maketitle

\begin{abstract}
Feature and trait allocation models are fundamental objects in Bayesian nonparametrics and play a prominent role in several applications. Existing approaches, however, typically assume full exchangeability of the data, which may be restrictive in settings characterized by heterogeneous but related groups. In this paper, we introduce a general and tractable class of Bayesian nonparametric priors for partially exchangeable trait allocation models, relying on completely random vectors. We provide a comprehensive theoretical analysis, including closed-form expressions for marginal and posterior distributions, and illustrate the tractability of our framework in the cases of binary and Poisson-distributed traits. A distinctive aspect of our approach is that the number of traits is a random quantity, thereby allowing us to model and estimate unobserved traits. Building on these results, we also develop a novel mixture model that infers the group partition structure from the data, effectively clustering trait allocations. This extension generalizes Bayesian nonparametric latent class models and avoids the systematic overclustering that arises when the number of traits is assumed to be fixed. We demonstrate the practical usefulness of our methodology through an application to the `Ndrangheta criminal network from the \emph{Operazione Infinito} investigation, where our model provides insights into the organization of illicit activities. 
\end{abstract}

\keywords{Completely random measures, Indian buffet process, Bayesian clustering, trait allocation models, partial exchangeability.}

\section{Introduction}\label{sec:introduction}


Feature allocation models gained increasing popularity within the Bayesian nonparametric community since the definition of the Indian buffet process (\textsc{ibp}) by \cite{Gri06, Gri11}. In these models, each subject (or observation) is characterized by the presence of a set of latent characteristics, referred to as \emph{features}. A key development was due to \cite{Thi07}, who determined the de Finetti measure of the \textsc{ibp}. 
These seminal contributions paved the way for a rigorous theoretical foundation of exchangeable feature allocations \citep{Bro(13)}. Relevant extensions are the three-parameter \textsc{ibp} \citep{Teh(09)}, scaled processes \citep{Mas22, Camerlenghi2024}, and product-form feature models defined by \citet{Bat18} and studied by \citet{Ghilotti2025}. See also \citet{beraha25} for a perspective based on point processes theory and predictive characterizations for feature models. Trait allocation models \citep{Jam(17), Camp(18)} are an important generalization of feature allocation models, where each subject's characteristics, referred to as \textit{traits}, are associated with non-negative measurements rather than binary features.  Notable examples of trait allocation models are described in \citet{Bro15, Heaukulani2016, Beraha2023}.  Applications of feature and trait allocation models encompass several fields, including genomics \citep{Beraha2023, Mas22}, Bayesian factor analysis \citep{Aye21}, ecology \citep{Stolf2025, Ghilotti2025}, gene expression modeling \citep{Kno11}, document classification and topic modeling \citep{zhou2016priors,Wil10}, image segmentation and object recognition~\citep{Gri11, Bro15}.

Despite their popularity, most existing contributions on both feature and trait allocation models assume exchangeability among subjects, which implies a form of homogeneity across observations. In many applied areas, this assumption is restrictive, as data come from multiple related studies and are intrinsically heterogeneous. In such cases, the notion of partial exchangeability is more appropriate.  Under this assumption, observations are divided into groups that are exchangeable within, but not across, the different groups. Partial exchangeability has a long history to handle data from dependent populations of species: refer e.g. to \cite{Nip(14), Gri(17), Cam(19)AoS, Colombi2025, franzolini2025} for specific structures, and to  \cite{Quintana2022} for a complete review. Nonetheless, only a few and recent contributions are available for trait and feature allocation models. For instance, \citet{Mas(18), Beraha2023, james2024HIBP} introduced hierarchical constructions to induce dependence among groups, while \cite{shen2024} proposed a bivariate beta process. 

This paper introduces a novel general and tractable class of Bayesian nonparametric priors suitable for modeling partially exchangeable trait allocations. Our proposal relies on completely random vectors (\textsc{crv}s), see, \cite{Cat(21)AoS}. Also, see \citet{Kallenberg_2017}. The proposed theoretical framework is very broad and therefore we focus on a notable subclass, i.e., finite completely random vectors (\textsc{fcrv}s), which prescribe that the number of traits in the population is finite but random. We provide a  comprehensive theoretical analysis of partially exchangeable trait allocations under \textsc{crv}s and \textsc{fcrv}s, including: (i) the marginal distribution of a sample and (ii) posterior representations. To illustrate the applicability of our framework, we discuss two examples involving binary traits (i.e. features) and Poisson counts. In these key special cases, the analytical derivations are extremely tractable.

To illustrate the practical relevance of our methodological framework, we analyze the criminal network dataset of the `Ndrangheta, previously examined in \citet{esbm_rigon, Lu2025}. The data were collected during \emph{Operazione Infinito} \citep{Calderoni17}, a large-scale law enforcement initiative aimed at dismantling the core branch of the `Ndrangheta Mafia in the Milan area. The dataset records multivariate binary outcomes describing the attendance of known affiliates (subjects) at a series of meetings (binary traits, i.e., features). In addition, affiliates can be grouped according to their \emph{locali} affiliation, as documented in juridical records. This partition naturally suggests a partially exchangeable (or \emph{known-groups}) framework, where inference can be carried out in closed form by leveraging our theoretical results. A distinctive feature of the proposed trait allocation model, in contrast with classical approaches to multivariate count data, is that the number of traits, that is, the columns of the data matrix, is itself a random variable. In our motivating application, this means that some meetings may remain unobserved because they were not detected by law enforcement. Our methodology explicitly accounts for this possibility while also enabling the estimation of the number of unseen traits.

Unfortunately, the \emph{locali} partition, while highly informative, is insufficient to fully capture the complexity of the relationships among subjects \citep{esbm_rigon, Lu2025}. To address this, we extend the partially exchangeable model by allowing the partition structure itself to be inferred from the data, thereby leading to a clustering problem. In this setting, the groups are not predetermined but are learned from the data; for this reason, we refer to it as the \emph{unknown-groups} framework. The second main contribution of this paper is thus a novel mixture model designed to cluster trait allocations. Inference is enabled by the distributional results established for the simpler known-groups setting, leading to an efficient Gibbs sampling strategy. Importantly, by treating the partition as a random quantity, the proposed model generalizes Bayesian nonparametric latent class models, such as \citet{Dunson2009}. A key distinction, however, is that the total number of traits is unknown and must be estimated from the data. We formally demonstrate that ignoring this aspect inevitably results in overclustering.

The paper is organized as follows. Section~\ref{sec:methodology} reviews the classical framework of exchangeable trait allocation models, which serves as the foundation for our approach to modeling multivariate count data. Section~\ref{sec:partially_ex_setting} introduces the partially exchangeable setting for trait allocation models and provides a complete Bayesian analysis, including closed-form expressions for the marginal and posterior distributions. Section~\ref{sec:learning_clustering} develops our more elaborate methodology for clustering count data with potentially unobserved traits. We also examine a na\"ive specification that disregards the unseen traits, and we demonstrate, both theoretically and empirically, that such an approach can bias the resulting analyses. Section~\ref{sec:overview_simulation_study} discusses simulation studies assessing the performance of our methodology, while Section~\ref{sec:application} applies it to the criminal network dataset from the \emph{Operazione Infinito} investigation. The paper ends with a discussion; proofs, additional theorems and simulation studies are collected in the Supplementary Material.

\section{Background on exchangeable trait allocation models}
\label{sec:methodology}

We begin by reviewing exchangeable trait allocation models \citep{Camp(18)}, in which data for each of the $n$ subjects are conditionally independent and identically distributed (iid) draws from a common and simple distribution. These highly tractable models serve as building blocks for the more flexible approaches introduced later in Sections~\ref{sec:partially_ex_setting}-\ref{sec:learning_clustering}.

\begin{figure}[tbp]
\centering
\includegraphics[width=0.8\textwidth]{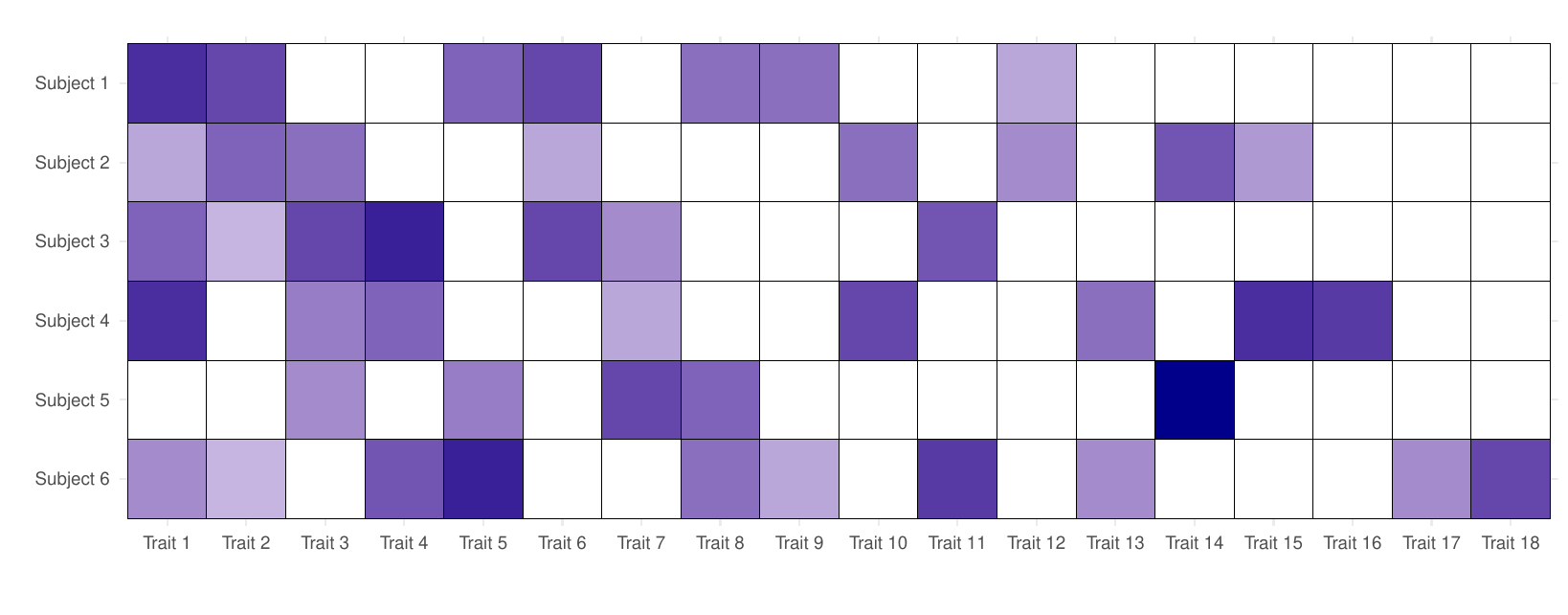}
\caption{Observed data from an exchangeable trait model: matrix of counts $\bm{A}$, with $n = 6$ subjects and $K_n = 18$ observed traits. Rows and columns are arranged in no particular order. White cells, such as $A_{13} = 0$, indicate the absence of a trait for a given subject, while darker shades of blue represent higher values of the corresponding counts $A_{i\ell} \in \{1, 2,\dots\}$. \label{fig:traits}}
\end{figure}

Exchangeable trait allocation models  describe how a set of traits is distributed across a sample of $n$ subjects, where the presence of each trait in a subject is associated with a quantitative measurement, typically an integer, that reflects the expression or abundance of that trait in the subject. More formally, suppose we observe $n$ subjects and $K_n = k$ distinct traits. The data can be represented by an $n \times k$ matrix $\bm{A}$, where each entry $A_{i\ell} \in \{0, 1, 2, \ldots\}$ denotes the count of the $\ell$th trait (column) for the $i$th subject (row), as illustrated in Figure~\ref{fig:traits}. We say that trait $\ell$ is \emph{absent} (or not observed) in subject $i$ if and only if $A_{i\ell} = 0$. Note that each column of $\bm{A}$ contains at least one non-zero entry. Each trait is associated with a distinct \emph{label}, denoted $X_\ell \in \X$, which serves as a placeholder and is not explicitly modeled; it simply identifies the column. Here $\X$ denotes the space of the labels, say $\X = (0, 1)$ for simplicity. Importantly, a given trait $X_\ell$ may be shared by multiple subjects.  A defining feature of trait allocation models, unlike classical multivariate count data models, is that the number of traits (columns) is itself random. In other words, some traits may remain \emph{unseen}.  To model this explicitly, let $(\tilde{X}_j)_{j \geq 1}$ denote the sequence of all possible trait labels and let $\tilde{A}_{ij} \in \{0, 1, 2, \dots\}$ represent the abundance of trait $\tilde{X}_j$ in subject $i$.  In a sample of size $n$, a trait $\tilde{X}_j$ is observed only if $\tilde{A}_{ij} > 0$ for at least one subject. In other words, the observed traits $X_1, \dots, X_{K_n}$ form a subsample of the latent traits $(\tilde{X}_j)_{j \geq 1}$, and the corresponding observed counts satisfy $\sum_{i=1}^n A_{i\ell} > 0$; otherwise, the trait would not be observed. For mathematical convenience, we can organize the pairs $((\tilde{A}_{ij}, \tilde{X}_j))_{j \ge 1}$ by means of subject-specific counting measures $(Z_i)_{ i \ge 1}$ on $\X$, namely
\begin{equation}
    \label{eq:Zi}
    Z_i(\cdot) = \sum_{j\geq 1} \tilde{A}_{ij} \delta_{\tilde{X}_j}(\cdot),
\end{equation}
where $\delta_x$ denotes the Dirac delta mass at the point $x \in \X$. Common assumption in trait allocation models requires that each subject may exhibit only a finite number of traits, ensuring that the total number of distinct traits $K_n$ is almost surely finite in any given sample.

In the exchangeable case, the random variables $\tilde{A}_{ij}$, given a sequence of parameters $(\theta_j)_{j \geq 1}$, are conditionally independent and identically distributed (iid) across subjects (rows) for any fixed $j$, that is
\begin{equation}
\label{eq:cond_prob_counts}
\tilde{A}_{ij} \mid \theta_j \iid P(\cdot\,; \theta_j), \qquad i \geq 1,
\end{equation}
and they are also conditionally independent across traits (columns) for $j \geq 1$. Here, $P(\cdot\,; \theta)$ denotes any parametric distribution supported on the non-negative integers, such as a Poisson distribution, depending on a positive parameter $\theta > 0$. The parameters $(\theta_j)_{j \ge 1}$ can be organized in a discrete measure $\tilde{\mu}$ on $\X$, defined as
\begin{equation}
\label{eq:mui_def}
    \tilde{\mu}(\cdot) = \sum_{j \geq 1} \theta_j \delta_{\tilde{X}_j}(\cdot).
\end{equation}
Note that the atoms of the discrete measure $\tilde{\mu}$, i.e., the trait labels $\tilde{X}_j$, are common across all subjects, so that the same traits are allowed to be observed in multiple subjects. Summarizing, the full Bayesian specification is
\begin{equation*}
\begin{aligned}
Z_i\mid \tilde{\mu} &\iid \CP (\tilde{\mu}), \qquad i \ge 1,\\
\tilde{\mu} &\sim \mathcal{Q},
\end{aligned}
\end{equation*}
which means that $Z_i$ in \eqref{eq:Zi} are iid from a \emph{process of counts} (\textsc{cp}) with parameter $\tilde{\mu}$ defined by \eqref{eq:cond_prob_counts}-\eqref{eq:mui_def}. Here $\mathcal{Q}$ denotes the de Finetti measure, i.e., the prior distribution of the random measure $\tilde{\mu}$. 

The exchangeable setting, especially when traits are binary, has been thoroughly investigated, e.g., in \citet{Jam(17), Camp(18)}, and recently by \citet{Ghilotti2025, beraha25} in the binary case. 
In the following, we present two relevant examples for the  distribution $P(\cdot; \theta)$. 

\begin{example}[Exchangeable binary traits]\label{example:ex_bin_traits}
In our motivating application involving meetings of criminals, we track the attendance of `Ndrangheta affiliates (subjects) at various meetings (traits).  Thus, the $K_n = k$ observed traits correspond to distinct meetings where at least one of the $n$ affiliates has been identified by investigators. Clearly, it is likely that a few meetings have not been spotted therefore it is reasonable to model the total number of meetings as a random variable. In this case study, the count measurement $\tilde{A}_{ij}$ are binary, reducing the trait allocation framework to the special case of feature allocation models \citep{Bro(13)}. Thus, we let $\tilde{A}_{ij} \mid \theta_j \iid \text{Bernoulli}(\theta_j)$ for $i \ge 1$ and any fixed $j$ with success probabilities $\theta_j \in (0, 1)$, so that  $Z_i\mid \tilde{\mu} \iid \textsc{cp}(\tilde{\mu})$ are iid draws from a Bernoulli process \citep{Gri11}. A broad class of tractable prior distributions for $\tilde{\mu}$ are described in \citet{Ghilotti2025} to which we refer for further details and applications to ecology.   
\end{example}

\begin{example}[Exchangeable Poisson counts]
When the data $\tilde{A}_{ij}$ take values in $\{0, 1, 2, \dots\}$, as those depicted in Figure~\ref{fig:traits}, a natural choice is $P(a; \theta) = (a!)^{-1}{\theta}^a e^{-\theta}$ for $a \in \{0, 1,\dots\}$, that is a Poisson distribution with mean $\theta > 0$. In other words, we assume $\tilde{A}_{ij} \mid \theta_j \iid \text{Poisson}(\theta_j)$ for $i \ge 1$ and any fixed~$j$. This specification has been less explored compared to the binary case. 
\end{example}

\begin{remark}\label{rmk:single_parameter_count}
Throughout the paper, we assume that the parametric distribution $P(\cdot\,; \theta)$ is governed by a single positive parameter $\theta > 0$. In the binary case this assumption is not restrictive. However, it may be limiting when modeling count data or categorical data. For instance, more flexible alternatives such as zero-inflated Poisson or multinomial distributions may be preferable. In principle, a theoretical analysis allowing for a general parameter space is possible, but it would require moving beyond completely random vectors---the main technical tool discussed in Section~\ref{sec:partially_ex_setting}---and instead rely on general Poisson processes. To streamline the presentation, we therefore focus on this subclass of models, particularly in light of our motivating application involving binary data. Nonetheless, in Section \ref{app:ext_genereral_space} of the Supplementary Material, we discuss that all theoretical results remain valid with only minor modifications, and we provide an example.
\end{remark}

\section{Partially exchangeable finite trait allocation models} \label{sec:partially_ex_setting}

\subsection{Model specification and completely random vectors}
\label{sec:model_form}

In this section, we introduce a novel class of trait allocation models that relaxes the assumption of exchangeability. We consider a framework in which subjects are partitioned into subpopulations: subjects from different groups are conditionally independent but not identically distributed, while exchangeability holds within each group. This structure, known as partial exchangeability, is well-suited to the `Ndrangheta network data, where affiliates can be naturally grouped according to their membership in specific \emph{locali}. Notably, this extension preserves full analytical tractability, as the posterior distribution of remains available in closed form.

\begin{figure}[tb]
\centering
\includegraphics[width=0.8\textwidth]{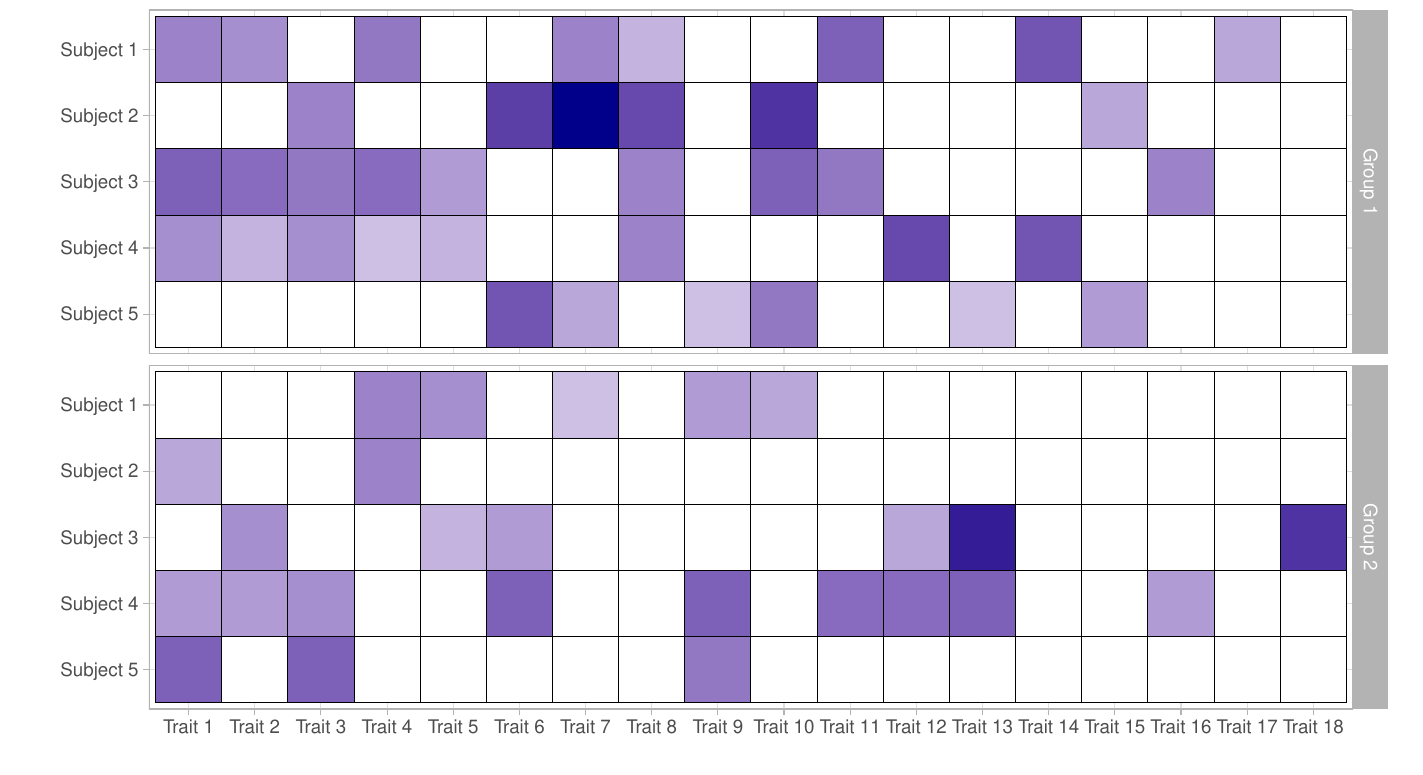}
\caption{Observed data from a partially exchangeable trait model ($d = 2$): two matrices of counts $\bm{A}_1$ and $\bm{A}_2$, each with $n_1 = n_2= 5$ subjects
and $K_n = 18$ observed traits. As in Figure~\ref{fig:traits}, white cells, such
as $A_{221} = 0$, indicate the absence of a trait for a given subject, while darker shades of blue represent
higher values of the corresponding counts $A_{i\ell q} \in \{ 1, 2, \dots\}$.\label{fig:traits_partial}}
\end{figure}

Let $d$ be the number of subpopulations, and suppose we observe a sample of size $n$, with $n_q$ subjects from group $q$, for $q = 1, \ldots, d$, so that $\sum_{q=1}^d n_q = n$. Let $K_n = k$ denote the total number of traits observed across all subjects and groups. The data can be represented by a collection of matrices $\bm{A}_q$, each of dimension $n_q \times k$, where the entry $A_{i\ell q} \in \{0, 1, 2, \ldots\}$ denotes the count of the $\ell$th trait for the $i$th subject in group $q$, as illustrated in Figure~\ref{fig:traits_partial}. Note that a trait may be unobserved within the sample of a specific group. As before, let $(\tilde{X}_j)_{j \geq 1}$ denote the sequence of all possible trait labels and let $\tilde{A}_{ijq} \in \{0, 1, 2, \dots\}$ be the abundance of trait $\tilde{X}_j$ for subject $i$ in group $q$. In a sample of size $n$, a trait $\tilde{X}_j$ is observed only if $\tilde{A}_{ijq} > 0$ for at least one subject belonging to any group. 

We organize these quantities into counting measures  $Z_{iq}(\cdot) = \sum_{j \ge 1} \tilde{A}_{ijq} \, \delta_{\tilde{X}_j}(\cdot)$ for each subject $i$ in subpopulation $q$, with $i \geq 1$ and $q = 1, \ldots, d$. In the partially exchangeable case, the random variables $\tilde{A}_{ijq}$, given the sequences of parameters $(\theta_{j1})_{j \geq 1}, \dots, (\theta_{jd})_{j \geq 1}$, are conditionally iid across subjects belonging to the same group and for a given trait $j$ and group $q$, that is
\begin{equation*}
\tilde{A}_{ijq} \mid \theta_{jq} \iid P(\cdot\,; \theta_{jq}), \qquad i \geq 1,
\end{equation*}
and they are also conditionally independent across traits for $j \geq 1$ and subpopulations $q = 1,\dots,d$. Thus, the main difference compared to the exchangeable case is that the random variables $\tilde{A}_{ijq}$ have different parameters when they refer to subjects belonging to different subpopulations. Moreover,  the parameters $(\theta_{jq})_{j \ge 1}$ can be organized in a group-specific discrete measure $\tilde{\mu}_q(\cdot) = \sum_{j \geq 1} \theta_{jq} \delta_{\tilde{X}_j}(\cdot)$ for $q=1,\dots,d$. Summarising, the full Bayesian specification for partially exchangeable data is
\begin{equation}    \label{eq:partially_ex_traits}
\begin{aligned}
    Z_{iq} \mid \tilde{\mu}_q &\ind \CP (\tilde{\mu}_q), \qquad i\geq 1, \quad q=1,\ldots,d,\\
    (\tilde{\mu}_1,\ldots,\tilde{\mu}_d) &\sim \mathcal{Q}_d,
\end{aligned}
\end{equation}
where $\mathcal{Q}_d$ denotes the de Finetti measure of the vector of random measures $\tilde{\bm{\mu}} = (\tilde{\mu}_1,\ldots,\tilde{\mu}_d)$, that is, the \emph{a priori} distribution. As a clarification, note that for any fixed group $q$ the measures $Z_{iq}$ are conditionally iid draws for $i\geq 1$. Moreover, the exchangeable case discussed in Section~\ref{sec:methodology} is recovered when $d = 1$. 

Introducing some form of dependence among $\tilde{\mu}_1, \dots, \tilde{\mu}_d$ is desirable, as it enables borrowing of information across subpopulations. The challenging task is to elicit a prior $\mathcal{Q}_d$ that induces dependence among the measures while remaining analytically tractable. In pursuing this, we note that inferential goals might relate to: (i) the estimation of trait- and group-specific parameters $\theta_{jq}$; (ii) the estimation of the number of unseen traits in the observed sample. To enable the latter goal, we assume that the total number of traits in the population, denoted with $N$, is finite and random. This implies there will be a finite collection of random variables $\tilde{A}_{i1q}, \dots, \tilde{A}_{iNq}$ for each subject and group, with associated parameters $\theta_{1q},\dots,\theta_{Nq}$ for $q = 1,\dots,d$. Moreover, we assume $N$ is a Poisson random variable with parameter $\lambda>0$. In other terms, the group-specific measures $\tilde{\mu}_q$ in \eqref{eq:partially_ex_traits} take the form
\begin{equation}
\label{eq:finite_CRV}
    \tilde{\mu}_q(\cdot) = \sum_{j=1}^{ N} \theta_{jq} \delta_{\tilde{X}_j}(\cdot), \qquad N \sim \mathrm{Poisson}(\lambda),
\end{equation}
as $q=1,\ldots,d$. Clearly, if $K_n = k$ traits are observed in the finite sample, the number of unseen traits equals $N-k$. Moreover, we assume the parameter vectors $(\theta_{j1},\dots,\theta_{jd})$ are iid draws from a probability law $H^{(d)}$ defined on $(0, \infty)^d$, namely
\begin{equation}
\label{eq:param_H}
(\theta_{j1},\dots,\theta_{jd}) \iid H^{(d)}, \qquad j = 1,\dots,N.
\end{equation}
In this model specification, information is borrowed across groups in two ways: (i) by assuming a common total number $N$ of traits, and (ii) by inducing dependence between the parameters $\theta_{jq_1}$ and $\theta_{jq_2}$ for $q_1 \neq q_2$ through a joint probability distribution $H^{(d)}$. In practice, subpopulations are often exchangeable, making it natural to consider a factorized measure of the form $H^{(d)}(\cdot\,;\psi) = H(\cdot\,;\psi) \times \cdots \times H(\cdot\,;\psi)$, where $\psi$ is a common parameter endowed with a hyperprior. Under this formulation, the parameters $\theta_{jq}$ are conditionally iid for $j = 1,\dots,N$ and $q = 1,\dots,d$, given $\psi$, according to a probability measure $H(\cdot\,;\psi)$. Throughout the paper, we focus on this special case, as it is the most relevant in practice. General results are provided in Section \ref{app:distribution_theory_CRV} of the Supplementary Material. Finally, we need to specify a prior on the atoms $\tilde{X}_j$. As previously mentioned, since the common atoms $\tilde{X}_j$ serve only to label different traits in this formulation, it is sufficient for them to be almost surely distinct. For example, one may assume $\tilde{X}_j \iid P_0$ for $j = 1,\dots, N$, where $P_0$ is any non-atomic distribution.

In equations~\eqref{eq:finite_CRV}--\eqref{eq:param_H}, we specified a prior distribution $\mathcal{Q}_d$ for the vector of random measures $\tilde{\bm{\mu}}$ in~\eqref{eq:partially_ex_traits} as a finite collection of random variables. This specification is both transparent and constructive, but it does not make explicit the connection with infinite-dimensional trait models \citep[e.g.][]{Jam(17), shen2024}. In Section~\ref{app:CRV} of the Supplementary Material, we prove that $\tilde{\bm{\mu}}$ is a finite completely random vector, which is a special case of the broader class of completely random vectors \citep{Cat(21)AoS}; see also \citet{Kallenberg_2017} for a comprehensive treatment and Section~\ref{app:CRV} of the Supplementary Material for a concise overview. This result crucially relies on the Poisson specification for $N$. More precisely, $\tilde{\bm{\mu}}$ can be interpreted as an \textsc{fcrv} with parameters $H^{(d)}$, $\lambda$, and $P_0$, that is, a \textsc{crv} whose Lévy intensity takes the form $H^{(d)}(\dd \theta_1\, \ldots\, \dd \theta_d) \cdot \lambda P_0(\dd x)$, and we write $\tilde{\bm{\mu}} \sim \textsc{fcrv}(H^{(d)}, \lambda, P_0)$. This connection to the theory of completely random vectors is helpful, since the results presented in Section~\ref{sec:bayesian_analysis} follow as special cases of this general theory. In Section \ref{app:distribution_theory_CRV} of the Supplementary Material we present results for the whole class of \textsc{crv}s characterized by Lévy intensities of the form $\rho_d(\dd \theta_1\, \ldots\, \dd \theta_d) \cdot \lambda P_0(\dd x)$, where $\rho_d$ is a possibly infinite measure. This general framework encompasses both finite- and infinite-dimensional trait allocation models. The main practical challenge is identifying suitable, non-degenerate choices for $\rho_d$. Our proposal focuses on the finite case $\rho_d = H^{(d)}$ with $H^{(d)}$ being a probability distribution, which allows for the estimation of the total number of traits, while \citet{shen2024} provide an alternative construction in the infinite-dimensional setting.


\subsection{Distribution theory and posterior inference}\label{sec:bayesian_analysis}

Here we provide a full Bayesian analysis of the proposed model in the known-groups case. More specifically, we obtain tractable closed-form expressions of the marginal distribution of a sample, the posterior distribution of $\tilde{\bm{\mu}}$, and the predictive distribution of a future observation, under model~\eqref{eq:partially_ex_traits} with the prior $\tilde{\bm{\mu}} \sim \textsc{fcrv}(H^{(d)}, \lambda, P_0)$ as specified in equations~\eqref{eq:finite_CRV}--\eqref{eq:param_H}. As mentioned earlier, we focus on the case where $H^{(d)}(\cdot\,;\psi) = H(\cdot\,;\psi) \times \cdots \times H(\cdot\,; \psi)$ factorizes, for the sake of simplicity. Readers interested in the more general setting are referred to the Supplementary Material, in particular Section~\ref{app:distribution_theory_CRV}.

We start by describing the marginal distribution of a sample from model \eqref{eq:partially_ex_traits}, and here we offer a simple and constructive proof. With \emph{marginal distribution of the sample}  $\bm{Z}=(Z_{iq}:  i= 1,\ldots,n_q; q=1,\ldots,d)$, we specifically mean determining the probabilities of the event $(\bm{A} = \bm{a}, K_n = k)$, having denoted by
$\bm A = (A_{i \ell q} : i=1,\dots, n_q; \ell = 1,\dots,k; q = 1,\dots,d)$ the observed counts, where the $K_n=k$ observed traits in the sample are randomly ordered. To this end, suppose for now that the total number of traits $N$ is fixed and let $\tilde{\bm{A}} = (\tilde{A}_{ijq} : i = 1, \dots, n_q; j = 1, \dots, N; q = 1, \dots , d)$ denote the latent counts. Focus on any event $(\tilde{\bm{A}} = \tilde{\bm{a}}, K_n = k)$ that contains the observed event $(\bm{A} = \bm{a}, K_n = k)$.
Let $\mathcal{A} = \{j = 1,\dots,N : \sum_{q=1}^d\sum_{i=1}^{n_q} \tilde{a}_{ijq} > 0\}$ be the indexes of observed traits, therefore $j \in \mathcal{A}$ if the $j$th latent trait is observed and $|\mathcal{A}| = k$. Conditionally on the parameters $\bm{\theta} = (\theta_{jq} : j=1,\dots,N; q = 1,\dots,d)$ and $N$, the likelihood function $\mathscr{L}(\bm{\theta}, N; \tilde{\bm{a}})$ for the event $(\tilde{\bm{A}} = \tilde{\bm{a}}, K_n = k)$ is
\begin{equation}
\label{eq:conditional_lik}
\mathscr{L}(\bm{\theta}, N; \tilde{\bm{a}})= \prod_{q=1}^d  \prod_{j=1}^N\prod_{i=1}^{n_q}P(\tilde{a}_{ijq};\theta_{j q})= \left[\prod_{q=1}^d\prod_{j \not\in \mathcal{A}} P(0;\theta_{j q})^{n_q}\right]\left[ \prod_{q=1}^d \prod_{j\in \mathcal{A}}\prod_{i=1}^{n_q}P(\tilde{a}_{ijq};\theta_{j q})\right].
\end{equation}
In the last term, the first product accounts for the unobserved traits, whereas the second relates to the observed ones.  Note that the quantity $\prod_{q=1}^d P(0; \theta_{jq})^{n_q}$ may be interpreted as the probability of not observing trait $\tilde{X}_j$ among all subjects and groups. Then, we can combine the likelihood~\eqref{eq:conditional_lik} with all the (iid) prior distributions $H(\mathrm{d}\theta;\psi)$, take the integral over $\bm{\theta}$ and sum over all possible sets $\mathcal{A}$, leading to the following marginal probability for the event $(\bm{A} = \bm{a}, K_n = k)$: 
\begin{equation}
\label{eq:conditional_ETFP}
\pi_n (\bm{a}; N, \psi) = \binom{N}{k}\left[\prod_{q=1}^d  \int P (0 ; \theta)^{n_q}
        H (\mathrm{d}\theta; \psi) \right]^{N-k} \prod_{\ell=1}^k \prod_{q=1}^d  \int  \prod_{i=1}^{n_q} P (a_{i\ell q} ; \theta)  H(\mathrm{d}\theta; \psi),
\end{equation}
where the binomial coefficient is introduced to account for all the possible ways we can arrange the $k$ observed traits among all the $N$ traits. However, since the total number of traits $N$ is random and follows a Poisson distribution with mean $\lambda$, the final formula requires marginalizing the conditional law in \eqref{eq:conditional_ETFP} with respect to $N$. By exploiting well-known properties of the Poisson distribution, we obtain a simple closed-form expression, summarized in the theorem below.

\begin{theorem}[Marginal distribution, p\textsc{etpf}]
    \label{thm:marginal_fCRV}
    Let $\bm{Z}$ be a sample from the statistical model~\eqref{eq:partially_ex_traits}, assuming the prior $\tilde{\bm{\mu}} \sim \textsc{fcrv}(H^{(d)}, \lambda, P_0)$ with $H^{(d)}(\cdot\,;\psi) = H(\cdot\,;\psi) \times \cdots \times H(\cdot\,; \psi)$. The probability that  $\bm{Z}$ displays $K_n = k$ distinct traits with counts $\bm A = \bm{a}$ is given by
    \begin{equation}
        \label{eq:pEFPF_fCRV}
        \begin{split}
        \pi_n (\bm{a}; \lambda, \psi) = \frac{\lambda^k}{k!} \exp \left\{ -\lambda \left( 1-\prod_{q=1}^d  \int P (0 ; \theta)^{n_q}
        H (\dd\theta; \psi) \right) \right\} \prod_{\ell=1}^k \prod_{q=1}^d  \int  \prod_{i=1}^{n_q} P (a_{i\ell q} ; \theta)  H(\dd \theta; \psi),
        \end{split}
    \end{equation}
    where $n=\sum_{q=1}^d n_q$ and $\bm{n}=(n_1, \ldots , n_d)$ are the sample sizes.
\end{theorem}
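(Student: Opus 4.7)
The plan is to reduce the computation to a conditional problem with $N$ fixed and then average with respect to the Poisson law of $N$; the derivation proceeds in three steps that mirror the construction sketched before the statement.

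First I would fix $N$ and the latent atoms $\tilde X_1,\dots,\tilde X_N$. Since $P_0$ is non-atomic, the atoms are almost surely distinct and act only as column labels. Given $(\bm\theta, N)$, the latent counts $\tilde A_{ijq}$ are mutually independent with $\tilde A_{ijq}\sim P(\cdot;\theta_{jq})$, which gives the factorized conditional likelihood \eqref{eq:conditional_lik}. The observable event $(\bm A=\bm a, K_n=k)$ is the disjoint union, over all subsets $\mathcal A\subset\{1,\dots,N\}$ of size $k$, of those latent configurations whose columns outside $\mathcal A$ are identically zero and whose columns inside $\mathcal A$ match $\bm a$ in the (unordered) arrangement of observed traits. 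Because $H^{(d)}=H^{\otimes d}$ makes $\bm\theta$ exchangeable in $j$, all $\binom{N}{k}$ subsets contribute the same integrated likelihood.

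Second, I would integrate out $\bm\theta$. Independence of the $\theta_{jq}$ across $j$ and $q$ factorizes the integral column-by-column: the $N-k$ unobserved columns each contribute $\prod_q\int P(0;\theta)^{n_q} H(\dd\theta;\psi)$, and each observed column $\ell$ contributes $\prod_q\int\prod_i P(a_{i\ell q};\theta) H(\dd\theta;\psi)$. Multiplying by the combinatorial factor $\binom{N}{k}$ yields the conditional marginal \eqref{eq:conditional_ETFP}. Third, I would average \eqref{eq:conditional_ETFP} against $N\sim\mathrm{Poisson}(\lambda)$. Setting $c := 1-\prod_{q=1}^d\int P(0;\theta)^{n_q} H(\dd\theta;\psi)$, only the factor $\binom{N}{k}(1-c)^{N-k}$ depends on $N$, and the substitution $M:=N-k$ gives
\[
\sum_{N\ge k}\binom{N}{k}(1-c)^{N-k}\,\frac{e^{-\lambda}\lambda^N}{N!}
= \frac{\lambda^k e^{-\lambda}}{k!}\sum_{M\ge 0}\frac{(\lambda(1-c))^M}{M!}
= \frac{\lambda^k}{k!}\,e^{-\lambda c},
\]
which, multiplied by the observed-columns product, is precisely \eqref{eq:pEFPF_fCRV}.

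The main difficulty I anticipate is combinatorial bookkeeping rather than analysis: one must be careful that in the sample $\bm A$ the $k$ observed columns are arranged in some fixed but unspecified order (as stated in the setup), so the $\binom{N}{k}$ factor accounts purely for the choice of which latent columns are observed, with no additional permutation factor. Interchanging the sum over $N$ with the integrals over $\bm\theta$ is legitimate by Tonelli, as all integrands are non-negative, and the finite-$N$ reduction sidesteps any measure-theoretic subtlety associated with infinite-dimensional trait constructions.
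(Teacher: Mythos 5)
Your argument is correct, but it is not the route the paper takes for its formal proof. What you have written is essentially a careful fleshing-out of the constructive derivation that the paper itself sketches immediately before the statement (conditioning on $N$, the factorized likelihood \eqref{eq:conditional_lik}, the sum over observed-index subsets giving \eqref{eq:conditional_ETFP}, and the Poisson thinning identity $\sum_{N\ge k}\binom{N}{k}(1-c)^{N-k}e^{-\lambda}\lambda^N/N!=\lambda^k e^{-\lambda c}/k!$). The paper explicitly declares that its rigorous proof is \emph{not} based on that argument: Theorem~\ref{thm:marginal_fCRV} is obtained in Section~\ref{app:distribution_theory_CRV} as a corollary of Theorem~\ref{thm:marginal_CRV}, which is proved for general (possibly infinite-activity) \textsc{crv} priors via the L\'evy--Khintchine representation of the Laplace functional, an inclusion--exclusion expansion, and a limiting argument over shrinking balls $B_\varepsilon(X_\ell)$ around the observed trait labels, with the $k!$ ordering factor extracted at the end. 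Your approach buys simplicity and self-containedness, and it is entirely adequate for the \textsc{fcrv} case because the finite-$N$ reduction makes everything a finite product of independent integrals; what it does not give is the general statement for arbitrary L\'evy intensities $\rho_d$, where no conditioning on a finite $N$ is available and the point-process machinery is genuinely needed. The one place where you should be slightly more explicit is the ordering convention: the event $(\bm A=\bm a, K_n=k)$ fixes one of the $k!$ orderings of the observed columns uniformly at random, so the $k!$ ways of matching the chosen latent subset $\mathcal A$ to the columns of $\bm a$ cancels against the $1/k!$ from the random ordering, leaving exactly the $\binom{N}{k}$ in \eqref{eq:conditional_ETFP}; you assert the conclusion correctly but the cancellation deserves a sentence.
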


A rigorous and general proof of Theorem~\ref{thm:marginal_fCRV}, not based on the above argument, is provided in the Supplementary Material, Section \ref{app:distribution_theory_CRV}, under general \textsc{crv} priors, covering the case with infinitely many traits. It is interesting to note that the probabilistic quantity in \eqref{eq:pEFPF_fCRV} appears related to the partially Exchangeable Partition Probability Function (p\textsc{eppf}) discussed in \citet{franzolini2025} for partially exchangeable species sampling models; however, here we consider the distribution of a random trait allocation rather than a random partition. By analogy, we refer to equations \eqref{eq:conditional_ETFP} and \eqref{eq:pEFPF_fCRV} as a \emph{partially Exchangeable Trait Probability Function}s (p\textsc{etpf}), with the former being conditional on $N$ and the latter being a Poisson mixture  of \eqref{eq:conditional_ETFP} over $N$.


\begin{example}[Binary traits]\label{ex:binary_marginal}
If the traits are binary and $P(a;\theta) = \theta^a(1-\theta)^{1-a}$, for $a \in \{0, 1\}$, the marginal law of $\bm{Z}$ depends on a sufficient statistic of $\bm a$, corresponding to the feature frequencies in different groups. Define the random frequency of feature $X_\ell$ in group $q$ as $M_{\ell q} := \sum_{i=1}^{n_q} A_{i \ell q}$ with observed values $m_{\ell q}$, stored in matrices $\bm M$ and $\bm{m}$, respectively. Suppose that the prior law $H(\cdot \,;\psi)$ corresponds to a beta distribution with parameters $\psi = (-\alpha, \alpha + \beta)$, with $\alpha < 0$ and $\beta > -\alpha$. Then, the probability that $\bm{Z}$ 
displays $K_n = k$ distinct traits with feature frequencies $\bm M = \bm{m}$ is
   \begin{equation}
        \label{eq:pEFPF_BINOMIAL}
        \begin{split}
        \pi_n (\bm{m}; \lambda, \psi) = \frac{\lambda^k}{k!} \exp \left\{ -\lambda \left[1- \prod_{q=1}^d \frac{(\alpha + \beta)_{n_q}}{(\beta)_{n_q}}  \right]\right\} \prod_{\ell=1}^k \prod_{q=1}^d  \frac{1}{(\beta)_{n_q}}(1-\alpha)_{m_{\ell q}-1}(\alpha + \beta)_{n_q - m_{\ell q}},
        \end{split}
    \end{equation}
where $(a)_n = a(a+1)\cdots(a+n-1)$ is the ascending factorial with $a > 0$. When $d = 1$, this corresponds to the \textsc{efpf} of the Poisson mixture of Beta-Bernoulli  \citep{Ghilotti2025}.
\end{example}

\begin{example}[Poisson counts]\label{ex:count_marginal}
Another remarkable simplification is obtained when the trait counts $\tilde{A}_{ijq}$ take values in $\{0, 1, 2, \dots\}$, and $P(a; \theta) = (a!)^{-1}{\theta}^a e^{-\theta}$ for $a \in \{0, 1,\dots\}$, i.e., a Poisson distribution with mean $\theta > 0$, and $H(\dd\theta;\psi) = \beta^{\alpha}/\Gamma(\alpha)\theta^{\alpha-1}e^{-\beta\theta} \dd \theta$, i.e., a gamma prior with parameters $\psi = (\alpha, \beta)$. In this case, the probability that $\bm{Z}$ 
displays $K_n = k$ distinct traits with counts $\bm A = \bm{a}$ equals
\begin{equation*}
\pi_n (\bm{a}; \lambda, \psi) = \frac{\lambda^k}{k!} \exp \left\{ -\lambda \left[ 1-\prod_{q=1}^d  \left(\frac{\beta}{\beta + n_q}\right)^\alpha \right] \right\} \prod_{\ell=1}^k \prod_{q=1}^d  \frac{ \beta^\alpha(\alpha)_{m_{\ell q}}}{(\beta + n_q)^{\alpha + m_{\ell q}}}\prod_{i=1}^{n_q}\frac{1}{a_{i\ell q}!},
\end{equation*}
where $m_{\ell q} = \sum_{i=1}^{n_q}a_{i\ell q}$.
\end{example}

We now characterize the posterior distribution of $\tilde{\bm{\mu}}$, conditionally to a sample $\bm{Z}$, which is essential to provide Bayesian estimators of the parameters $\theta_{jq}$ and the total number of traits $N$.

\begin{theorem}[Posterior distribution]
    \label{thm:posterior_fCRV}
       Let $\bm{Z}$ be a sample from the statistical model~\eqref{eq:partially_ex_traits}, assuming the prior $\tilde{\bm{\mu}} \sim \textsc{fcrv}(H^{(d)}, \lambda, P_0)$ with $H^{(d)}(\cdot\,;\psi) = H(\cdot\,;\psi) \times \cdots \times H(\cdot\,; \psi)$. If $\bm{Z}$ displays $K_n = k$ distinct traits labeled $X_1, \ldots , X_k$, with associated counts $\bm{a}$, then the posterior distribution of 
       $\tilde{\bm\mu}$ satisfies the distributional equality       \begin{equation}
           \label{eq:posterior_fCRV}
            (\tilde{\mu}_1, \ldots, \tilde{\mu}_d) \mid \bm{Z} \stackrel{d}{=}  (\mu_1^*, \ldots, \mu_d^*) +  (\mu_1', \ldots, \mu_d'),
       \end{equation}
       where $\bm{\mu}^*:= (\mu_1^*, \ldots, \mu_d^*)$ and $\bm{ \mu}':=(\mu_1', \ldots, \mu_d')$ are independent random vectors such that
       \begin{itemize}
           \item[(i)] the components of the vector $\bm{\mu}^*$ are defined as
           $ \mu_q^*(\cdot) = \sum_{\ell=1}^k \theta_{\ell q}^* \delta_{X_\ell}(\cdot)$, for $ q=1, \ldots , d
           $, and the random variables $\theta^*_{\ell q}$ are independent with distribution $H_{\ell q} (\dd\theta;\psi) \propto  \prod_{i = 1}^{n_q} P(a_{i\ell q}; \theta)  H(\dd\theta; \psi)$;
           \item[(ii)] the vector $(\mu_1', \ldots, \mu_d')$ is a $\textsc{fcrv} (H^{\prime(d)}, \lambda',  P_0)$, where $H^{\prime(d)}(\cdot\,;\psi) = H_1^\prime(\cdot\,;\psi) \times \cdots \times H_d^\prime(\cdot\,;\psi)$ and
           \[
           \lambda' = \lambda \prod_{q=1}^d \int P(0; \theta)^{n_q} H(\dd\theta; \psi),
           \quad 
           H'_q(\dd\theta; \psi)  \propto  P(0;  \theta)^{n_q} H(\dd\theta;\psi). 
           \]
           This is equivalent to say that for each $q=1,\ldots,d$
           \begin{equation} \label{eq:mu_pq}
           \mu^\prime_q(\cdot) = \sum_{j=1}^{N^\prime} \theta_{j q}^\prime \delta_{\tilde{X}_j^\prime}(\cdot), \qquad \theta^\prime_{jq} \iid H^\prime_{q}, \qquad \tilde{X}^\prime_j \iid P_0, \qquad j=1,\ldots,N^\prime,          \end{equation}
           where $N^\prime \sim \mathrm{Poisson}(\lambda^\prime)$.
       \end{itemize}
\end{theorem}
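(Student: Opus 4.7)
The plan is to derive the posterior conditionally on the latent total number of traits $N$ and then marginalise $N$ through a Poisson thinning argument, in the same spirit as the proof of Theorem~\ref{thm:marginal_fCRV}. Conditional on $N$, the \textsc{fcrv} $\tilde{\bm{\mu}}$ reduces to a finite sum of $N$ iid atoms $(\theta_{j1},\dots,\theta_{jd},\tilde{X}_j)$, so the prior factorises across trait indices $j=1,\dots,N$. Writing the likelihood as in~\eqref{eq:conditional_lik}, each position $j$ contributes either the data factor $\prod_q\prod_{i=1}^{n_q} P(\tilde{a}_{ijq};\theta_{jq})$ when $j\in\mathcal{A}$, or the absence factor $\prod_q P(0;\theta_{jq})^{n_q}$ when $j\notin\mathcal{A}$, so Bayes' rule applied position by position already produces the two families of weight distributions $H_{\ell q}$ and $H'_q$ appearing in the statement.

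Conditionally on $N$, on the set $\mathcal{A}$ of observed indices, and on the bijection $\ell\mapsto j(\ell)\in\mathcal{A}$, the posterior of the weights is thus a product of $k$ laws $\prod_q H_{\ell q}(\dd\theta;\psi)$ and $N-k$ laws $\prod_q H'_q(\dd\theta;\psi)$. The locations of the observed traits collapse to the fixed labels $X_1,\dots,X_k$, while the locations of the unobserved ones remain iid from $P_0$, since $P_0$ is non-atomic (hence atoms are almost surely distinct) and the likelihood is informative only about \emph{which} positions are observed and with which counts, not about their labels in $\X$. Averaging over the $\binom{N}{k}k!$ admissible choices of $(\mathcal{A},j(\cdot))$ reproduces the combinatorial factor of~\eqref{eq:conditional_ETFP} and separates the $N$-dependent part of the integrated likelihood into $\binom{N}{k} p^{N-k}$, with $p=\prod_{q=1}^d \int P(0;\theta)^{n_q} H(\dd\theta;\psi)$.

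The crux of the argument is then the Poisson thinning step that marginalises $N$. Starting from $\Pp(N=m+k)=e^{-\lambda}\lambda^{m+k}/(m+k)!$ and multiplying by $\binom{m+k}{k}p^m$, the factorial $(m+k)!$ cancels with the binomial, leaving a factor proportional to $(\lambda p)^m/m!$; after renormalisation this identifies $N':=N-k$ as $\mathrm{Poisson}(\lambda p)=\mathrm{Poisson}(\lambda')$ and, crucially, independent of the posterior weights of the observed traits. Combining this independence with the position-wise factorisation established previously yields the decomposition $\tilde{\bm{\mu}}\stackrel{d}{=}\bm{\mu}^*+\bm{\mu}'$ in which $\bm{\mu}^*$ has fixed atoms $X_\ell$ with independent weights $\theta_{\ell q}^*\sim H_{\ell q}$ and $\bm{\mu}'$ is an $\textsc{fcrv}(H^{\prime(d)},\lambda',P_0)$ as in~\eqref{eq:mu_pq}. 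The main obstacle is carrying out this thinning cleanly while jointly marginalising $\mathcal{A}$, the matching $j(\cdot)$, and $N-k$, and verifying that the resulting independence of $\bm{\mu}^*$ and $\bm{\mu}'$ is genuine; once the bookkeeping of the combinatorial factors is in place, the remaining manipulations are routine.
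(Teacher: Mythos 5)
Your proposal is correct, but it follows a genuinely different route from the paper's. The paper proves Theorem~\ref{thm:posterior_fCRV} as a corollary of a more general posterior characterization for arbitrary \textsc{crv} priors (Theorem~\ref{thm:posterior_CRV} in the Supplementary Material): there the authors compute the conditional Laplace functional $\E[\exp\{-\sum_q \int g_q \dd\tilde\mu_q\}\cdot \indicator_\Ecr]/\Pp(\Ecr)$ over shrinking balls $B_\varepsilon(X_\ell)$ around the observed labels, take the limit $\varepsilon\downarrow 0$, and read off the L\'evy intensity $\rho_d'$ of the unseen part and the fixed-atom laws $H_{\ell q}$ from the resulting L\'evy--Khintchine decomposition. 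Your argument instead conditions on $N$, exploits the finite iid atom representation to apply Bayes' rule position by position, and then marginalizes $N$ via the identity $\binom{m+k}{k}\,\lambda^{m+k}/(m+k)! = (\lambda^k/k!)\,(\lambda^m/m!)$, which correctly delivers $N'=N-k\sim\mathrm{Poisson}(\lambda p)$ independently of the observed-trait weights; the product form of the conditional posterior then gives the claimed independence of $\bm\mu^*$ and $\bm\mu'$. What each approach buys: yours is more elementary and constructive, mirrors the heuristic the paper itself offers for the marginal in Section~\ref{sec:bayesian_analysis}, and makes the Poisson-thinning mechanism behind $\lambda'$ transparent; the paper's Laplace-functional route is heavier but covers general (possibly infinite-activity) \textsc{crv}s, for which no finite $N$ exists to condition on. The one point where your sketch leans on the paper's machinery without saying so is the conditioning on the observed labels $X_1,\dots,X_k$, which is a null event under non-atomic $P_0$; the paper makes this rigorous through the $\varepsilon$-ball limiting construction, and your argument would need the same (or an equivalent disintegration) to be fully watertight, though this is a matter of rigor rather than a missing idea.
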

See Section \ref{app:distribution_theory_CRV} of the Supplementary Material for a proof.  Despite the complex notation, Theorem~\ref{thm:posterior_fCRV} has actually a very intuitive and clear interpretation. For each subpopulation $q$, the posterior distribution of $\tilde{\mu}_q$ describes both the observed traits out of the initial sample and the unseen traits, and indeed it consists of the sum of two terms in \eqref{eq:posterior_fCRV}. The first term $\mu^*_q$, referring to the observed traits $X_1,\ldots,X_k$, simply describes the posterior distribution of the associated parameters, indicated with variables $\theta^*_{\ell q}$. Indeed, from point (i) of Theorem \ref{thm:posterior_fCRV}, we see that the law of each $\theta^*_{\ell q}$ is $H_{\ell q} (\dd\theta;\psi) \propto  \prod_{i = 1}^{n_q} P(a_{i\ell q}; \theta)  H(\dd\theta; \psi)$, which corresponds to the plain application of Bayes theorem. The second term represents the novel and interesting component, since $\mu^\prime_q$ takes into account potentially unobserved traits in the sample. From point (ii) of the theorem, $N^\prime$ represents the number of unseen traits \emph{a posteriori}, that is $N \overset{d}{=} N' + k$, which is distributed as a Poisson random variable with updated parameter $\lambda^\prime$. Inference on the number of unseen traits can be carried out by inspecting the value of $\lambda^\prime$.
The expressions in Theorem~\ref{thm:posterior_fCRV} substantially simplify for the cases considered in Examples \ref{ex:binary_marginal} and \ref{ex:count_marginal}.

\begin{example}[Binary traits, cont'd]
If the traits are binary and assuming a beta prior law $H(\cdot;\psi)$ with parameters $(-\alpha, \alpha + \beta)$, as in Example~\ref{ex:binary_marginal}, then the posterior distribution of $\tilde{\bm{\mu}}$ given $\bm{Z}$ decomposes as the sum of $\bm{\mu}^*$ and $\bm{\mu}'$, as in the general case. However, the probabilities $\theta^*_{\ell q}$ of re-observing an old feature $X_\ell$ are distributed as $\theta^*_{\ell q} \ind \text{Beta}(m_{\ell q} - \alpha, \alpha + \beta + n_q - m_{\ell q})$, for $\ell=1,\dots,k$, $q=1,\dots,d$.
Furthermore, the random parameters of each $\mu_q^\prime$, which governs unobserved traits in subpopulation $q$, also have a tractable form. In particular, the distribution of the number of unseen features $N^\prime$ is
\begin{equation*}
N^\prime \sim \text{Poisson}(\lambda^\prime), \qquad \lambda^\prime = \lambda \left(1- \prod_{q=1}^d \frac{(\alpha + \beta)_{n_q}}{(\beta)_{n_q}}\right),
\end{equation*} 
whereas the probabilities $\theta_{jq}^\prime$ of observing these traits are distributed as $\theta_{jq}^\prime \iid \text{Beta}(-\alpha, \alpha + \beta + n_q)$.
\end{example}

\begin{example}[Poisson counts, cont'd]
Suppose the traits are Poisson distributed and assume a gamma prior law $H(\cdot;\psi)$ with parameters $(\alpha, \beta)$, as in Example~\ref{ex:count_marginal}. The parameters $\theta^*_{\ell q}$ of the Poisson random variables associated to an observed $X_\ell$ are distributed as $\theta^*_{\ell q} \ind \text{Gamma}(\alpha + m_{\ell q} , \beta + n_q)$, for $\ell=1,\dots,k$, $q=1,\dots,d$, where $m_{\ell q} = \sum_{i=1}^{n_q}a_{i\ell q}$. Furthermore, the distribution of the number of unseen traits $N^\prime$ is
\begin{equation*}
N^\prime \sim \text{Poisson}(\lambda^\prime), \qquad \lambda^\prime = \lambda \left[ 1-\prod_{q=1}^d  \left(\frac{\beta}{\beta + n_q}\right)^\alpha \right],
\end{equation*} 
whereas the parameters $\theta_{jq}^\prime$ of the unseen traits are distributed as $\theta_{jq}^\prime \iid \text{Gamma}(\alpha, \beta + n_q)$.
\end{example}

\subsection{Hyperprior elicitation}\label{sec:hyper_fitting}

When prior information for eliciting the specific values of the parameters $\psi$ and $\lambda$ is not available, a common Bayesian solution is to consider a hyperprior. In the two examples discussed so far, i.e. the binary traits and the Poisson-distributed traits, there is no closed-form characterization for the posterior of $\psi = (\alpha, \beta)$; hence one needs to recut to
Markov chain Monte Carlo (\textsc{mcmc}) sampling. The availability of a closed-form and tractable expression for the marginal law in~\eqref{thm:marginal_fCRV} is crucial in this regard, as it enables the practical implementation of Metropolis–Hastings steps.

As for $\lambda$, let us assume that $\lambda \sim \text{Gamma}(\alpha_\lambda, \beta_\lambda)$. This choice implies a negative binomial distribution for $N$. The chosen hyperprior yields the following posterior distribution $p(\lambda\mid \bm Z) \propto \pi_n(\bm a; \lambda, \psi)p(\lambda; \alpha_\lambda, \beta_\lambda)$, where $\pi_n(\bm a; \lambda, \psi)$ is defined in Theorem \ref{thm:marginal_fCRV}, and $p(\lambda; \alpha_\lambda, \beta_\lambda)$ denotes the density of the gamma distribution with parameters $(\alpha_\lambda, \beta_\lambda)$. It turns out that this hyperprior is conjugate to our model, regardless of the choice of the likelihood $P(\cdot; \theta)$ and the prior $H(\cdot; \psi)$. Indeed, we obtain
\begin{equation}\label{eq:posterior_lambda}
\lambda \mid \bm{Z} \sim \text{Gamma}\left(\alpha_\lambda + k, \beta_\lambda + 1 - \prod_{q=1}^d \int P(0; \theta)^{n_q} H(\dd\theta; \psi)\right).
\end{equation}
The above integral simplifies in specific models, such as Examples~\ref{ex:binary_marginal} and \ref{ex:count_marginal}. As a result, inference on the number of unseen traits, addressed via $N^\prime$ in Theorem~\ref{thm:posterior_fCRV}, is affected.  In particular, under the assumptions of $N^\prime \mid \lambda^\prime \sim \mathrm{Poisson}(\lambda^\prime)$, where $\lambda^\prime$—defined in point (ii) of Theorem~\ref{thm:posterior_fCRV}—follows a gamma distribution, we get a marginal negative binomial distribution for $N^\prime$.

\section{Latent class models with an unknown number of traits}\label{sec:learning_clustering}

\subsection{A mixture model for trait allocations}
\label{sec:mixture_traits}

The grouped modeling framework introduced in Section~\ref{sec:partially_ex_setting} naturally accommodates group-structured data under the assumption of homogeneity within predefined subpopulations. In practice, however, two issues may arise: (i) no prior information on a grouping structure is available, or (ii) the known partition does not adequately capture the underlying organization in the data. In such cases, one may instead seek to learn the group structure directly from the data.
In the `Ndrangheta application, affiliates can be naturally grouped according to their membership in \emph{locali}. Nonetheless, it is important to assess whether this external information truly reflects the structure underlying attendance patterns at meetings. If the clustering inferred from the data diverges from the \emph{locali}, this discrepancy could highlight previously unrecognized collaborations or reveal novel dynamics among members belonging to different \emph{locali}.

To this end, in this section we consider an unknown-groups setting by embedding the model~\eqref{eq:partially_ex_traits}–\eqref{eq:finite_CRV}–\eqref{eq:param_H} within a Bayesian clustering framework, allowing for cluster detection while accounting for unseen traits. In essence, our goal is to estimate a partition of the subjects $Z_1, \ldots, Z_n$ rather than relying on predefined groups. Let $\bm{C} = \{C_1,\dots,C_d\}$ be a partition of the statistical units $\{1,\dots,n\}$, so that $i,i' \in C_q$ if and only if subjects $i$ and $i'$ belong to the same group. We denote by $n_q = |C_q|$ the size of cluster $q$, with $\sum_{q=1}^d n_q = n$, where $d$ is the number of clusters. If the partition structure $\bm{C}$ is known, we recover the known-groups setting of Section~\ref{sec:partially_ex_setting}. When instead $\bm{C}$ is unknown, we can assign a prior distribution. In this way, both the cluster memberships and the number of groups $d$ are random and can be learned from the data. 

When the partition is unknown, the observed data take the form of an $n \times k$ matrix $\bm{A}$ with entries $A_{i\ell}$ and observed labels $X_\ell$, as in the  exchangeable case. Moreover, let $\tilde{A}_{ij} \in \{0, 1,2,\dots\}$ be the abundance of $j$th latent trait $\tilde{X}_j$ in the $i$th subject, as before. Conditionally on a partition $\bm{C} = \{C_1,\dots,C_d\}$, for a fixed trait $j$ and group $q$ we assume, as in the known-groups case, that
\begin{equation*}
\tilde{A}_{ij} \mid \theta_{jq} \iid P(\cdot; \theta_{jq}),\qquad i \in C_q.
\end{equation*}
The random variables are also conditionally independent across traits for $j \ge 1$ and clusters $q=1,\dots,d$. As in equations \eqref{eq:finite_CRV}–\eqref{eq:param_H}, we let the total number of traits $N$ to be random and the distribution of group-specific parameters $\theta_{jq}$ being equal to
\begin{equation*}
N \sim \text{Poisson}(\lambda), \qquad (\theta_{j1}, \dots,\theta_{jd}) \iid H^{(d)}, \qquad j=1,\dots,N.
\end{equation*}
Note that the number of groups $d$ is itself a random variable. The main difference, compared to Section~\ref{sec:partially_ex_setting}, is that we now specify a prior for the partition $\bm{C}$. Among the various options, we consider the prior induced by a Pitman–Yor process \citep{Per(92), Pit(97)}, which stands for its analytical tractability and whose density is
\begin{equation}\label{eq:partition}
\mathds{P}(\bm{C} = \{C_1,\dots,C_d\}) =\frac{\prod_{q=1}^{d-1}(\gamma + q\sigma)}{(\gamma + 1)_{n-1}}\prod_{q=1}^d(1 - \sigma)_{n_q-1},
\end{equation}
where $\gamma > -\sigma$, $\sigma \in [0, 1)$. Expression \eqref{eq:partition}  is referred to as \emph{exchangeable partition probability function} \citep{Pitman1996}. When $\sigma = 0$, the Pitman–Yor process reduces to the partition law of a Dirichlet process \citep{Ferguson1973}. For a general characterization of exchangeable Gibbs-type random partitions, including the Pitman–Yor as a special case, see \citet{DeBlasi2015}. Recent developments on random partitions with finitely many clusters are discussed in \citet{Lijoi2020, Arg(22), Lijoi2024, Colombi2025}. 

We can equivalently express this model in the following hierarchical form
\begin{equation}\label{eq:mixture_model2}
\begin{aligned}
        Z_i \mid \mu_i & \ind \CP (\mu_i), \qquad \mu_i \mid \tilde{p} \iid \tilde{p}, \qquad i \ge 1,\\
         \tilde{p} &\sim \mathcal{P},
\end{aligned}
\end{equation}
where $\tilde{p}$ is a discrete random probability measure, and  $\mathcal{P}$ denotes its prior distribution. More precisely, we assume that
\begin{equation*}
    \tilde{p}(\cdot) = \sum_{h \geq 1} \xi_h \delta_{\eta_h}(\cdot),
\end{equation*}
where $(\xi_h)_{h \geq 1}$ is a sequence of random probability weights summing to $1$, and $(\eta_h)_{h \geq 1}$ is a sequence of discrete random measures of the form $\eta_h = \sum_{j=1}^N\phi_{jh}\delta_{\tilde{X}_j}$; these two sequences are assumed to be independent. The discreteness of $\tilde{p}$ implies that there is a positive probability that $\mu_i$ and $\mu_{i'}$ are be identical, inducing clustering among the observations $Z_1,\dots,Z_n$. We denote with $\tilde{\mu}_1,\dots,\tilde{\mu}_d$ the $d$ distinct random measures among $\mu_1,\dots,\mu_n$. The random weights $\xi_h$ represent the clustering probabilities, that is $\xi_h$ is the probability that $\mu_i$ is drawn from $h$th component of $\tilde{p}$, that is $\mu_i = \eta_h$. The weights $\xi_h$ follow the stick-breaking weights of the Pitman-Yor process so that $\xi_h = V_h \prod_{r=1}^{h-1}(1- V_r)$, with $V_h \ind {\rm Beta}(1- \sigma, \gamma + h\sigma)$, where we agree that $\xi_1 = V_1$.
Moreover, we characterize the law of the sequence of $(\eta_h)_{h\geq1}$ through its finite-dimensional distributions: we assume that any $d$-dimensional subset of $(\eta_h)_{h \geq 1}$, denoted with $\tilde{\bm{\eta}} = (\tilde{\eta}_1,\ldots,\tilde{\eta}_d)$, is distributed as $\tilde{\bm{\eta}} \sim \textsc{fcrv}(H^{(d)}, \lambda, P_0)$. This means, in particular, that the distinct measures are distributed as $\tilde{\bm{\mu}} = (\tilde{\mu}_1,\dots,\tilde{\mu}_d) \sim \textsc{fcrv}(H^{(d)}, \lambda, P_0)$. The representation in \eqref{eq:mixture_model2} helps clarify a crucial aspect of the model, as we now discuss. By employing an exchangeable prior for the latent partition, we are implicitly returning to an exchangeable framework
\begin{equation}\label{eq:mixture_model}
\begin{aligned}
        Z_i \mid \tilde{p} &\iid  \int \textsc{cp} (\mu)\tilde{p} (\dd \mu) \\ 
         \tilde{p} &\sim \mathcal{P},
\end{aligned}
\end{equation}
where the subjects $Z_i$ are conditionally iid draws from a \emph{process of counts mixture}. However, this model is far more flexible than the one considered in Section~\ref{sec:methodology}, where $Z_i \mid \tilde{\mu} \iid \CP(\tilde{\mu})$. Indeed, under model \eqref{eq:mixture_model} we have that $Z_i(\cdot) = \sum_{j=1}^N \tilde{A}_{ij}\delta_{\tilde{X}_j}(\cdot)$ and
\begin{equation*}
\mathds{P}(\tilde{A}_{i1} = \tilde{a}_1,\dots,\tilde{A}_{iN} = \tilde{a}_N\mid \tilde{p}) = \sum_{h \ge 1} \xi_h \prod_{j=1}^NP(\tilde{a}_j; \phi_{jh}), \qquad i \ge 1,
\end{equation*}
where the random vectors $(\tilde{A}_{i1},\dots, \tilde{A}_{iN})$ for $i \ge 1$ are conditionally iid from the above law. The prior for $\tilde{p}$ specifies that $N \sim \text{Poisson}(\lambda)$, $\tilde{X}_j \iid P_0$, the weights $(\xi_h)_{h \ge 1}$ follow the Pitman--Yor stick-breaking distribution, and that the parameters $\phi_{jh}$, under the factorized structure $H^{(d)} = H(\cdot;\psi) \times \cdots \times H(\cdot;\psi)$, are iid draws $\phi_{jh} \iid H(\cdot;\psi)$ for $j = 1,\dots,N$ and $h \ge 1$. 

\begin{remark}
As mentioned in the introduction, the Bayesian model defined in equation~\eqref{eq:mixture_model} is related to a classical statistical approach for modeling multivariate discrete data, known as latent class analysis \citep{Lazarsfeld1968, Goodman1974, Hagenaars2002}. The more recent work of \citet{Dunson2009} provides a Bayesian nonparametric extension of this framework, making it even more closely related to ours. However, all these approaches assume that $N$ is a known constant, meaning that all traits are known in advance, including those equal to zero for all subjects. While this may be a reasonable assumption in many contexts, such as when the number of variables is fixed, it may be untenable in others. Ignoring unseen traits can skew the analysis and directly affect clustering, as we now show.
\end{remark}

\subsection{Effect of accounting for potentially unseen traits}\label{sec:comparison_no_unseen}

To assess the impact of accounting for potentially unseen traits in the sample, we compare the predictive allocation probabilities for a generic subject $i$ under the model defined in~\eqref{eq:mixture_model2}, with the Pitman-Yor prior for $\xi_h$, and a na\"ive model almost identical to \eqref{eq:mixture_model2}, in which it is assumed that there are no unseen traits, i.e. we suppose $N = k$. Specifically, let 
\begin{equation*}
p_{iq} = \mathds{P}(\mu_i = \tilde{\mu}_q \mid \bm{Z}, \bm{\mu}_{-i}), \qquad p_{i,\text{new}} = \mathds{P}(\mu_i = \text{``new''} \mid \bm{Z}, \bm{\mu}_{-i}), \qquad i=1,\dots,n,
\end{equation*}
where $\bm{\mu}_{-i} = (\mu_1,\dots\mu_{i-1},\mu_{i+1},\dots,\mu_n)$ with $d_{-i}$ distinct values $\tilde{\mu}_1,\dots,\tilde{\mu}_{d_{-i}}$, for $q = 1,\dots,d_{-i}$. Thus each $p_{iq}$ denotes the predictive probability that subject $i$ belongs to the $q$th cluster formed by the remaining subjects, and $p_{i,\text{new}}$ is the predictive probability for subject $i$ to form their own cluster. We similarly define $p_{iq}^*$ and $p_{i,\text{new}}^*$ for the na\"ive model in which it is wrongly assumed that all traits are observed with $N = k$. 

\begin{proposition}\label{prop:cluster_comparison} The predictive allocation probabilities for a generic subject $i$ under model~\eqref{eq:mixture_model} and a na\"ive model in which it is assumed that there are no unseen traits, with $N = k$, satisfy the inequality:
\begin{equation*}
 \frac{p_{i,\textup{new}}}{p_{iq}} <  \frac{p_{i,\textup{new}}^*}{p_{iq}^*}, \qquad q=1,\dots,d_{-i}, \quad i=1,\dots,n.
\end{equation*}
\end{proposition}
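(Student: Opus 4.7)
My plan is to reduce the comparison of the two predictive ratios to the exponential correction in Theorem~\ref{thm:marginal_fCRV} and then sign it using a positive-covariance inequality. Most combinatorial factors cancel cleanly between numerator and denominator; the only genuine obstacle is to track which quantities depend on the cluster assignment of subject $i$, but this bookkeeping collapses in a transparent way.

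First, I would factor each predictive probability into a Pitman--Yor contribution and a marginal-likelihood contribution via Bayes' rule:
\[
\frac{p_{i,\text{new}}}{p_{iq}} = \frac{\gamma + d_{-i}\sigma}{n_q^{-i} - \sigma}\,\frac{\mathds{P}(\bm{Z}\mid \bm{C}^{(\text{new})})}{\mathds{P}(\bm{Z}\mid \bm{C}^{(q)})},
\]
and analogously for $p_{i,\text{new}}^{*}/p_{iq}^{*}$, where $\bm{C}^{(q)}$ denotes the partition obtained from $\bm{C}^{-i}$ by adding subject $i$ to cluster $q$, and $\bm{C}^{(\text{new})}$ by placing $i$ in its own singleton. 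The Pitman--Yor prefactor is identical across the two models, so it cancels in the ratio of ratios, and the whole question reduces to comparing the two conditional-likelihood ratios.

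Second, I would apply Theorem~\ref{thm:marginal_fCRV} to each conditional likelihood. Two observations make the comparison collapse: (i) the number of distinct observed traits $k$ does not depend on which cluster subject $i$ joins (the data $\bm{Z}$ is the same either way), so the combinatorial factor $\lambda^k/k!$ cancels; and (ii) the product $\prod_{\ell,q'}\int\prod_{i'\in C_{q'}}P(a_{i'\ell};\theta)\,H(\dd\theta;\psi)$ coincides exactly with the na\"ive conditional likelihood $\mathds{P}^{*}(\bm{Z}\mid\bm{C})$, because the na\"ive model with $N=k$ omits only the Poisson mixing over unseen traits. Consequently
\[
\frac{p_{i,\text{new}}/p_{iq}}{p_{i,\text{new}}^{*}/p_{iq}^{*}} = \exp\!\Big\{-\lambda\,\big[C_n(\bm{C}^{(\text{new})}) - C_n(\bm{C}^{(q)})\big]\Big\}, \qquad C_n(\bm{C}) := 1 - \prod_{q'=1}^{d}\int P(0;\theta)^{n_{q'}}\,H(\dd\theta;\psi).
\]

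Third, I would sign the exponent. Factoring out the integrals of the clusters $q'\neq q$, which are unaffected by the assignment of $i$, yields
\[
C_n(\bm{C}^{(\text{new})}) - C_n(\bm{C}^{(q)}) = \Bigg[\prod_{q'\neq q}\int P(0;\theta)^{n_{q'}^{-i}}\,H(\dd\theta;\psi)\Bigg]\,\mathrm{Cov}_{H}\!\Big(P(0;\theta)^{n_q^{-i}},\, P(0;\theta)\Big),
\]
where the covariance is taken under $\theta\sim H(\cdot;\psi)$. Both arguments are non-decreasing functions of the random variable $P(0;\theta)$, so Chebyshev's sum inequality (equivalently, the one-dimensional FKG inequality) ensures a strictly positive covariance whenever $P(0;\theta)$ is not $H$-almost-surely constant---a mild condition trivially satisfied in the Bernoulli and Poisson examples. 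Combined with $n_q^{-i}\geq 1$ and the positivity of the prefactor, this gives $C_n(\bm{C}^{(\text{new})}) > C_n(\bm{C}^{(q)})$, so the exponential factor is strictly less than one, which completes the proof.
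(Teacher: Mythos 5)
Your proposal is correct and follows essentially the same route as the paper: cancel the Pitman--Yor prefactor and the observed-trait products, reduce to the ratio of the exponential terms, factor out the clusters unaffected by subject $i$, and sign the remaining bracket, which is exactly $\E\bigl[P(0;\theta)^{n_{q,-i}+1}\bigr]-\E\bigl[P(0;\theta)^{n_{q,-i}}\bigr]\E\bigl[P(0;\theta)\bigr]$. The only difference is cosmetic --- you recognize this bracket as a covariance of two monotone functions of $P(0;\theta)$ and invoke Chebyshev's sum inequality, whereas the paper proves the same moment inequality via H\"older and Jensen (its Lemma~\ref{lemma:inequality}); your version has the minor virtue of making explicit the non-degeneracy condition needed for the strict inequality, which the paper glosses over.
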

Refer to Section \ref{proof:cluster_comparison} of the Supplementary Material for the proof.
The inequality in Proposition \ref{prop:cluster_comparison} shows that, once potentially unseen traits are taken into account, the probability of allocating new clusters decreases compared to the naïve model. The practical implications are substantial. In the naïve model, which ignores unseen traits, the likelihood of assigning new subjects to their own clusters is systematically overestimated, leading to a fragmented view of the data and less interpretable clusters. By contrast, our proposed model incorporates the role of unseen traits, which tends to stabilize cluster allocation probabilities. This adjustment reflects a more cautious and principled approach to inference, acknowledging that the possibility of unseen traits carries essential information. 
We emphasize that Proposition~\ref{prop:cluster_comparison} is not limited to the Pitman–Yor specification; rather, it applies to any prior on the set of weights $(\xi_h)_{h \ge 1}$ of the discrete random probability measure $\tilde{p}$ in model \eqref{eq:mixture_model}. In Section \ref{sec:sim_k_fixed} of the Supplementary Material, we numerically illustrate a simple scenario where the discrepancy highlighted in Proposition \ref{prop:cluster_comparison} has a substantial impact on the inference process, demonstrating the importance of accounting for unseen traits in practical applications.

\subsection{Gibbs sampling and update of the clustering structure}\label{sec:clustering_estimation}

Posterior inference for mixture model \eqref{eq:mixture_model} can be efficiently tackled via a simple marginal Markov Chain Monte Carlo (\textsc{mcmc}) algorithm. The procedure is greatly facilitated by the availability of a closed-form expression for the p\textsc{etpf}, which is given in Theorem~\ref{thm:marginal_fCRV}.

The primary goal of the the Gibbs sampling is the approximation of posterior distribution of the latent clustering structure $\bm{C}$. The probability distribution on $(\xi_h)_{h \geq 1}$ determines the \emph{a priori} predictive allocation probabilities of the sampling model, which describe the stochastic generation of $\mu_1,\dots,\mu_n$ as well as their clustering structure $\bm{C}$. Specifically, given the previously observed measures $\mu_1,\dots,\mu_n$, with $d$ distinct values $\tilde{\mu}_1,\dots,\tilde{\mu}_d$ and frequencies $n_1,\dots,n_d$, a Pitman--Yor process prescribes that, a priori, we have
\begin{equation*}
\mathds{P}(\mu_{n+1} = \tilde{\mu}_q \mid \mu_1,\dots,\mu_n) = \frac{n_q - \sigma}{n + \gamma}, \qquad \mathds{P}(\mu_{n+1} = \text{``new''} \mid \mu_1,\dots,\mu_n) = \frac{\gamma + d\sigma}{n + \gamma}.
\end{equation*}
This well-known sequential mechanism induces a partition of the statistical units. In practice, however, we do not explicitly compute the random measures $\mu_1,\dots,\mu_n$. If the focus is on the clustering structure, we only need to keep track of the labels associated with $\mu_1,\dots,\mu_n$ and the implied partition. The posterior distribution of the clustering $\bm{C}$ under the mixture model \eqref{eq:mixture_model} is obtained through a Gibbs sampling procedure, by iteratively updating the labels of $\mu_i \mid \bm Z, \bm{\mu}_{-i}$ for $i=1,\dots,n$ according to the previously defined full conditional allocation probabilities $p_{iq} = \mathds{P}(\mu_i = \tilde{\mu}_q \mid \bm{Z}, \bm{\mu}_{-i})$ and $p_{i,\text{new}} = \mathds{P}(\mu_i = \text{``new''} \mid \bm{Z}, \bm{\mu}_{-i})$, where $\bm{\mu}_{-i} = (\mu_1,\dots\mu_{i-1},\mu_{i+1},\dots,\mu_n)$ comprises $d_{-i}$ distinct values $\tilde{\mu}_1,\dots,\tilde{\mu}_{d_{-i}}$ with frequencies $n_{q,-i}$ for $q=1,\dots,d_{-i}$ and forms a partition $\bm{C}_{-i}$. In addition, the full conditional clustering probabilities $p_{iq}$ and $p_{i, \textup{new}}$ have a simple expression, obtained by combining the a priori Pitman--Yor sequential scheme with the marginal likelihood:
\begin{equation}\label{eq:update_clustering}
p_{iq} \propto \frac{n_{q,-i} - \sigma}{n + \gamma - 1}\,\pi_n(\bm{a}_{iq}; \lambda, \psi), \qquad p_{i,\textup{new}} \propto \frac{\gamma + d_{-i}\sigma}{n+\gamma-1}\,\pi_n(\bm{a}_{i,\textup{new}}; \lambda, \psi),
\end{equation}
where $\bm{a}_{iq}$ denotes the observed trait values organized according to the partition $\bm{C}_{-i}$, with subject $i$ allocated to cluster $q$. Similarly, $\bm{a}_{i,\textup{new}}$ corresponds to the trait values organized according to the partition $\bm{C}_{-i}$ when subject $i$ is assigned to a new cluster. This updating scheme is particularly straightforward thanks to the closed-form expression of $\pi_n(\bm a; \lambda, \psi)$ provided in Theorem~\ref{thm:marginal_fCRV}. In a model with binary traits and beta priors described in Example~\ref{ex:binary_marginal} and a model with Poisson counts and gamma priors described in Example~\ref{ex:count_marginal}, the evaluation of $\pi_n(\bm a; \lambda, \psi)$ is highly efficient, resulting in a fast marginal sampler. 

When hyperpriors are placed on the parameters $\lambda$ and $\psi$, the \textsc{mcmc} scheme described above is coupled with updates of $\lambda$ and $\psi$ from their respective full conditional distributions, i.e. equation~\eqref{eq:posterior_lambda} in the case of $\lambda$. Exact sampling from the full conditional of $\psi$ can also be replaced by a Metropolis-within-Gibbs step for $\psi$. 

\section{Overview of simulation studies} \label{sec:overview_simulation_study}

The simulation studies serve two main purposes. 
The first, detailed in Section~\ref{sec:sim_k_fixed} of the Supplementary Material, provides empirical evidence for the discrepancy highlighted in Proposition~\ref{prop:cluster_comparison}, which can lead to substantial inferential differences. 
To illustrate, we consider a simulated binary-outcome example  comparing our proposed \emph{unknown-groups} model \eqref{eq:mixture_model} with a latent class model fixing $N = k$, thereby excluding the possibility of unseen traits as in \citet{Dunson2009}. In line with Proposition~\ref{prop:cluster_comparison}, our model yields fewer clusters than this na\"ive specification, with a substantial difference in our comparison.
The second goal, discussed in Section~\ref{sec:simulations_network} of the Supplementary Material, is to assess the performance of the \emph{unknown-groups} model introduced in Section~\ref{sec:learning_clustering}. Motivated by criminal network analysis, we simulate data reflecting realistic network structures. In such contexts, the focus is on the weighted adjacency matrix, which records the number of meetings co-attended by pairs of affiliates \citep{esbm_rigon, Lu2025}. This matrix summarizes the raw data, originally collected as multivariate binary observations, where each affiliate is associated with the list of meetings they attended. Our framework induces a probabilistic structure on it.
To evaluate model flexibility, we consider two simulated scenarios, each producing distinct adjacency matrix properties resembling those in \citet{Lu2025}. Beyond inference on the adjacency matrix, a central goal is estimating the number of meetings that went undetected by investigators. We benchmark our model-based estimates against the \emph{negative binomial mixture of \textsc{bb}s} from \citet{Ghilotti2025}, a natural and simple competitor that does not take the group structure into account. Across both scenarios, the proposed model accurately recovers the underlying structure: the posterior mean of the adjacency matrix closely matches the simulated one, and the estimates of undetected meetings are reliable. In contrast, the competitor fails to capture the number of unseen features in the first scenario and the structure of the connectivity patterns in the second scenario. Its poor fit stems from its enforced homogeneity across groups, which yields a posterior expectation of the adjacency matrix assigning identical values to all entries.

\section{Analysis of the Infinito network}
\label{sec:application}

In this section, we analyze the \emph{Infinito} `Ndrangheta network \citep{esbm_rigon, Lu2025}. These data stem from \emph{Operazione Infinito} \citep{Calderoni17}, a six-year-long, large-scale law enforcement operation aimed at monitoring and dismantling the core structure of \emph{La Lombardia}, the highly pervasive branch of the `Ndrangheta Mafia operating in the Milan area.  The raw data collected during the investigation are publicly available at \href{https://sites.google.com/site/ucinetsoftware/datasets/covert-networks}{https://sites.google.com/site/ucinetsoftware/datasets/covert-networks} in the form of multivariate binary outcome observations, making them well-suited for analysis using the binary traits specification of our proposed model.

The attendance of the $n$ affiliates at the $k$ distinct meetings can be represented by the $n \times k$ binary matrix $\bm{A}$, where each entry $A_{i\ell} \in \{0, 1\}$ indicates whether individual $i$ attended meeting $\ell$. A graphical representation of the observed  weighted adjacency matrix $\bm{W} = \bm{A}\bm{A}^T$, whose generic entry $W_{ii'}$ represents the number of meetings jointly attended by affiliates $i$ and $i'$, is shown in Figure~\ref{fig:ndrangheta_obs_mixture}. \cite{esbm_rigon, Lu2025} directly modelled the matrix $\bm{W}$ without considering the information of the binary traits $A_{i\ell}$. Instead, we model $A_{i\ell}$, which will induce a model for $\bm{W}$ as a by-product, therefore making use of all the available information. We focus on the $d = 5$ most populated \emph{locali}, with sizes $n_1 = 16,n_2 = 14, n_3 = 23, n_4 = 15, n_5 = 16$, of the \emph{Infinito} dataset, for a total of $n = 84$ affiliates and $k = 44$ recorded meetings among them. The affiliates' locale membership provides a natural partition into five groups, making it reasonable to begin the analysis by leveraging this a priori known clustering. This motivates the known-groups framework described in Section \ref{sec:partially_ex_setting}. In particular, we consider model \eqref{eq:partially_ex_traits} for binary traits, where $H(\cdot;\psi)$ is the beta distribution with parameters $(-\alpha, \alpha + \beta)$, with $\psi = (\alpha, \beta)$. Prior distributions for the model parameters $\lambda$, $-\alpha$, and $\alpha + \beta$ are specified as follows. The parameter $\lambda$, governing the Poisson-distributed total number of meetings, is assigned a gamma prior with parameters $(\alpha_\lambda, \beta_\lambda)$, as detailed in Section \ref{sec:hyper_fitting}. The hyperparameters $(\alpha_\lambda, \beta_\lambda)$ are chosen so that the prior expected value of $N$ equals $\hat{N} = 1.5k$, where $k = 44$ denotes the observed number of meetings in the dataset, and the prior variance of $N$ is set to $10\hat{N}$. For the parameters $(a, b) = (-\alpha, \alpha + \beta)$ of the beta distribution $H$, independent gamma priors are assumed with hyperparameters $(\alpha_a, \beta_a)$ and $(\alpha_b, \beta_b)$, respectively. Specifically, $(\alpha_a, \beta_a)$ are set to induce a prior mean of $0.2$ for $a$ with a large variance, while $(\alpha_b, \beta_b)$ are chosen so that the prior mean of $b$ is $10$, also with high variance. Posterior inference relies on 10,000 iterations of a vanilla \textsc{mcmc} algorithm, with the first 1,000 samples discarded as burn-in and a thinning interval of 2.

\begin{figure}[tbp]
    \centering

    \begin{subfigure}[t]{0.48\linewidth}
        \centering
        \includegraphics[width=\linewidth]{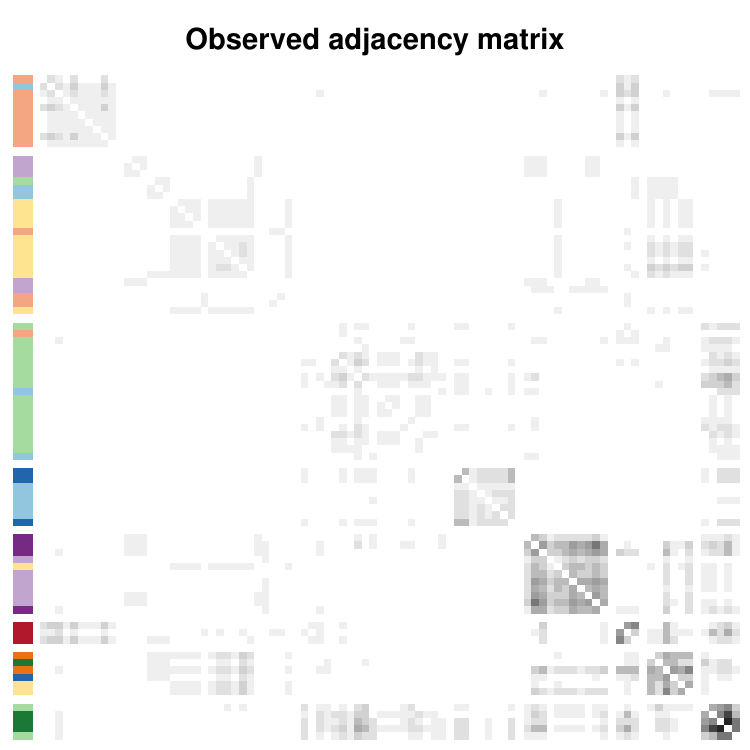}
        \caption{Observed adjacency matrix, ordered and partitioned according to the \textsc{vi} clustering.  
        Side colors correspond to the \emph{locali}, with darker/lighter shades denoting bosses/affiliates.}
        \label{fig:ndrangheta_obs_mixture}
    \end{subfigure}
    \hfill
    \begin{subfigure}[t]{0.48\linewidth}
        \centering
        \includegraphics[width=\linewidth]{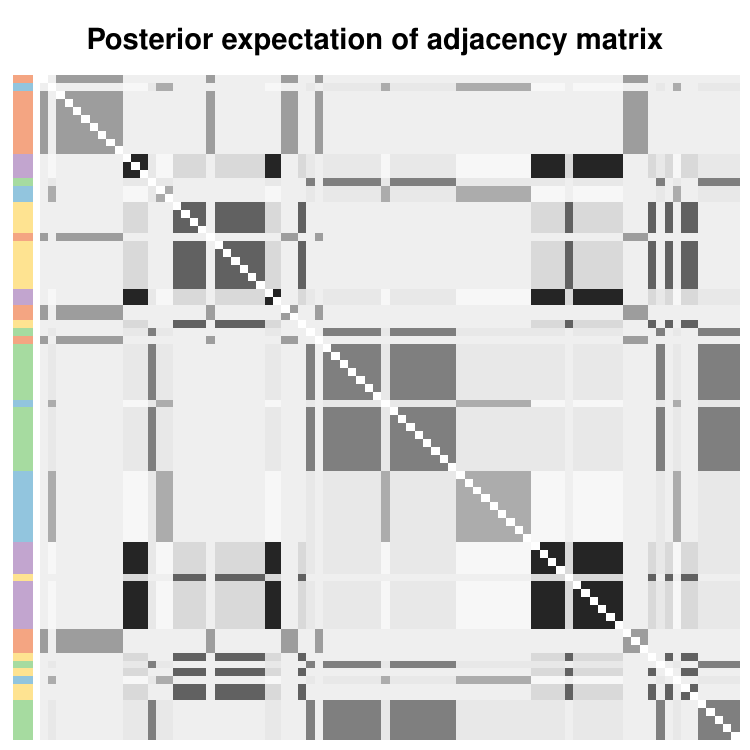}
        \caption{Posterior expectation under a known-groups specification with partition induced by \emph{locali} membership. Ordering matches panels (a).}
        \label{fig:ndrangheta_partial}
    \end{subfigure}

    \vspace{1em}

    \begin{subfigure}[t]{0.48\linewidth}
        \centering
        \includegraphics[width=\linewidth]{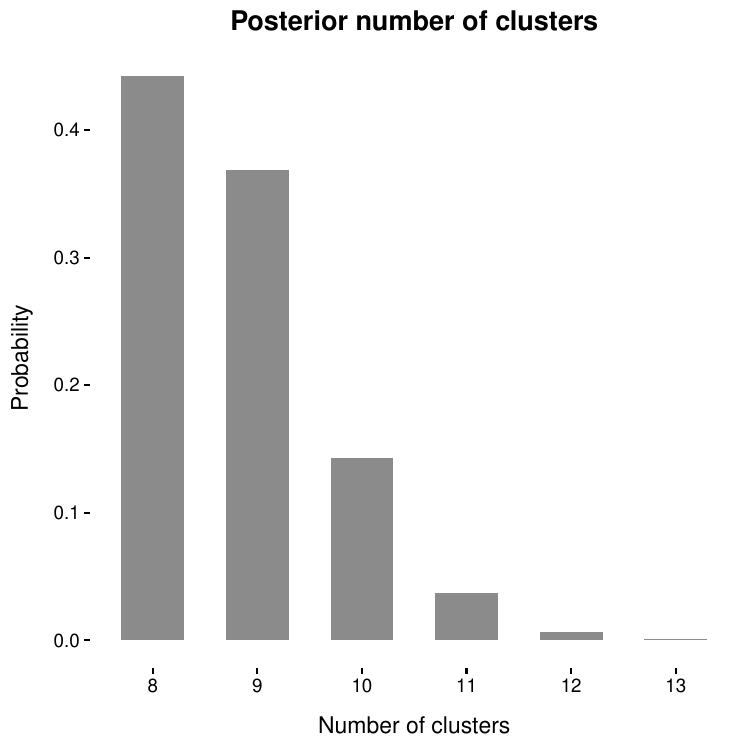}
        \caption{Posterior distribution of the number of clusters among the affiliates.}
        \label{fig:post_number_clusters_ndrangheta}
    \end{subfigure}
    \hfill
    \begin{subfigure}[t]{0.48\linewidth}
        \centering
        \includegraphics[width=\linewidth]{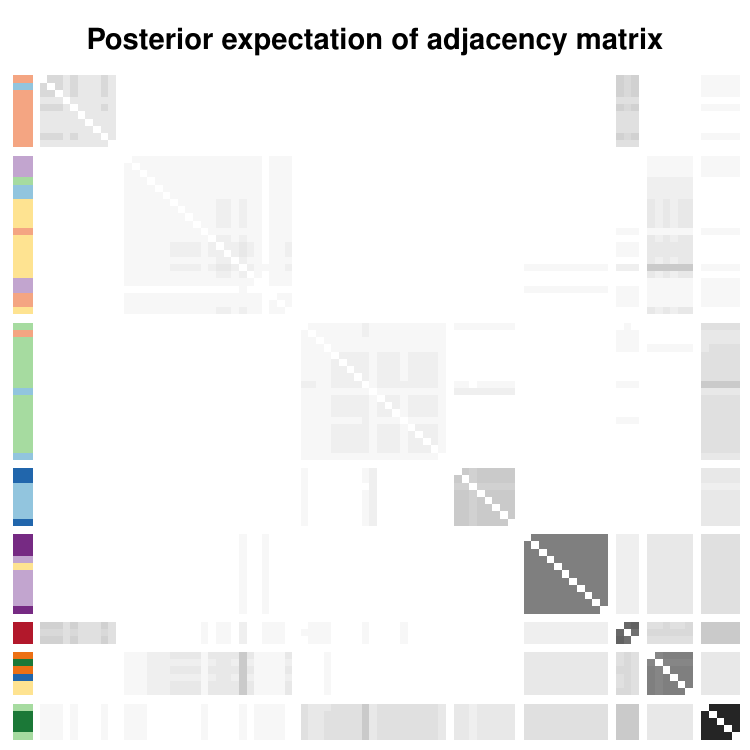}
        \caption{Posterior expectation under an unknown-groups model with groups learned from the data. Ordering matches panel~(a).}
        \label{fig:ndrangheta_mixture}
    \end{subfigure}

    \caption{Comparison of the \emph{Infinito} network with the estimated models.  
    Top row: observed adjacency matrix and the corresponding posterior expectation under the known-groups model.  
    Bottom row: number of clusters among the affiliates and posterior expectation of the adjacency matrix under the unknown-groups model. In the adjacency matrices, the color of each entry ranges from white to black as the number of co-attended meetings increases (observed in panel (a), expected in panels (b) and (d)).}
    \label{fig:ndrangheta_matrices}
\end{figure}

Figure~\ref{fig:ndrangheta_partial} describes the estimated connectivity patterns among different \emph{locali} through the posterior expected weighted adjacency matrix, which we obtained exploiting the closed-form results of Theorem~\ref{thm:posterior_fCRV}. As evident by comparing Figure~\ref{fig:ndrangheta_obs_mixture} with Figure~\ref{fig:ndrangheta_partial}, this estimated adjacency matrix captures the most significant patterns but partially disagree with the observed one, indicating that the partition induced by \emph{locali} membership is insufficient to describe the underlying connectivity patterns within the criminal organization. This points to more complex connections, extending beyond the \emph{locali} structure. Identifying and analyzing these hidden structures could provide insights into the organizational strategies of the criminal network.

All these considerations motivate the need to move to the framework described in Section \ref{sec:learning_clustering}, where the clustering structure among affiliates is learned directly from the data. It is important to note that, in general, the unknown-group case should not be regarded as a ``better'' model than the known-group case. Indeed, if the information provided by the observed groups were consistent with the data, the known-group model would capture the underlying probabilistic structure much more efficiently than the mixture model. In this specific context, however, the \emph{locali} represent an overly coarse description of the criminal organization. This is, in fact, an interesting insight---one that can be appreciated by fitting both models. More precisely, we consider the binary traits specification of model \eqref{eq:mixture_model}, referred to as the unknown-groups model. The prior structure is the same as that used for the known-groups model discussed previously, with the additional specification that the clustering is modeled using a Dirichlet process with concentration parameter equal to $\gamma = 1$. Posterior inference relies on 10,000 iterations of the \textsc{mcmc} algorithm described in Section~\ref{sec:clustering_estimation}, with the first 1,000 samples discarded as burn-in and a thinning interval of 2.

Figure~\ref{fig:post_number_clusters_ndrangheta} displays the posterior distribution of the number of clusters among affiliates. To obtain a point estimate of the clustering structure, we follow \cite{Wad(18)} and select the configuration that minimizes the posterior expectation of the Variation of Information (\textsc{vi}) metric \citep{Mei(03)}. The resulting clustering consists of eight unbalanced clusters, represented by the blocks in the colored bar of Figure \ref{fig:ndrangheta_obs_mixture}. The row colors indicate the affiliates' membership to the five \emph{locali}, with darker shades denoting bosses and lighter tones representing lower-ranked affiliates. As evident from Figure \ref{fig:ndrangheta_matrices}, the clustering structure learned from the data deviates from the one induced by \emph{locali} membership, and the posterior expected weighted adjacency matrix in Figure \ref{fig:ndrangheta_mixture} nicely aligns with the observed one. These considerations qualitatively support estimating the clustering structure from the data, rather than relying on the partition induced by the \emph{locali} membership. To strengthen this argument, we provide a complementary quantitative assessment. Following standard model selection practices, we compare the \textsc{waic} (Watanabe-Akaike Information Criteria) \citep{Watanabe13} between the two models. Specifically, the unknown-groups model yields a \textsc{waic} of $-7236$, whereas the known-groups model results in a \textsc{waic} of $-6573$, strongly indicating a preference for data-driven clustering estimation.

\begin{figure}[tbp]
    \centering
    \begin{subfigure}[t]{0.3\linewidth}
        \centering
        \includegraphics[width=\linewidth]{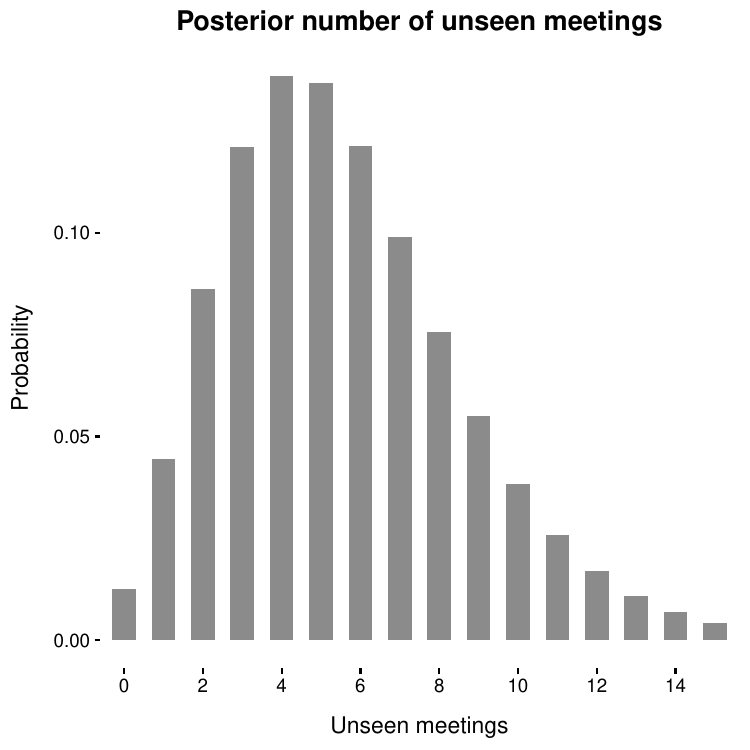}
        \caption{Unseen meetings: partition estimated from the data.}
        \label{fig:unseen_data}
    \end{subfigure}
    \hfill
    \begin{subfigure}[t]{0.3\linewidth}
        \centering
        \includegraphics[width=\linewidth]{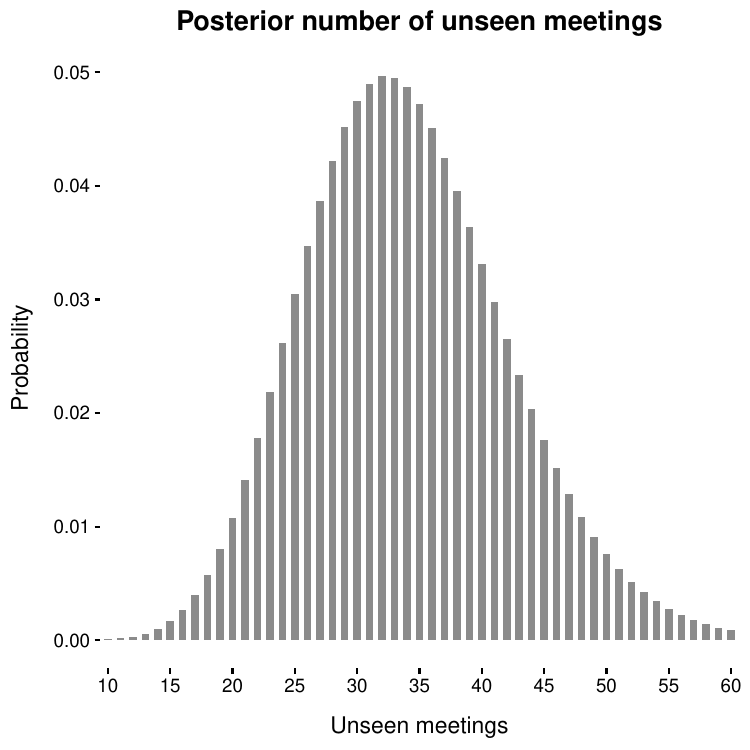}
        \caption{Unseen meetings: partition induced by \emph{locali} membership.}
        \label{fig:unseen_locali}
    \end{subfigure}%
    \hfill
    \begin{subfigure}[t]{0.3\linewidth}
    \centering
        \includegraphics[width=\linewidth]{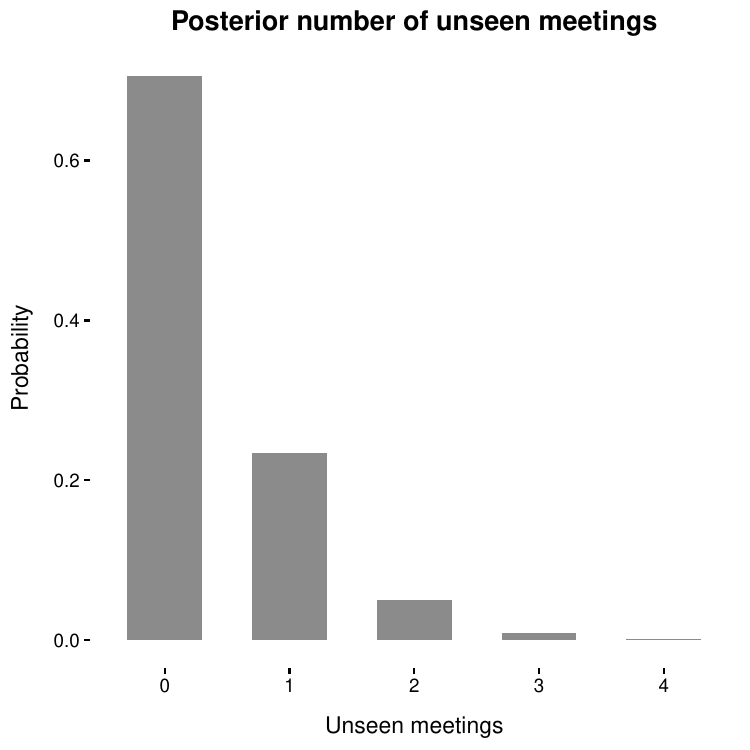}
        \caption{Unseen meetings: negative binomial mixture of \textsc{bb}s.}
        \label{fig:post_unseen_ndrangheta_negbin_bb}
    \end{subfigure}

    \caption{Posterior distributions of the number of unseen meetings in the following cases: 
    (a) unknown-groups model with clustering learned from the data;
    (b) known-groups model with partition induced by \emph{locali} membership;  
    (c) negative binomial mixture of \textsc{bb}s \citep{Ghilotti2025}.}
    \label{fig:post_summary_ndrangheta}
\end{figure}

Consistent with the findings of \cite{esbm_rigon,Lu2025}, the inferred group structures reveal intricate dynamics within this criminal organization. Notably, there is a clear tendency for clusters to form within individual \emph{locali}, though the mechanisms governing group formation vary between affiliates and bosses. This pattern suggests a strategic design aimed at ensuring resilience across different levels of the organization. Furthermore, it highlights that while lower-ranked affiliates can form peripheral groups with relative ease, the formation of core groups, comprising all the bosses, is more constrained, as reflected in their progressively smaller sizes.
A deeper examination of the cluster structure uncovers additional meaningful insights, further supporting observations made in \cite{Lu2025}. For instance, juridical documents indicate that a specific affiliate from the blue locale was attempting to establish a new locale. This is reflected in our clustering results, where this individual is grouped primarily with lower-ranked members from the red locale (first cluster from the top), suggesting a deliberate strategy of recruiting new affiliates from that locale.
Another striking pattern, even more evident in our analysis than in \cite{Lu2025}, involves a cluster composed of bosses from multiple \emph{locali}. Specifically, one green and one blue boss cluster together with two yellow bosses and two lower-ranked yellow affiliates. This aligns with reports that the green and blue bosses supported an unsuccessful attempt to increase the independence of the `Ndrangheta group in Lombardy from the ruling families in Calabria. Consequently, these individuals sought to distance themselves from their original \emph{locali} and strengthen ties elsewhere. Our inferred clustering suggests that this shift primarily moved in the direction of the yellow locale, a hypothesis supported by the cluster composition. In contrast, \cite{Lu2025} infer this movement indirectly through node positioning in a distance-based network representation, rather than as a direct outcome of their methodology.
A further distinction between our results and those in \cite{Lu2025} concerns the clustering of most yellow and purple lower-ranked affiliates. While our model merges them into a single cluster, their approach identifies them as distinct groups. However, their distance-based network representation highlights their close similarity.
Finally, it is noteworthy that core groups of bosses frequently include a few lower-ranked affiliates, suggesting that these individuals may hold more central roles than indicated in the juridical documents. As observed in \cite{Lu2025}, this is particularly evident in the case of a key affiliate grouped with the green locale bosses. In reality, this individual is a high-ranking member with crucial coordinating responsibilities across multiple \emph{locali}.

From inspecting the posterior expected weighted adjacency matrix in Figure \ref{fig:ndrangheta_mixture}, additional insights on the connectivity patterns among different groups can be obtained. The darker blocks along the diagonal corresponding to boss-dominated clusters indicate that these small groups tend to meet frequently and primarily within their own ranks. Additionally, the matrix highlights regular meetings among all boss-dominated groups, likely reflecting their need for coordinated decision-making. An exception to this pattern is the core cluster of the blue locale, which interacts only with the core cluster of the green locale.
Furthermore, as expected, core groups within each locale exhibit strong connectivity with clusters of lower-ranked affiliates from the same locale, reinforcing internal hierarchies. Notably, our earlier hypothesis regarding the cluster of bosses spanning multiple \emph{locali}, suggesting a gradual shift towards the yellow locale, gains further support. This group exhibits a higher frequency of meetings with the cluster containing all yellow lower-ranked affiliates, indicating its strategic alignment and possible integration into the yellow locale.

A key property of both the known-groups and unknown-groups models is that they explicitly allow for the estimation of the number of meetings that went undetected by investigators. Notably, such an analysis cannot be conducted in \citet{esbm_rigon, Lu2025}, as those models implicitly assume a fixed number of meetings. In the analysis of the \emph{Infinito} criminal data, we rely on the inference provided by the unknown-groups model, which was identified as the better-fitting specification based on the comparisons discussed above.
Our unknown-groups model suggests the likely presence of a few number of undetected meetings, which aligns with expectations and confirms the need of accounting for unseen binary traits. Figure \ref{fig:unseen_data} displays the posterior distribution of the number of these unseen meetings, providing a probabilistic estimate of their occurrence. This information may provide guidance to law enforcements to understand the amount of meetings that went undetected. Moreover, as formally proved in Section~\ref{sec:comparison_no_unseen}, accounting for unseen meetings directly affects the clustering structure and, indirectly, the estimated adjacency matrices in Figure~\ref{fig:ndrangheta_matrices}. The overall good fit of the proposed unknown-groups model to data, highlighted in Figure \ref{fig:ndrangheta_matrices}, also validates the estimate of the number of unseen meetings, which results to be much smaller than the estimate provided under the known-groups setting using the clustering induced by \emph{locali} membership, reported in Figure \ref{fig:unseen_locali}. We also compare these findings with those obtained from the negative binomial mixture of \textsc{bb}s \citep{Ghilotti2025}, the natural benchmark for estimating the number of unseen meetings. Hyperparameters are set as in Section \ref{sec:simulations_network}. By construction, this model imposes a homogeneous exchangeable structure, i.e. it does not take into account any group structure, leading to posterior expectations of the adjacency matrix that assign identical values across all entries. As a result, it fails to reproduce the heterogeneity observed in the \emph{Infinito} criminal data (Figure~\ref{fig:ndrangheta_obs_mixture}). Conseqeuntly, its inference on the number of unseen meetings $N^\prime$ substantially differs from that of our model. In particular, Figure~\ref{fig:post_unseen_ndrangheta_negbin_bb} shows that the competitor places high posterior probability on the absence of undetected meetings, in contrast with the inference obtained under the unknown-groups model (Figure~\ref{fig:unseen_data}).
Model comparison via the \textsc{waic} further supports the data-driven clustering approach. Indeed, the negative binomial mixture of \textsc{bb}s attains a \textsc{waic} of $-5392$, which is worse not only than the unknown-groups model but also than the known-groups specification.

\section{Discussion}

In this paper, we have introduced a new Bayesian nonparametric framework for modeling multivariate count data with an unknown number of traits. The models developed in the main manuscript are based on finite completely random vectors, while results for the more general class of completely random vectors are provided in the Supplementary Material. We focused on the cases where $A_{i\ell}\in\{0,1\}$ (binary traits) or $A_{i\ell}\in\{0,1,2,\dots\}$ (Poisson counts), though our framework naturally extends to general parameter spaces for $\theta_{jq}$, as discussed in the Supplementary Material. The general theorems established in Section~\ref{sec:partially_ex_setting} were further leveraged to derive a Gibbs sampler for a mixture model aimed at clustering finite trait allocations. We also emphasized the importance of modeling a random number of traits, which makes it possible to uncover previously unseen meetings within the `Ndrangheta dataset, while simultaneously yielding a better clustering solution, as formalized in Proposition~\ref{prop:cluster_comparison}. Nevertheless, a word of caution is warranted. Estimating the number of unseen meetings is essentially an extrapolation task, which makes it difficult to assess the accuracy of the predictions. Simulation studies confirmed the usefulness of our model under correct specification. However, in real data applications some degree of misspecification is unavoidable. Thus, a pragmatic recommendation is to regard our estimates as rough indications for the true number of unseen meetings, thereby ensuring a conservative interpretation of the results.

Although we focused on the analysis of criminal networks, the scope of our approach extends beyond this domain. For instance, we plan to explore its use to cluster microbiome profiles of different subjects, where traits correspond to operational taxonomic units observed with different frequencies in each subject. The importance of trait allocation models for microbiome data analysis has been recently emphasized by \cite{james2025microbiome}. Another promising area of application is ecology, where our method can be used to cluster sites that exhibit similar patterns in terms of species occurrences \citep{Ghilotti2025}. In this context the traits are the species, observed with varying frequencies in each site.

In addition, our paper also serves as a springboard for future methodological developments. First, all results in Section \ref{app:distribution_theory_CRV} of the Supplementary Material hold true for general \textsc{crv}s, offering a theoretical basis for a broad class of models and providing practitioners with new modeling opportunities. Second, the results in Section \ref{sec:partially_ex_setting} are the basic building blocks for addressing extrapolation problems involving dependent populations of traits. For example, one can estimate the number of unseen traits in an additional sample of subjects, thereby extending the results of \cite{Mas22,Camerlenghi2024,Ghilotti2025} when multiple-sample data are available.  While \cite{shen2024} focused on predicting new genetic variants in presence of two dependent populations, our approach offers additional flexibility in ecological applications by allowing for finite species richness, since the number of traits is assumed finite but random. Finally, we believe that it is not difficult to extend our framework to cluster a collection of trait allocations, rather than individual subjects. This would lead to models akin to nested processes \citep{Rod(08)} within the trait allocation context. The aforementioned applications and methodological developments are left for future research.

\section*{Acknowledgments}
{
The research was partially carried out while L.G. and F.C. were visiting the Department of Biostatistics, University of California Los Angeles (US), in 2024. L.G. and F.C. are supported by the European Union-Next Generation EU funds, component M4C2, investment 1.1., PRIN-PNRR 2022 (P2022H5WZ9). T.R. acknowledges support from the European Union (ERC, NEMESIS, project number: 101116718). Views and opinions expressed are however those of the author(s) only and do not necessarily reflect those of the European Union or the European Research Council Executive Agency. Neither the European Union nor the granting authority can be held responsible for them.}

\bibliography{references}

@article{DeBlasi2015,
   author = {De Blasi, Pierpaolo and Stefano Favaro and Antonio Lijoi and Ramses H. Mena and Igor Pr\"unster and Matteo Ruggiero},
   number = {2},
   journal = {IEEE Transactions on Pattern Analysis and Machine Intelligence},
   pages = {212-229},
   title = {{Are Gibbs-type priors the most natural generalization of the Dirichlet process?}},
   volume = {37},
   year = {2015}
}

@article{Ferguson1973,
   author = {Thomas S. Ferguson},
   journal = {The Annals of Statistics},
   pages = {209-230},
   title = {{A Bayesian analysis of some nonparametric problems}},
   volume = {1},  
number = {2},
   year = {1973}
}

@article{Dunson2009,
   author = {David B. Dunson and Chuanhua Xing},
   number = {487},
   journal = {Journal of the American Statistical Association},
   pages = {1042-1051},
   pmid = {23606777},
   title = {{Nonparametric Bayes modeling of multivariate categorical data}},
   volume = {104},
   year = {2009}
}

@article{Goodman1974,
 author = {Leo A. Goodman},
 journal = {Biometrika},
 number = {2},
 pages = {215--231},
 title = {Exploratory Latent Structure Analysis Using Both Identifiable and Unidentifiable Models},
 urldate = {2025-08-20},
 volume = {61},
 year = {1974}
}

@book{Lazarsfeld1968,
	author = {Lazarsfeld, P. F. and Henry, N. W.},
	title = {Latent structure analysis},
    publisher = {Boston: Houghton Mifflin},
	year = {1968}}

@book{Hagenaars2002,
	author = {Hagenaars, Jacques A. and McCutcheon, Allan L.},
	publisher = {Cambridge University Press},
	title = {{Applied Latent Class Analysis}},
	year = {2002}}

@article{Pitman1996,
   author = {Jim Pitman},
   journal = {Statistics, Probability and Game Theory},
   pages = {245-267},
   title = {{Some developments of the Blackwell-MacQueen urn scheme}},
   volume = {30},
   year = {1996}
}

@article{Lijoi2020,
   author = {Antonio Lijoi and Igor Prünster and Tommaso Rigon},
   number = {4},
   journal = {Biometrika},
   pages = {891-906},
   title = {{The Pitman–Yor multinomial process for mixture modelling}},
   volume = {107},
   year = {2020}
}

@article{Lijoi2024,
   author = {Antonio Lijoi and Igor Prünster and Tommaso Rigon},
   number = {546},
   journal = {Journal of the American Statistical Association},
   pages = {929-941},
   title = {{Finite-dimensional discrete random structures and Bayesian clustering}},
   volume = {119},
   year = {2024}
}

@article{Colombi2025,
      title={Hierarchical mixture of finite mixtures}, 
      author={Alessandro Colombi and Raffaele Argiento and Federico Camerlenghi and Lucia Paci},
      year={2025},
journal = {Bayesian Analysis},
volume = {(In press)}}

@article{Heaukulani2016,
author = {Creighton Heaukulani and Daniel M. Roy},
title = {{The combinatorial structure of beta negative binomial processes}},
volume = {22},
journal = {Bernoulli},
number = {4},
pages = {2301--2324},
year = {2016}}

@book{baccelli,
  TITLE = {{Random Measures, Point Processes, and Stochastic Geometry}},
  AUTHOR = {Baccelli, Fran{\c c}ois and Blaszczyszyn, Bartlomiej and Karray, Mohamed},
  PUBLISHER = {{Inria}},
  YEAR = {2020},
}

@article{franzolini2025,
      title={Multivariate species sampling models}, 
      author={Beatrice Franzolini and Antonio Lijoi and Igor Pr\"{u}nster and Giovanni Rebaudo},
      year={2025},
      journal = {arXiv:2503.24004}
}

@book {Kallenberg_2017,
    AUTHOR = {Kallenberg, Olav},
     TITLE = {Random measures, theory and applications},
 PUBLISHER = {Springer, Cham},
      YEAR = {2017}
}

@book {Daley_volII,
    AUTHOR = {Daley, D. J. and Vere-Jones, D.},
     TITLE = {An introduction to the theory of point processes. {V}olume {II}},
    SERIES = {Probability and its Applications (New York)},
   EDITION = {Second},
      NOTE = {General theory and structure},
 PUBLISHER = {Springer, New York},
      YEAR = {2008},
     PAGES = {xviii+573}}

@article{Camerlenghi2024,
author = {Federico Camerlenghi and Stefano Favaro and Lorenzo Masoero and Tamara Broderick},
title = {{Scaled process priors for Bayesian nonparametric estimation of the unseen genetic variation}},
JOURNAL = {Journal of the American Statistical Association},
volume = {119},
number = {545},
pages = {320--331},
year = {2024}}

@article{shen2024,
      title={Double trouble: predicting new variant counts across two heterogeneous populations}, 
      author={Yunyi Shen and Lorenzo Masoero and Joshua G. Schraiber and Tamara Broderick},
      year={2024},
      journal = {arXiv:2403.02154},
      url={https://arxiv.org/abs/2403.02154}, 
}

@article{Beraha2023,
author = {Mario Beraha and Stefano Favaro},
title = {{Transform-scaled process priors for trait allocations in Bayesian nonparametrics}},
volume = {19},
journal = {Electronic Journal of Statistics},
number = {2},
publisher = {Institute of Mathematical Statistics and Bernoulli Society},
pages = {3532 -- 3563},
year = {2025},
doi = {10.1214/25-EJS2420},
URL = {https://doi.org/10.1214/25-EJS2420}
}

@article{Gri11,
	author = {Griffiths, Thomas L. and Ghahramani, Zoubin},
	issn = {1532-4435},
	journal = {Journal of Machine Learning Research},
	pages = {1185--1224},
	title = {The {I}ndian buffet process: an introduction and review},
	volume = {12},
	year = {2011}}

@inproceedings{Wil10,
	address = {Madison, WI, USA},
	author = {Williamson, Sinead and Wang, Chong and Heller, Katherine A. and Blei, David M.},
	booktitle = {Proceedings of the 27th International Conference on International Conference on Machine Learning},
	isbn = {9781605589077},
	location = {Haifa, Israel},
	numpages = {8},
	pages = {1151-1158},
	series = {ICML'10},
	title = {The {IBP} Compound {D}irichlet Process and Its Application to Focused Topic Modeling},
	year = {2010}}

@article{Bro15,
	author = {Broderick, Tamara and Mackey, Lester and Paisley, John and Jordan, Michael I.},
	journal = {IEEE Transactions on Pattern Analysis and Machine Intelligence},
	number = {2},
	pages = {290-306},
	title = {Combinatorial clustering and the beta negative vinomial process},
	volume = {37},
	year = {2015}}

@inproceedings{Gri06,
	author = {Griffiths, Tom and Ghahramani, Zoubin},
	booktitle = {Advances in Neural Information Processing Systems},
	date-modified = {2024-03-07 14:35:56 +0100},
	editor = {Y. Weiss and B. Sch\"olkopt and J. Platt},
	journal = {Adv. Neur. Inf. Proc. Syst.},
	pages = {475--482},
	publisher = {MIT Press},
	title = {{Infinite latent feature models and the Indian buffet process}},
	volume = {18},
	year = {2006}}

@inproceedings{Thi07,
	author = {Thibaux, Romain and Jordan, Michael I.},
	booktitle = {Proceedings of the Eleventh International Conference on Artificial Intelligence and Statistics},
	date-modified = {2024-03-07 14:42:12 +0100},
	pages = {564--571},
	title = {{Hierarchical beta processes and the Indian buffet process}},
	volume = {2},
	year = {2007},
	bdsk-url-1 = {https://proceedings.mlr.press/v2/thibaux07a.html}}

@article{Aye21,
	author = {Ayed, Fadhel and Caron, Fran\c{c}ois},
	journal = {Statistics and Computing},
	number = {5},
	pages = {31--63},
	title = {Nonnegative {B}ayesian nonparametric factor models with completely random measures},
	volume = {31},
	year = {2021}}

@article{Mas22,
	author = {Masoero, Lorenzo and Camerlenghi, Federico and Favaro, Stefano and Broderick, Tamara},
	journal = {Biometrika},
	number = {1},
	pages = {17-32},
	title = {{More for less: predicting and maximizing genomic variant discovery via Bayesian nonparametrics}},
	volume = {109},
	year = {2022}}

@article{zhou2016priors,
  title={Priors for random count matrices derived from a family of negative binomial processes},
  author={Zhou, Mingyuan and Padilla, Oscar Hernan Madrid and Scott, James G},
  JOURNAL = {Journal of the American Statistical Association},
  volume={111},
  number={515},
  pages={1144--1156},
  year={2016},
  publisher={Taylor \& Francis}
}

@article {Quintana2022,
    AUTHOR = {Quintana, Fernando A. and M\"{u}ller, Peter and Jara, Alejandro
              and MacEachern, Steven N.},
     TITLE = {The dependent {D}irichlet process and related models},
  JOURNAL = {Statistical Science},
    VOLUME = {37},
      YEAR = {2022},
    NUMBER = {1},
     PAGES = {24--41},
      ISSN = {0883-4237},
   MRCLASS = {Expansion},
  MRNUMBER = {4371095},
       DOI = {10.1214/20-sts819},
       URL = {https://doi.org/10.1214/20-sts819},
}

@article{Stolf2025,
  title={Infinite joint species distribution models},
  author={Stolf, Federica and Dunson, David B},
  journal={Biometrika},
  year={2025},
volume = {(In press)},
  publisher={Oxford University Press}
}

@article{Kno11,
	author = {Knowles, David and Ghahramani, Zoubin},
	doi = {10.1214/10-AOAS435},
	fjournal = {The Annals of Applied Statistics},
	issn = {1932-6157},
	journal = {The Annals of Applied Statistics},
	mrclass = {62F15 (60G57 62G99 62H25 62P10)},
	mrnumber = {2849785},
	number = {2B},
	pages = {1534--1552},
	title = {Nonparametric {B}ayesian sparse factor models with application to gene expression modeling},
	url = {https://doi.org/10.1214/10-AOAS435},
	volume = {5},
	year = {2011},
	bdsk-url-1 = {https://doi.org/10.1214/10-AOAS435}}

@article{beraha25,
      title={Bayesian calculus and predictive characterizations of extended feature allocation models}, 
      author={Mario Beraha and Federico Camerlenghi and Lorenzo Ghilotti},
      year={2025},
      journal = {arXiv:2502.10257}
}

@article{esbm_rigon,
author = {Sirio Legramanti and Tommaso Rigon and Daniele Durante and David B. Dunson},
title = {{Extended stochastic block models with application to criminal networks}},
volume = {16},
journal = {The Annals of Applied Statistics},
number = {4},
pages = {2369--2395},
year = {2022}
}

@article{Lu2025,
      title={Zero-inflated stochastic block modeling of efficiency-security trade-offs in weighted criminal networks}, 
      author={Chaoyi Lu and Daniele Durante and Nial Friel},
      year={2025},    
    journal = {Journal of the Royal Statistical Society Series A: Statistics in Society},
    volume = {(In Press)}}

@article{Calderoni17,
title = {Communities in criminal networks: A case study},
journal = {Social Networks},
volume = {48},
pages = {116-125},
year = {2017},
issn = {0378-8733},
doi = {https://doi.org/10.1016/j.socnet.2016.08.003},
url = {https://www.sciencedirect.com/science/article/pii/S0378873316300363},
author = {Francesco Calderoni and Domenico Brunetto and Carlo Piccardi}
}

@article{Ghilotti2025,
      title={Bayesian analysis of product feature allocation models}, 
      author={Lorenzo Ghilotti and Federico Camerlenghi and Tommaso Rigon},
      year={2025},
      journal = {Journal of the Royal Statistical Society Series B: Statistical Methodology},
volume = {(In press)}
}

@article{Watanabe13,
  author  = {Sumio Watanabe},
  title   = {{A widely applicable Bayesian information criterion}},
  journal = {Journal of Machine Learning Research},
  year    = {2013},
  volume  = {14},
  number  = {27},
  pages   = {867--897},
  url     = {http://jmlr.org/papers/v14/watanabe13a.html}
}

@article{Bat18,
	author = {Marco Battiston and Stefano Favaro and Daniel M. Roy and Yee Whye Teh},
	journal = {The Annals of Applied Probability},
	number = {3},
	pages = {1423--1448},
	title = {{A characterization of product-form exchangeable feature probability functions}},
	volume = {28},
	year = {2018}}

@article{james2024HIBP,
      title = {{Bayesian analysis of generalized hierarchical Indian buffet processes for within and across group sharing of latent features}}, 
      author={Lancelot Fitzgerald James and Juho Lee and Abhinav Pandey},
      year={2024},
      journal = {arXiv:2304.05244}
}

@article{james2025microbiome,
      title={{Poisson hierarchical Indian buffet processes for within and across group sharing of latent features-with indications for microbiome species sampling models}}, 
      author={Lancelot F. James and Juho Lee and Abhinav Pandey},
      year={2025},
      journal = {arXiv:2502.01919},
      eprint={2502.01919},
      archivePrefix={arXiv},
      primaryClass={stat.ML},
      url={https://arxiv.org/abs/2502.01919}, 
}

\clearpage

\setcounter{equation}{0}
\renewcommand\theequation{S\arabic{equation}}
\renewcommand\theHequation{S\arabic{equation}}

\setcounter{theorem}{0}
\renewcommand\thetheorem{S\arabic{theorem}}
\renewcommand\theHtheorem{S\arabic{theorem}}

\setcounter{corollary}{0}
\renewcommand\thecorollary{S\arabic{corollary}}
\renewcommand\theHcorollary{S\arabic{corollary}}

\setcounter{proposition}{0}
\renewcommand\theproposition{S\arabic{proposition}}
\renewcommand\theHproposition{S\arabic{proposition}}

\setcounter{lemma}{0}
\renewcommand\thelemma{S\arabic{lemma}}
\renewcommand\theHlemma{S\arabic{lemma}}

\setcounter{figure}{0}
\renewcommand\thefigure{S\arabic{figure}}
\renewcommand\theHfigure{S\arabic{figure}}

\setcounter{table}{0}
\renewcommand\thetable{S\arabic{table}}
\renewcommand\theHtable{S\arabic{table}}

\setcounter{section}{0}
\renewcommand{\thesection}{S\arabic{section}}
\renewcommand{\theHsection}{S\arabic{section}}

\begin{center}
   \LARGE Supplementary Material for:\\
    ``Bayesian nonparametric modeling of multivariate count data with an unknown number of traits''
\end{center}

\section*{Organization of the Supplementary Material}

The Supplementary Material is organized as follows. In Section \ref{app:CRV}, we review in detail the general class of completely random vectors, with special focus on the notable subclass of finite completely random vectors. Section \ref{app:distribution_theory_CRV} presents the general distribution theory under completely random vector priors, which is then specialized to derive the results in Section  \ref{sec:bayesian_analysis}. Finally, in Section \ref{app:effect_unseen}, we report the proof of Proposition \ref{prop:cluster_comparison}.

\section{Account on completely random vectors} \label{app:CRV}

In this section, we provide an account on completely random vectors (\textsc{crv}s), with the goal of illustrating the connection  between the processes presented in Section \ref{sec:model_form} and completely random vectors. The class of \textsc{crv}s can be considered as a generalization of the class of completely random measures (\textsc{crm}s), introduced by \cite{Kingman(67)}, to the multivariate setting and played a leading role in many successful stories in the \textsc{bnp} literature. Indeed many authors have focused on \textsc{crv}s to model the prior opinion in presence of partially exchangeable data, i.e., data organized in groups \citep{Gri(17),Nip(14),Cam(19)AoS}. The use of \textsc{crv}s also allows the quantification of dependence induced by the prior across the different groups of observations \citep{Cat(21)AoS,Lav(21)}.

To mathematically introduce the class of \textsc{crv}s, consider a Polish space $\X$ equipped with the corresponding Borel $\sigma$-algebra $\mathscr{X}$; consistently with the rest of the paper, $\X$ represents the space of trait labels. Denote by $(\mathsf{M}_\X, \mathcal{M}_\X)$ the space of boundedly finite (non-negative) measures on $\X$ equipped with the corresponding Borel $\sigma$-algebra generated by the topology of weak$^{\#}$-convergence \citep{Daley_volII}. We also indicate by $\mathsf{M}_\X^{ d}$ the $d$-fold product space of $\mathsf{M}_\X$, which is naturally endowed with the product $\sigma$-algebra.
Remind that a random vector of measures $\tilde{\bm{\mu}} = ( \tilde{\mu}_1, \ldots  , \tilde{\mu}_d)$  is a measurable function from a suitable probability space, say $(\Omega, \mathscr{A}, \Pp)$, taking values in $(\mathsf{M}_\X^{ d}, \calM_{\X}^{ d})$. In the sequel, we will use the notation $\tilde{\bm{\mu}} (A)= ( \tilde{\mu}_1 (A), \ldots  , \tilde{\mu}_d (A)) $ to indicate the random vector evaluated on a set $A \in \mathscr{X}$. Then, \textsc{crv}s are defined as follows.
\begin{definition}[Completely random vector]
A random vector of measures $\tilde{\bm{\mu}} = ( \tilde{\mu}_1, \ldots  , \tilde{\mu}_d)$  is a \textsc{crv} if, for any $n \in \N$ and for any collection of disjoint sets $A_1,\ldots,A_n \in \mathscr{X}$, the evaluations $\tilde{\bm{\mu}}(A_1),\ldots,\tilde{\bm{\mu}}(A_n)$ are independent random vectors on $(0,\infty)^d$.
\end{definition}
Note that this definition entails that the marginals of a \textsc{crv} are \textsc{crm}s, since they have independent increments. Moreover, \textsc{crv}s satisfy a similar decomposition as the one that holds true for \textsc{crm}s. Indeed, a \textsc{crv} consists of three main components: (i) a deterministic drift $\bm{u}$, i.e., a deterministic measure with $d$ entries; (ii) a part with random $d$-dimensional jumps $(\bm{\beta}_k)_{k \geq 1}$ at fixed locations $(y_k)_{k \geq 1}$; (iii) a part with random $d$-dimensional jumps 
$(\bm{ \theta}_{j})_{j \geq 1}$ at random locations $(\tilde{X}_j)_{j \geq 1}$. Summing up, the following representation holds true
\begin{equation}
    \label{eq:CRV_decomposition}
    \tilde{\bm{\mu}} = \bm{u}  + \sum_{k \geq 1}  \bm{\beta}_k \delta_{y_k} + \sum_{j \geq 1} \bm \theta_{j} \delta_{\tilde{X}_j}.
\end{equation}
See \citet[Theorem 3.19]{Kallenberg_2017}. As a common practice in \textsc{bnp} models, here we remove the deterministic drift and the part with fixed locations, thus working only with component (iii); in other words we consider \textsc{crv}s of the type:
\[
\tilde{\bm{\mu}} =\sum_{j \geq 1} \bm \theta_{j} \delta_{\tilde{X}_j} = \sum_{j \geq 1}(\theta_{j1}, \ldots , \theta_{jd})\delta_{\tilde{X}_j}.
\] 
Similarly to the \textsc{crm} case, any \textsc{crv} can be expressed as a functional of a Poisson point process \citep{baccelli} on the space 
$(0,\infty)^d \times \X$. Indeed, consider the Laplace functional of $\tilde{\bm{\mu}}$, which is defined as 
\[
\calL_{\tilde{\bm\mu}}(g_1,\ldots,g_d) := \E\left(e^{- \sum_{q=1}^d \int_{\X} g_q(x) \tilde{\mu}_q (\dd x)}\right)
\]
for all measurable functions $g_1, \ldots, g_d: \X \to (0,\infty)$. The Laplace functional of $\tilde{\bm{\mu}}$ admits the following  L\'evy-Khintchine representation
\begin{equation}
    \label{eq:LK_appendix}
   \calL_{\tilde{\bm\mu}}(g_1,\ldots,g_d) =
    \exp \left\{ - \int_{(0,\infty)^d \times \X} 
    \left(1-\exp\left\{- \sum_{q=1}^d g_q (x) \theta_q \right\} \right)     \nu (\dd \theta_1 \, \ldots\, \dd \theta_d \, \dd x)     \right\},
\end{equation}
for all measurable functions $g_1, \ldots, g_d: \X \to (0,\infty)$, where $\nu (\, \cdot \, )$ is a measure on $(0,\infty)^d \times \X$ referred to as the L\'evy intensity of $\tilde{\bm{\mu}}$. The measure $\nu (\, \cdot\, )$ must satisfy the following conditions
\[
\nu((0,\infty)^d \times \{x \}) =0, \; \forall x \in \X, \qquad \text{and}\qquad
\int_{(0,\infty)^d \times B} \min \{||\bm{\theta}|| , 1\} \, \nu (\dd \theta_1\, \ldots \, \dd \theta_d \, \dd x ) < \infty ,
\]
for any bounded Borel set $B \in \mathscr{X}$, where $||\bm{\theta}||$ denotes the Euclidean norm of the vector $\bm{\theta}= (\theta_1, \ldots , \theta_d)$. In the present paper we consider homogeneous \textsc{crv}s, namely \textsc{crv}s whose L\'evy intensity satisfies $\nu (\dd \theta_1\, \ldots \, \dd \theta_d \, \dd x )=
\rho_d (\dd \theta_1\, \ldots \, \dd \theta_d) \alpha (\dd x)$. In particular, assume that $\alpha (\, \cdot\,)$ is a finite non-atomic measure on $\X$ with total mass $\lambda$, thus set $P_0(\, \cdot \, ):= \alpha(\, \cdot \, )/\lambda$. In the sequel, we use the notation  $\tilde{\bm{\mu}} \sim \CRV (\rho_d, \lambda, P_0)$ to indicate the distribution of a homogeneous \textsc{crv}. We refer to \cite{Kallenberg_2017} for additional details on \textsc{crv}s.

\subsection{Finite completely random vectors} \label{app:fCRV}

In this paper, we focus on a special class of \textsc{crv}s which we refer to as finite completely random vectors (\textsc{fcrv}s). Due to the purposes of the paper, we restrict the attention to the homogeneous case; extension to non-homogeneous \textsc{crv}s is straightforward. We define (homogeneous) \textsc{fcrv}s as follows.
\begin{definition}[Finite completely random vector]
Let $\tilde{\bm{\mu}} \sim \CRV (\rho_d, \lambda, P_0)$. If $\rho_d$ is a finite measure, then $\tilde{\bm{\mu}}$ is a \textsc{fcrv}.    
\end{definition}
When $\tilde{\bm{\mu}}$ is a \textsc{fcrv}, we write $\tilde{\bm{\mu}} \sim \textsc{fcrv} (\rho_d, \lambda, P_0)$, where $\rho_d$ is necessarily finite. Without loss of generality, assume that $\rho_d = H^{(d)}$ is a probability measure and adjust $\lambda$ to account for it. 
As we will discuss in the next proposition, the specification \emph{finite} \textsc{crv} is motivated by the fact that \textsc{fcrv}s can be seen as functionals of finite Poisson point processes \citep[Section 4.3]{baccelli}. In particular, the random measures $\tilde{\mu}_1,\ldots,\tilde{\mu}_d$ are supported on a (finite) Poisson random number of common atoms. 

\begin{proposition}
    \label{prp:fCRV_appendix}
    If  $\tilde{\bm{\mu}} \sim \textsc{fcrv} (H^{(d)}, \lambda, P_0)$, then $\tilde{\bm \mu}$ can be represented as
    \begin{equation}    \label{eq:fCRV_representation_appendix}
        \tilde{\bm \mu} = \sum_{j=1}^{N} (\theta_{j1},\ldots,\theta_{jd}) \delta_{\tilde{X}_j},
    \end{equation}
    where $N$ is a Poisson random variable with parameter $\lambda>0$. Moreover, conditionally to $N$, the vectors $(\theta_{j1},\ldots,\theta_{jd})$ are iid draws from $H^{(d)}$, namely $(\theta_{j1},\ldots,\theta_{jd}) \iid H^{(d)}$, and $\tilde{X}_j \iid P_0$, as $j=1,\ldots,N$.
\end{proposition}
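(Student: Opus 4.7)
The plan is to compute the Laplace functional of the candidate random measure in \eqref{eq:fCRV_representation_appendix} and show that it matches the L\'evy--Khintchine expression \eqref{eq:LK_appendix} associated to $\tilde{\bm\mu}\sim\textsc{fcrv}(H^{(d)},\lambda,P_0)$; equality in distribution then follows from uniqueness of the Laplace functional \citep{Kallenberg_2017}. An essentially equivalent and slightly more geometric route is to observe that, since $H^{(d)}$ and $P_0$ are both probability measures, the L\'evy intensity $\nu(\dd\theta_1\cdots\dd\theta_d\,\dd x)=\lambda\,H^{(d)}(\dd\theta_1\cdots\dd\theta_d)P_0(\dd x)$ has finite total mass $\lambda$, so the underlying Poisson point process on $(0,\infty)^d\times\X$ is a finite PPP and admits the standard constructive description as a Poisson number of iid atoms drawn from the normalized intensity $H^{(d)}\otimes P_0$; projecting the locations onto $\X$ then gives precisely \eqref{eq:fCRV_representation_appendix}.

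Concretely, I would first specialize \eqref{eq:LK_appendix} to the \textsc{fcrv} case, rewriting the exponent as
$$
-\lambda\int_{\X}\int_{(0,\infty)^d}\left(1-\exp\Bigl\{-\sum_{q=1}^d g_q(x)\theta_q\Bigr\}\right)H^{(d)}(\dd\theta_1\cdots\dd\theta_d)\,P_0(\dd x) = -\lambda\bigl(1-S(g_1,\ldots,g_d)\bigr),
$$
where $S(g_1,\ldots,g_d):=\E\bigl[\exp\{-\sum_{q=1}^d g_q(\tilde X)\theta_q\}\bigr]$ with $\tilde X\sim P_0$ independent of $(\theta_1,\ldots,\theta_d)\sim H^{(d)}$. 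Next I would compute the Laplace functional of $\tilde{\bm\mu}^\star:=\sum_{j=1}^N(\theta_{j1},\ldots,\theta_{jd})\delta_{\tilde X_j}$ by conditioning on $N$: the integral $\sum_q\int g_q\,\dd\tilde\mu_q^\star$ equals $\sum_{j=1}^N\sum_q g_q(\tilde X_j)\theta_{jq}$, hence independence of the atoms yields $\E[\exp\{-\sum_q\int g_q\,\dd\tilde\mu_q^\star\}\mid N]=S(g_1,\ldots,g_d)^N$. Taking expectation via the Poisson probability generating function $\E[s^N]=e^{-\lambda(1-s)}$ at $s=S(g_1,\ldots,g_d)$ reproduces exactly the exponent displayed above, closing the argument.

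The main obstacle is essentially bookkeeping. One must confirm that the intensity $\nu$ satisfies the integrability conditions imposed on a L\'evy intensity---which is immediate here because $\nu$ is finite---and justify the interchange of sum and expectation in the conditional Laplace functional, which is trivial since the sum over $j$ is almost surely finite. No deeper tools are required beyond the classical characterization of Poisson point processes with finite intensity measure and the uniqueness of the Laplace functional, both of which are standard and can be quoted from \citet{Kallenberg_2017}.
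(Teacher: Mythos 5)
Your proposal is correct and follows essentially the same route as the paper's proof: condition on $N$ to express the Laplace functional of the candidate representation as $S(g_1,\ldots,g_d)^N$, integrate out the Poisson variable (the paper resums the series explicitly where you quote the probability generating function), and match the result to the L\'evy--Khintchine form \eqref{eq:LK_appendix}, concluding by uniqueness of the Laplace functional. No gaps.
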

\begin{proof}
In order to show the result, we prove that the Laplace functional of $\tilde{\bm\mu}$, as defined in \eqref{eq:fCRV_representation_appendix}, coincides with the one of a $\textsc{fcrv}(H^{(d)}, \lambda, P_0)$.
To this end, consider the measurable functions $g_1, \ldots , g_d: \X \to (0,\infty)$ and note that, conditionally on $N$, we have 
\begin{equation*}
\begin{split}
    &\E\left[\exp \left\{- \sum_{q=1}^d \int_\X g_q (x) \tilde{\mu}_q (\dd x)  \right\} \,\mid\, N \right] =  \E\left[ \exp
\left\{- \sum_{q=1}^d \sum_{j=1}^{N} \theta_{jq} \, g_q(\tilde{X}_j) \right\} \,\mid \, N\right]\\
&\qquad = \E\left[ \E\left[ \exp
\left\{- \sum_{q=1}^d \sum_{j=1}^{N} \theta_{jq} \, g_q(\tilde{X}_j) \right\} \,\mid \, N, \tilde{X}_1,\ldots,\tilde{X}_N \right] \,\mid \, N\right]\\
& \qquad= \E\left[ \, \prod_{j=1}^N \E\left[ \exp
\left\{- \sum_{q=1}^d \theta_{jq} \, g_q(\tilde{X}_j) \right\} \,\mid \, N,  \tilde{X}_j \right] \,\mid \, N\right],
\end{split}
\end{equation*}
where we first exploit the tower property of conditional expectations and then we use the independence of the random vectors, conditionally on $N$. Observing that, conditionally on $N$, the vectors $(\theta_{j1},\ldots,\theta_{jd})$ are iid from distribution $H^{(d)}$ and $\tilde{X}_j \iid P_0$, for $j=1, \ldots , N$, the previous expression equals
\begin{align} 
        &\E\left[\exp \left\{- \sum_{q=1}^d \int_\X g_q (x) \tilde{\mu}_q (\dd x)  \right\} \,\mid\, N \right] 
= \E\left[ \, \prod_{j=1}^N \int_{(0,\infty)^d} \exp
\left\{- \sum_{q=1}^d \theta_{q} \, g_q(\tilde{X}_j) \right\} H^{(d)}(\dd \theta_1\, \ldots \, \dd \theta_d) \,\mid \, N\right] \notag \\ 
& \qquad = \prod_{j=1}^N \int_{\X} \int_{(0,\infty)^d} \exp
\left\{- \sum_{q=1}^d \theta_{q} \, g_q(x) \right\} H^{(d)}(\dd \theta_1\, \ldots \, \dd \theta_d) P_0(\dd x) \notag \\
& \qquad= \left( \int_{(0,\infty)^d \times \X} \exp
\left\{- \sum_{q=1}^d \theta_{q} \, g_q(x) \right\} H^{(d)}(\dd \theta_1\, \ldots \, \dd \theta_d) P_0(\dd x)  \right)^N. \label{eq:conditional_laplace_fCRV}
\end{align}
We can now integrate out $N$ in the final expression of \eqref{eq:conditional_laplace_fCRV} to obtain the Laplace functional of $\tilde{\bm\mu}$ defined as in \eqref{eq:fCRV_representation_appendix}. Specifically, we have
\begin{align}
        &  \calL_{\tilde{\bm\mu}}(g_1,\ldots,g_d) = \E\left[ \E\left[\exp \left\{- \sum_{q=1}^d \int_\X g_q (x) \tilde{\mu}_q (\dd x)  \right\} \,\mid\, N \right] \right] \nonumber \\
& \qquad = \E\left[ \left( \int_{(0,\infty)^d \times \X} \exp
\left\{- \sum_{q=1}^d \theta_{q} \, g_q(x) \right\} H^{(d)}(\dd \theta_1\, \ldots \, \dd \theta_d) P_0(\dd x)  \right)^N \right] \nonumber\\
& \qquad = e^{-\lambda} \sum_{j \geq 0} \frac{1}{j!}\left[ \lambda\int_{(0,\infty)^d \times \X} \exp
\left\{- \sum_{q=1}^d \theta_{q} \, g_q(x) \right\} H^{(d)}(\dd \theta_1\, \ldots \, \dd \theta_d) P_0(\dd x) \right]^j \nonumber\\
& \qquad=  e^{-\lambda} \exp\left\{ \lambda \int_{(0,\infty)^d \times \X} \exp
\left\{- \sum_{q=1}^d \theta_{q} \, g_q(x) \right\} H^{(d)}(\dd \theta_1\, \ldots \, \dd \theta_d) P_0(\dd x) \right\} \nonumber\\
& \qquad= \exp\left\{ -\lambda \left( 1 - \int_{(0,\infty)^d \times \X} \exp
\left\{- \sum_{q=1}^d \theta_{q} \, g_q(x) \right\} H^{(d)}(\dd \theta_1\, \ldots \, \dd \theta_d) P_0(\dd x)\right)\right\} \nonumber\\
& \qquad = \exp\left\{ -\lambda   \int_{(0,\infty)^d \times \X} \left( 1 - \exp
\left\{- \sum_{q=1}^d \theta_{q} \, g_q(x) \right\} \right) H^{(d)}(\dd \theta_1\, \ldots \, \dd \theta_d) P_0(\dd x) \right\},
\label{eq:laplace_fCRV}
\end{align}
where the first equality follows from the tower property,  expression \eqref{eq:conditional_laplace_fCRV} has been exploited in the second line, and finally the expectation has been evaluated relying on the fact that $N$ is a Poisson random variable with parameter $\lambda$.
From \eqref{eq:LK_appendix}, we note that the equation in  \eqref{eq:laplace_fCRV} corresponds to the Laplace functional of a $\textsc{fcrv}(H^{(d)}, \lambda, P_0)$, and the thesis follows.
\end{proof}

Based on Proposition \ref{prp:fCRV_appendix}, the connection between the prior distribution for $\tilde{\bm{\mu}}$ defined in equations~\eqref{eq:finite_CRV}--\eqref{eq:param_H}, Section \ref{sec:model_form}, and the class of \textsc{fcrv}s is evident. Indeed, the construction in  equations~\eqref{eq:finite_CRV}--\eqref{eq:param_H} precisely corresponds to the formulation in \eqref{eq:fCRV_representation_appendix}, thus $\tilde{\bm{\mu}}$ can be equivalently described as $\tilde{\bm{\mu}} \sim \textsc{fcrv} (H^{(d)}, \lambda, P_0)$.

\section{General distribution theory under CRV priors} \label{app:distribution_theory_CRV}

In this section, we provide a complete distribution theory for partially exchangeable trait allocation models \eqref{eq:partially_ex_traits}, when the prior on $\tilde{\bm\mu}$ is a generic \textsc{crv}. This encompasses both finite- and infinite-dimensional trait models. The results presented here are very general and can be specialized to a variety of different prior specifications. Moreover, they can also be applied to face prediction in presence of multiple dependent populations. In the last part of this section, we will focus on the finite trait models of Section \ref{sec:model_form}, corresponding to the case of $\tilde{\bm\mu}$ distributed as a \textsc{fcrv}. In particular, we will comment on how the results presented in Section \ref{sec:bayesian_analysis}  follow from these general theorems.

First, we recall the general model we deal with:
\begin{equation} \label{eq:mixture_model_allocation_CRV}
   \begin{split}
   Z_{iq} \mid \tilde{\mu}_q &\ind \CP (\tilde{\mu}_q), \qquad i\geq 1, \quad q=1,\ldots,d,\\
    \tilde{\bm\mu} = (\tilde{\mu}_1, \ldots, \tilde{\mu}_d) & \sim \CRV (\rho_d, \lambda, P_0) ,
\end{split} 
\end{equation}
where $\rho_d$ is a measure on $(0,\infty)^d$, $\lambda > 0$ and $P_0$ is a non-atomic probability measure on $\X$. Refer to Section \ref{app:CRV} for the definition of  $\CRV (\rho_d, \lambda, P_0)$. Clearly, model \eqref{eq:partially_ex_traits} with prior distribution given by equations~\eqref{eq:finite_CRV}--\eqref{eq:param_H} is obtained by considering $\rho_d = H^{(d)}$, with $H^{(d)}$ being a probability law on $(0,\infty)^d$, that is $\tilde{\bm\mu} \sim \textsc{fcrv}(H^{(d)}, \lambda, P_0)$.

We first focus on the marginal distribution of a sample $\bm{Z}=(Z_{iq}:  i= 1,\ldots,n_q; q=1,\ldots,d)$. As explained in the main paper, with marginal distribution of $\bm Z$, we refer to the probabilities of the events $(\bm{A} = \bm{a}, K_n = k)$, where $K_n$ is the observed number of distinct traits and $\bm A$ collects the observed counts, as introduced for Theorem \ref{thm:marginal_fCRV}, Section~\ref{sec:bayesian_analysis}. The following theorem describes the marginal probability for the event $(\bm{A} = \bm{a}, K_n = k)$.
\begin{theorem}[Marginal distribution, p\textsc{etpf}]
    \label{thm:marginal_CRV}
    Let $\bm{Z}$ be a sample from the statistical model~\eqref{eq:mixture_model_allocation_CRV}. The probability that  $\bm{Z}$ displays $K_n = k$ distinct traits with counts $\bm A = \bm{a}$ equals
    \begin{equation}
        \label{eq:pEFPF_CRV}
        \begin{split}
       \pi_n (\bm{a}) & =\frac{{\lambda}^k}{k!} \exp \left\{  -\lambda  \int_{(0,\infty)^d}  \left( 1- \prod_{q=1}^d  P(0; \theta_q)^{n_q} \right)    \rho_d (\dd \theta_1\, \ldots \,\dd \theta_d)    \right\} \\
    & \qquad\times  \prod_{\ell=1}^k \int_{(0,\infty)^d}  
    \prod_{q=1}^d   \prod_{i=1}^{n_q} P (a_{i\ell q} ; \theta_q) 
    \rho_d (\dd \theta_1\, \ldots \,\dd \theta_d)
        \end{split}
    \end{equation}
    where $n=\sum_{q=1}^d n_q$ and $\bm{n}=(n_1, \ldots , n_d)$ are the sample sizes.
\end{theorem}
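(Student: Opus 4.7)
The plan is to represent the \textsc{crv} $\tilde{\bm{\mu}}$ via its driving Poisson point process and then exploit the marking and restriction theorems. By the discussion in Section~\ref{app:CRV}, write $\tilde{\bm{\mu}} = \sum_j (\theta_{j1},\ldots,\theta_{jd})\delta_{\tilde{X}_j}$, where $\Xi := \sum_j \delta_{((\theta_{j1},\ldots,\theta_{jd}),\tilde{X}_j)}$ is a Poisson point process on $(0,\infty)^d \times \X$ with mean intensity $\nu(\dd \theta_1\ldots\dd \theta_d, \dd x) = \rho_d(\dd \theta_1\ldots\dd \theta_d)\,\lambda\,P_0(\dd x)$. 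This handles both the finite (\textsc{fcrv}) and the infinite case uniformly, whereas the constructive argument preceding Theorem~\ref{thm:marginal_fCRV} (conditioning on the Poisson variable $N$ and summing over the $\binom{N}{k}$ choices of observed atoms) only works when $\rho_d$ is finite.

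The key step is to augment $\Xi$ by marking each atom $j$ with its count vector $\tilde{\bm{A}}_{\cdot j \cdot} = (\tilde{A}_{ijq} : i=1,\ldots,n_q;\, q=1,\ldots,d)$, which conditional on $(\theta_{j1},\ldots,\theta_{jd})$ has law $\prod_q \prod_{i=1}^{n_q} P(\,\cdot\,;\theta_{jq})$. The marking theorem for Poisson processes then implies that the augmented process $\Xi'$ on $(0,\infty)^d \times \X \times \{0,1,\ldots\}^n$ is Poisson with intensity
\[
\nu'(\dd \theta_1\ldots\dd \theta_d, \dd x, \bm{a}) = \rho_d(\dd \theta_1\ldots\dd \theta_d)\,\lambda\,P_0(\dd x)\prod_{q=1}^d\prod_{i=1}^{n_q} P(a_{iq};\theta_q).
\]
A trait is observed if and only if its mark is non-zero, so the observed traits form the restriction of $\Xi'$ to $\{\bm{a}\neq\bm{0}\}$. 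By the restriction property of Poisson processes, this is again a Poisson process, with total mass
\[
\Lambda' = \lambda \int_{(0,\infty)^d} \Bigl(1 - \prod_{q=1}^d P(0;\theta_q)^{n_q}\Bigr)\,\rho_d(\dd \theta_1\ldots\dd \theta_d),
\]
obtained by integrating $x$ against $P_0$ (which integrates to one) and summing $\bm{a}$ over non-zero count vectors.

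Finally, I would assemble the marginal probability using the standard Poisson description: $K_n \sim \mathrm{Poisson}(\Lambda')$, and conditional on $K_n = k$ the $k$ observed marks are iid draws from $\tilde{\nu}/\Lambda'$, where $\tilde{\nu}(\bm{a}) = \lambda \int \prod_{q,i} P(a_{iq};\theta_q)\,\rho_d(\dd \theta_1\ldots\dd \theta_d)$ has the unobserved labels $x_\ell$ already integrated out. Since the $k$ observed traits are presented in a randomly assigned order, the probability of any specific ordered tuple of marks is $(k!)^{-1}$ times the corresponding intensity product. Collecting terms yields
\[
\pi_n(\bm{a}) = \frac{1}{k!}\, e^{-\Lambda'} \prod_{\ell=1}^k \tilde{\nu}(\bm{a}_{\cdot \ell \cdot}),
\]
which after substituting the expressions for $\Lambda'$ and $\tilde{\nu}$ coincides with~\eqref{eq:pEFPF_CRV}.

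The main obstacle is the measure-theoretic rigor when $\rho_d$ is a possibly infinite measure: one has to verify that $\Lambda' < \infty$, so that the restricted Poisson process almost surely carries finitely many atoms, and that the marking and restriction theorems apply on the non-finite state space. Both facts follow from the integrability conditions on $\nu$ that guarantee $\tilde{\bm{\mu}}$ is a well-defined boundedly finite \textsc{crv}, but they should be cited carefully. Once these points are handled, the remainder of the computation is a routine manipulation of Poisson process intensities.
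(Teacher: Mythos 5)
Your proof is correct, but it follows a genuinely different route from the paper's. The paper proves Theorem~\ref{thm:marginal_CRV} by a direct infinitesimal computation: it introduces small balls $B_\varepsilon(X_\ell)$ around the observed labels, writes the probability of the corresponding event $\Ecr$ conditionally on $\tilde{\bm\mu}$, expands the terms $1-\prod_{j}(1-P(a_{i\ell q};\theta_{jq}))^{\delta_{\tilde X_j}(B_\varepsilon(X_\ell))}$ by inclusion--exclusion over auxiliary signs $v_{i\ell q}$, evaluates the resulting expectations via the L\'evy--Khintchine representation of the Laplace functional on the disjoint sets $B_\varepsilon(X_1),\ldots,B_\varepsilon(X_k),\X^*$, and finally takes $\varepsilon\downarrow 0$ after dividing by $k!\prod_\ell P_0(B_\varepsilon(X_\ell))$. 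You instead pass to the driving Poisson point process, mark each atom with its count vector, and invoke the marking and restriction theorems, so that the observed traits form a finite Poisson process whose Janossy structure immediately yields $\pi_n(\bm a)=\tfrac{1}{k!}e^{-\Lambda'}\prod_\ell\tilde\nu(\bm a_{\cdot\ell\cdot})$. Your argument is shorter and more conceptual, and it has the added benefit of making the posterior decomposition of Theorem~\ref{thm:posterior_CRV} nearly immediate (the unobserved traits are the independent restriction to the zero mark); the paper's argument is more self-contained, requiring only the Laplace functional of the \textsc{crv} rather than the marking/restriction machinery, and the same limiting scheme is reused verbatim to derive the posterior Laplace functional. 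The one point you rightly flag --- finiteness of $\Lambda'=\lambda\int(1-\prod_q P(0;\theta_q)^{n_q})\rho_d(\dd\theta_1\ldots\dd\theta_d)$, needed for the restricted process to be a.s.\ finite --- is not an additional hypothesis beyond the paper's: it is exactly the standing assumption (stated in Section~\ref{sec:methodology}) that each subject exhibits finitely many traits almost surely, so citing that assumption closes the gap.
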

\begin{proof}
    Consider the event $(\bm{A} = \bm{a}, K_n = k)$, corresponding to $\bm Z$ displaying $K_n = k$ distinct traits, with associated counts described by $\bm A = \bm{a}$. Though already specified in the definition of $\bm A$, it worth stressing that this event $(\bm{A} = \bm{a}, K_n = k)$ refers to one specific ordering of the traits among all the possible ones, when a uniform distribution on the orderings is assumed. For the sake of exposition, it is convenient to observe that $\bm A = \bm a$ describes the presence and absence of the $k$ observed traits in the subjects. In particular, for each subpopulation $q = 1,\ldots, d$ and each observed trait $\ell = 1,\ldots,k$, define  $\calB_{\ell q} = \{(i,q): a_{i\ell q} > 0, i=1,\ldots,n_q\}$, which might be empty.

    The probability of the event  under study $(\bm{A} = \bm{a}, K_n = k)$ is indicated by $\pi_n (\bm{a})$ and generalizes the well-known notion of partially Exchangeable Partition Probability Function (p\textsc{eppf}) to the setting of trait allocation models, thus we refer to it as partially Exchangeable Trait Probability Function (p\textsc{etpf}).  In order to evaluate $\pi_n (\bm{a})$, we apply a limiting argument, which is based on the evaluation of the joint distribution of $(\bm{A}, K_n)$ and the associated trait labels $X_1,\ldots,X_{K_n}$.
    To this end, we consider $k$ small balls $B_\varepsilon (X_\ell)$ centered at $X_\ell$ with radius $\varepsilon$, where $\varepsilon>0$ is sufficiently small so that all the balls are disjoint. Moreover, we set $\X^* := \X \setminus \cup_{\ell=1}^k B_\varepsilon (X_\ell)$. 
    First, focus on the probability of the event $\Ecr$
\begin{equation}
\label{eq:ECR_DEF}
\begin{split}
     \Ecr = \left\{ \right. & \exists j \geq 1 : \; \tilde{A}_{ijq} = a_{i\ell q} \text{ with } \tilde{X}_j \in B_\varepsilon (X_\ell)  , \; \forall (i,q) \in \calB_{\ell q},\\
     & \qquad\qquad\qquad\qquad\qquad\qquad\qquad q=1,\ldots,d, \;\ell=1,\ldots,k;\\
     &Z_{iq}(B_\varepsilon(X_\ell)) =0, \; \forall (i,q) \not\in \calB_{\ell q}, \;q=1,\ldots,d, \;\ell=1,\ldots,k;\\
     &Z_{iq}(\X^*)=0,\; \forall i=1,\ldots,n_q, \;q=1,\ldots,d \left. \right\},
\end{split}
\end{equation}
that is to say the \textit{infinitesimal} probability of observing $k$ distinct traits with associated counts $\bm A = \bm{a}$ and trait labels belonging to the small balls $B_\varepsilon(X_1), \ldots , B_\varepsilon (X_k)$. 

To evaluate the probability of $\Ecr$, by an application of the tower property, we have
\begin{equation}
    \Pp\left(\Ecr\right) = \E\left[\Pp\left(\Ecr\mid \tilde{\bm\mu}\right)\right] .
\label{eq_prob_xi_initial_generic}
\end{equation}
Define the following three disjoint events, whose union equals $\Ecr$, as 
\begin{align*}
   \Ecr_1 &:= \left\{  \exists j \geq 1 : \; \tilde{A}_{ijq} = a_{i\ell q} \text{ with } \tilde{X}_j \in B_\varepsilon (X_\ell)  , \; \forall (i,q) \in \calB_{\ell q},\; q=1,\ldots,d, \;\ell=1,\ldots,k  \right\},\\
   \Ecr_2 &:=\left\{ Z_{iq}(B_\varepsilon(X_\ell)) =0, \; \forall (i,q) \not\in \calB_{\ell q}, \;q=1,\ldots,d, \;\ell=1,\ldots,k \right\}, \\
     \Ecr_3 &:= \left\{ Z_{iq}(\X^*)=0,\; \forall i=1,\ldots,n_q, \;q=1,\ldots,d  \right\}.
\end{align*}
Conditionally to $\tilde{\bm\mu}$, the randomness of the observations $Z_{iq}$ from model \eqref{eq:mixture_model_allocation_CRV} is only in
the independent random variables $\tilde{A}_{ij q}$. Thus, the three events $\Ecr_1, \Ecr_2, \Ecr_3$ are independent, conditionally to $\tilde{\bm\mu}$, since they relate to independent $\tilde{A}_{ijq}$,  and their probabilities can be evaluated separately as
\begin{equation}
    \Pp\left(\Ecr\right) = \E\left[\Pp\left(\Ecr_1\mid \tilde{\bm\mu}\right) \Pp\left(\Ecr_2\mid \tilde{\bm\mu}\right)
    \Pp\left(\Ecr_3\mid \tilde{\bm\mu}\right)\right] .
\label{eq:prob_ecr123}
\end{equation}
Hereafter, by a slight abuse of notation, we write $i \in \calB_{\ell q}$ to indicate that $(i, q) \in \calB_{\ell q}$. 
For the first conditional probability in \eqref{eq:prob_ecr123}, it is easy to see that
\begin{align*}
    \Pp (\Ecr_1 \mid \tilde{\bm\mu}) 
   & = \prod_{q=1}^d \prod_{\ell=1}^k \prod_{i \in \calB_{\ell q}} \Pp\left(\exists j \geq 1 : \; \tilde{A}_{ijq} = a_{i\ell q} \text{ with } \tilde{X}_j \in B_\varepsilon (X_\ell) \, \mid\, \tilde{\bm\mu} \right) \\
   & = \prod_{q=1}^d \prod_{\ell=1}^k \prod_{i \in \calB_{\ell q}} \left( 1 - \prod_{j\geq1} (1 - P(a_{i\ell q}; \theta_{jq}))^{\delta_{\tilde{X}_j}(B_\varepsilon(X_\ell))} \right) .
\end{align*}
For the second conditional probability in \eqref{eq:prob_ecr123}, we observe that
\begin{align*}
    \Pp (\Ecr_2 \mid \tilde{\bm\mu}) &=
    \prod_{q=1}^d \prod_{\ell=1}^k \prod_{i \not\in \calB_{\ell q}}  \Pp\left(Z_{iq}(B_\varepsilon(X_\ell)) =0 \,\mid \, \tilde{\mu}_q \right) \\
    & =  \prod_{q=1}^d \prod_{\ell=1}^k \prod_{i \not\in \calB_{\ell q}} \prod_{j\geq1}P(0; \theta_{jq})^{\delta_{\tilde{X}_j}(B_\varepsilon(X_\ell))} .
\end{align*}
Analogous considerations hold true for the conditional probability of $\Ecr_3$ in \eqref{eq:prob_ecr123}. Thus, putting everything together, we have
\begin{equation}\label{eq_prob_xi_cond_mutilde_generic}
\begin{aligned}
    \Pp(\Ecr) & = \E \left[  \prod_{q=1}^d \prod_{\ell=1}^k \prod_{i \not\in \calB_{\ell q}}  \prod_{j\geq 1} P(0 ; \theta_{jq})^{\delta_{\tilde{X}_j} (B_\varepsilon (X_\ell))} \right.\\
    & \qquad\qquad \times
    \prod_{q=1}^d \prod_{\ell=1}^k \prod_{i \in \calB_{\ell q}} 
    \left( 1 - \prod_{j\geq1} (1 - P(a_{i\ell q} ; \theta_{jq}))^{\delta_{\tilde{X}_j}(B_\varepsilon(X_\ell))} \right)\\
    & \qquad\qquad \times \left.
    \prod_{q=1}^d  \prod_{j\geq 1} \prod_{i=1}^{n_q}P(0 ; \theta_{jq})^{\delta_{\tilde{X}_j} (\X^*)}
 \right] .
\end{aligned}
\end{equation}
By observing that, for any $\ell=1,\ldots,k$ and for any $q=1,\ldots,d$,
\begin{equation*}
\begin{split}
&\prod_{i \in \calB_{\ell q}} 
    \left( 1 - \prod_{j\geq1} (1 - P(a_{i\ell q} ; \theta_{jq}))^{\delta_{\tilde{X}_j}(B_\varepsilon(X_\ell))} \right)\\
 &   \qquad 
 = \sum_{\substack{v_{i\ell q}\in \{ 0,1\}\\  i \in \calB_{\ell q}}} \prod_{i \in \calB_{\ell q}}(-1)^{v_{i\ell q}} \prod_{j\geq1}(1 - P(a_{i\ell q}; \theta_{jq}))^{v_{i\ell q} \delta_{\tilde{X}_j}(B_\varepsilon(X_\ell))},
 \end{split}
\end{equation*}
the expected value in \eqref{eq_prob_xi_cond_mutilde_generic} boils down to the following sum
\begin{equation} \label{eq:PE_generic}
\begin{aligned}
    \Pp(\Ecr) & =\sum_{\substack{v_{i\ell q}\in \{0,1\} \\ i \in \calB_{\ell q}  
    \\ q=1,\ldots,d\\ \ell=1,\ldots,k} }
    \left(\prod_{q=1}^d \prod_{\ell=1}^k
    \prod_{i \in \calB_{\ell q} } (-1)^{v_{i\ell q}} \right)
    \E \left[  \prod_{q=1}^d \prod_{\ell=1}^k \prod_{i \not\in \calB_{\ell q}} \prod_{j\geq 1} P(0; \theta_{jq})^{ \delta_{\tilde{X}_j}(B_\varepsilon(X_\ell))} \right.\\
    & \qquad\qquad \times \prod_{q=1}^d \prod_{\ell=1}^k \prod_{i \in \calB_{\ell q}} \prod_{j \geq 1} 
  (1 - P(a_{i\ell q}; \theta_{jq}))^{v_{i\ell q} \delta_{\tilde{X}_j}(B_\varepsilon(X_\ell))}\\
  & \qquad\qquad \times\left. 
  \prod_{q=1}^d \prod_{j\geq 1} \prod_{i=1}^{n_q} P(0; \theta_{jq})^{ \delta_{\tilde{X}_j} (\X^*)}
 \right] .
\end{aligned}
\end{equation}
At this point, we need to evaluate the expectation in \eqref{eq:PE_generic}. 
Since the sets $B_\varepsilon (X_1), \ldots , B_{\varepsilon}(X_k)$ and $ \X^*$ are disjoint, the independence property of the \textsc{crv} $\tilde{\bm\mu}$ implies that the expected value in \eqref{eq:PE_generic} equals the following product
\begin{equation} \label{eq:expected_tot_generic}
\prod_{\ell=1}^k E_\ell \times E_* ,
\end{equation}
where we have set:
\begin{align*}
        E_\ell & := \E \left\{  \prod_{j \geq 1 } \prod_{q=1}^d \left[ \prod_{i \not\in \calB_{\ell q}} P(0; \theta_{jq})^{\delta_{\tilde{X}_j}(B_\varepsilon(X_\ell))} \cdot \prod_{i \in \calB_{\ell q}}
  (1 - P(a_{i\ell q}; \theta_{jq}))^{v_{i\ell q} \delta_{\tilde{X}_j}(B_\varepsilon(X_\ell))} \right] \right\},\\
 E_*  & :=  \E \left[   \prod_{q=1}^d \prod_{j\geq 1} \prod_{i=1}^{n_q} P(0; \theta_{jq})^{ \delta_{\tilde{X}_j} (\X^*)}\right].
\end{align*}
We now concentrate on the evaluation of $E_*$ in \eqref{eq:expected_tot_generic}:
\begin{equation} \label{eq:expect_1_generic}
    \begin{split}
        E_* & =  \E \left[   \prod_{q=1}^d \prod_{j\geq 1} \prod_{i=1}^{n_q} P(0; \theta_{jq})^{ \delta_{\tilde{X}_j} (\X^*)}\right]\\
         & = \E  \left[ \exp \left\{ \sum_{j \geq 1} \log \left(\prod_{q=1}^d \prod_{i=1}^{n_q} P(0; \theta_{jq})\right) \delta_{\tilde{X}_j} (\X^*)\right\}\right]\\
         &  = \exp \left\{  - \int_{(0,\infty)^d \times \X^*}  \left( 1-\prod_{q=1}^d \prod_{i=1}^{n_q} P(0; \theta_q) \right)    \rho_d (\dd \theta_1\, \ldots \,\dd \theta_d)   \, \lambda P_0 (\dd x)  \right\},
    \end{split}
\end{equation}
where we exploit the available expression of the Laplace functional for the \textsc{crv}s (or analogously for Poisson processes).
As for the general term $E_\ell$ in \eqref{eq:expected_tot_generic}, the L\'evy-Khintchine representation implies again
\begin{equation*} 
    \begin{aligned}
    E_\ell & = \E \left[ \exp \left\{  \sum_{j \geq 1 }\log \left( \prod_{q=1}^d \left[ \prod_{i \not\in \calB_{\ell q}} P(0; \theta_{jq})
  \prod_{i \in \calB_{\ell q}}
  (1 - P(a_{i\ell q}; \theta_{jq}))^{v_{i\ell q}  } \right]  \right)  \delta_{\tilde{X}_j}(B_\varepsilon(X_\ell))\right\} \right] \\
  &  = \exp\left\{ - \lambda P_0 (B_\varepsilon (X_\ell))   \int_{(0,\infty)^d}  \left( 1-\prod_{q=1}^d \left[ \prod_{i \not\in \calB_{\ell q}} P(0; \theta_q) \prod_{i \in \calB_{\ell q}}(1-P(a_{i\ell q}; \theta_q))^{v_{i\ell q} } \right] 
  \right) \rho_d (\dd \theta_1\, \ldots \, \dd \theta_d)  \right\}.
    \end{aligned}
\end{equation*}
By defining
\[
I_{\ell, v_{i\ell q}} :=   \int_{(0,\infty)^d}  \left( 1-\prod_{q=1}^d \left[ \prod_{i \not\in \calB_{\ell q}} P(0; \theta_q) \prod_{i \in \calB_{\ell q}}(1-P(a_{i\ell q}; \theta_q))^{v_{i\ell q} } \right] 
  \right) \rho_d (\dd \theta_1\, \ldots \, \dd \theta_d) ,
\]
we get
\begin{equation} \label{eq:expected_2_generic}
E_\ell = \exp \left\{ -\lambda P_0 (B_\varepsilon (X_\ell)) I_{\ell, v_{i\ell q}} \right\}.
\end{equation}
We now substitute the expressions \eqref{eq:expect_1_generic}--\eqref{eq:expected_2_generic} in \eqref{eq:expected_tot_generic} to evaluate the expected value in \eqref{eq:PE_generic}, which equals
\begin{equation*} \label{eq:expected_tot_evaluated_generic}
    \begin{split}
   & \prod_{\ell=1}^k \exp\left\{-\lambda P_0 (B_\varepsilon (X_\ell)) I_{\ell, v_{i\ell q}}  \right\} \times 
         \exp \left\{  -\lambda P_0(\X^*)  \int_{(0,\infty)^d}  \left( 1-\prod_{q=1}^d \prod_{i=1}^{n_q} P(0; \theta_q) \right)    \rho_d (\dd \theta_1\, \ldots \,\dd \theta_d)   \right\}\\
         & \qquad = \exp \left\{  -\lambda P_0(\X^*)  \int_{(0,\infty)^d}  \left( 1-\prod_{q=1}^d \prod_{i=1}^{n_q} P(0; \theta_q) \right)    \rho_d (\dd \theta_1\, \ldots \,\dd \theta_d)  \right\}\\
         & \qquad\qquad \times \prod_{\ell=1}^k \left( 1  -\lambda P_0 (B_\varepsilon (X_\ell)) I_{\ell, v_{i\ell q}}   +  o (P_0 (B_\varepsilon (X_\ell)) ) \right).
    \end{split}
\end{equation*}
We finally substitute the previous expression in \eqref{eq:PE_generic} to get the following final expression for the probability of the event $\Ecr$,
\begin{equation*}
    \begin{aligned}
    \Pp(\Ecr) & =\sum_{\substack{v_{i\ell q}\in \{0,1\} \\ i \in \calB_{\ell q}  
    \\ q=1,\ldots,d\\ \ell=1,\ldots,k} } \left\{
    \left(\prod_{q=1}^d \prod_{\ell=1}^k
    \prod_{i \in \calB_{\ell q} } (-1)^{v_{i\ell q}} \right) 
    \prod_{\ell=1}^k \Big( 1  -\lambda P_0 (B_\varepsilon (X_\ell)) I_{\ell, v_{i\ell q}}   +  o (P_0 (B_\varepsilon (X_\ell)) ) \Big)
    \right\}\\
  &  \qquad \qquad\times
   \exp \left\{  -\lambda P_0(\X^*)  \int_{(0,\infty)^d}  \left( 1-\prod_{q=1}^d \prod_{i=1}^{n_q} P(0; \theta_q) \right)    \rho_d (\dd \theta_1\, \ldots \,\dd \theta_d)  \right\}.
\end{aligned}
\end{equation*}
Handling the last expression, we focus on the sum over the $v_{i\ell q}$ and we get
\begin{equation*}
    \begin{aligned}
    \Pp(\Ecr) & = \exp \left\{  -\lambda P_0(\X^*)  \int_{(0,\infty)^d}  \left( 1-\prod_{q=1}^d \prod_{i=1}^{n_q} P(0; \theta_q) \right)    \rho_d (\dd \theta_1\, \ldots \,\dd \theta_d)  \right\} \\
    & \qquad \times  \prod_{\ell=1}^k \Big( \; \sum_{\substack{v_{iq}\in \{0,1\} \\ i \in \calB_{\ell q}  
    \\ q=1,\ldots,d} }
  \Big(  \prod_{q=1}^d 
    \prod_{i \in \calB_{\ell q} } (-1)^{v_{iq}}  \Big)( 1  -\lambda P_0 (B_\varepsilon (X_\ell)) I_{\ell, v_{iq}}   +  o (P_0 (B_\varepsilon (X_\ell)) ) ) \Big)\\
   & = \exp \left\{  -\lambda P_0(\X^*)  \int_{(0,\infty)^d}  \left( 1-\prod_{q=1}^d \prod_{i=1}^{n_q} P(0; \theta_q) \right)    \rho_d (\dd \theta_1\, \ldots \,\dd \theta_d)  \right\} \\
    & \qquad \times  \prod_{\ell=1}^k \Big(\;  \sum_{\substack{v_{iq}\in \{0,1\} \\ i \in \calB_{\ell q}  
    \\ q=1,\ldots,d} }
  \Big(  \prod_{q=1}^d 
    \prod_{i \in \calB_{\ell q} } (-1)^{v_{iq}}  \Big)( -\lambda P_0 (B_\varepsilon (X_\ell)) I_{\ell, v_{iq}}   +  o (P_0 (B_\varepsilon (X_\ell)) ) ) \Big)\\
\end{aligned}
\end{equation*}
where we use the fact that
\[
 \sum_{\substack{v_{iq}\in \{0,1\} \\ i \in \calB_{\ell q}  
    \\ q=1,\ldots,d} }
  \Big(  \prod_{q=1}^d 
    \prod_{i \in \calB_{\ell q} } (-1)^{v_{iq}}  \Big) = 0.
\]
By exploiting the definition of $ I_{\ell, v_{iq}}$ and applying similar arguments as above, we obtain
\begin{equation*}
  \begin{split}
    \Pp(\Ecr)  & =\exp \left\{  -\lambda P_0(\X^*)  \int_{(0,\infty)^d}  \left( 1-\prod_{q=1}^d \prod_{i=1}^{n_q} P(0; \theta_q) \right)    \rho_d (\dd \theta_1\, \ldots \,\dd \theta_d)  \right\}\\
     & \quad \times  \prod_{\ell=1}^k \left\{ \lambda P_0 (B_\varepsilon (X_\ell)) \int_{(0,\infty)^d}  \prod_{q=1}^d  \Big(  \prod_{i \not\in \calB_{\ell q}} P(0; \theta_q)  \sum_{\substack{v_{i}\in \{0,1\} \\ i \in \calB_{\ell q} } } \prod_{i \in \calB_{\ell q}}(P(a_{i\ell q}; \theta_q) -1)^{v_{i} } \Big)
    \rho_d (\dd \theta_1\, \ldots \,\dd \theta_d) \right\}\\
    &\quad +o \Big( \prod_{\ell=1}^k P_0 (B_\varepsilon (X_\ell)) \Big).
\end{split}
\end{equation*}
It is now easy to solve the summation over the $v_i$ to get the following expression:
\begin{equation*}
  \begin{split}
    \Pp(\Ecr)  & = \exp \left\{  -\lambda P_0(\X^*)  \int_{(0,\infty)^d}  \left( 1-\prod_{q=1}^d \prod_{i=1}^{n_q} P(0; \theta_q) \right)    \rho_d (\dd \theta_1\, \ldots \,\dd \theta_d)  \right\} \\
    & \quad \times  \prod_{\ell=1}^k \left\{ \lambda P_0 (B_\varepsilon (X_\ell)) \int_{(0,\infty)^d}  \prod_{q=1}^d  \Big(  \prod_{i \not\in \calB_{\ell q}} P(0; \theta_q)   \prod_{i \in \calB_{\ell q}} P(a_{i\ell q}; \theta_q) \Big)
    \rho_d (\dd \theta_1\, \ldots \, \dd \theta_d) \right\}  \\
  & \quad  +o \Big( \prod_{\ell=1}^k P_0 (B_\varepsilon (X_\ell)) \Big).
\end{split}
\end{equation*}
Finally, to obtain the targeted probability $\pi_n (\bm{a})$, namely the p\textsc{etpf} of model \eqref{eq:mixture_model_allocation_CRV}, we need to link it with the just computed probability of the event $\Ecr$. For sufficiently small $\varepsilon$, $B_\varepsilon(X_1), \ldots, B_\varepsilon(X_k)$ are disjoint, thus it holds
\begin{equation*}
    \Pp(\Ecr) = k! \prod_{\ell=1}^k P_0(B_\varepsilon(X_{\ell})) \cdot \pi_n (\bm{a}),
\end{equation*}
where $k!$ discounts for the specific ordering of the traits which is implicit in $\pi_n (\bm{a})$, as detailed in the initial comments of the proof.
Then,
\begin{equation*}
\begin{aligned}
    \pi_n (\bm{a}) &= \lim_{\varepsilon \to 0}{\frac{\Pp(\Ecr)}{k! \prod_{\ell=1}^k P_0(B_\varepsilon(X_{\ell}))} } \\
    &= \frac{{\lambda}^k}{k!}  \exp \left\{  -\lambda  \int_{(0,\infty)^d}  \left( 1-\prod_{q=1}^d \prod_{i=1}^{n_q} P(0; \theta_q) \right)    \rho_d (\dd \theta_1\, \ldots \,\dd \theta_d)  \right\} \\
     & \qquad \times  \prod_{\ell=1}^k  \int_{(0,\infty)^d}  \prod_{q=1}^d  \Big(  \prod_{i \not\in \calB_{\ell q}} P(0; \theta_q)   \prod_{i \in \calB_{\ell q}} P(a_{i\ell q}; \theta_q) \Big)
    \rho_d (\dd \theta_1\, \ldots \, \dd \theta_d), 
\end{aligned}
\end{equation*}
which can be rewritten as in the statement of the proposition.
\end{proof}

Secondly, we move to the characterization of the posterior distribution of $\tilde{\bm{\mu}}$ in~\eqref{eq:mixture_model_allocation_CRV}, conditionally to a sample $\bm{Z}$. 
\begin{theorem}[Posterior distribution]
    \label{thm:posterior_CRV}
       Let $\bm{Z}$ be a sample from the statistical model~\eqref{eq:mixture_model_allocation_CRV}. If $\bm{Z}$ displays $K_n = k$ distinct traits labeled $X_1, \ldots , X_k$, with associated counts $\bm{a}$, then the posterior distribution of 
       $\tilde{\bm\mu}$ satisfies the distributional equality        \begin{equation}
           \label{eq:posterior_CRV}
            (\tilde{\mu}_1, \ldots, \tilde{\mu}_d) \mid \bm{Z} \stackrel{d}{=}  (\mu_1^*, \ldots, \mu_d^*) +  (\mu_1', \ldots, \mu_d'),
       \end{equation}
       where $\bm{\mu}^*:= (\mu_1^*, \ldots, \mu_d^*)$ and $\bm{ \mu}':=(\mu_1', \ldots, \mu_d')$ are independent random vectors such that
       \begin{itemize}
           \item[(i)] the components of the vector $\bm{\mu}^*$ are defined as
           $ \mu_q^*(\cdot) = \sum_{\ell=1}^k \theta_{\ell q}^* \delta_{X_\ell}(\cdot)$, for $ q=1, \ldots , d
           $, and the random vectors $(\theta_{\ell 1}^* , \ldots , \theta_{\ell d}^*)$ are independent across $\ell=1, \ldots , k$, with distribution 
           \begin{equation}
           \label{eq:fell}
               H_{\ell q} (\dd \theta_1\, \ldots \,\dd \theta_d) \propto  \prod_{q=1}^d \prod_{i=1}^{n_q}  P(a_{i\ell q}; \theta_q) \cdot \rho_d (\dd \theta_1\, \ldots \, \dd \theta_d);
           \end{equation}
           \item[(ii)] the process $(\mu_1^\prime, \ldots, \mu_d^\prime)$ is a $\textsc{crv}(\rho_d^\prime, \lambda, P_0)$, with
           \[
           \rho_d^\prime (\dd \theta_1\, \ldots \,\dd \theta_d) =  \prod_{q=1}^d \prod_{i=1}^{n_q} P(0; \theta_q ) \cdot \rho_d (\dd \theta_1\, \ldots \,\dd \theta_d).
           \] 
       \end{itemize}
\end{theorem}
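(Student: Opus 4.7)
The plan is to extend the limiting argument used in the proof of Theorem \ref{thm:marginal_CRV} by computing the joint Laplace functional of $\tilde{\bm{\mu}}$ with the infinitesimal event $\Ecr$ defined in \eqref{eq:ECR_DEF}, and then dividing by $\Pp(\Ecr)$ to obtain the posterior Laplace functional of $\tilde{\bm{\mu}} \mid \bm{Z}$ in the limit $\varepsilon \to 0$. Concretely, for arbitrary measurable functions $g_1, \ldots, g_d : \X \to (0, \infty)$, I would evaluate
\[
T_\varepsilon := \E\left[\exp\left\{-\sum_{q=1}^d \int_\X g_q(x)\, \tilde{\mu}_q(\dd x)\right\} \indicator_{\Ecr}\right],
\]
and show that $T_\varepsilon / \Pp(\Ecr)$ converges to the product of the Laplace functionals of $\bm{\mu}^*$ and $\bm{\mu}'$ described in the statement. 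By the injectivity of the Laplace transform, this identifies the posterior distribution of $\tilde{\bm{\mu}}$ as in \eqref{eq:posterior_CRV}.

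The first key step is to exploit the disjointness of $B_\varepsilon(X_1), \ldots, B_\varepsilon(X_k), \X^*$ together with the independent-increments property of a \textsc{crv}. This allows $T_\varepsilon$ to factorize into $k+1$ mutually independent contributions, one local factor for each $B_\varepsilon(X_\ell)$ and one diffuse factor for $\X^*$, exactly mirroring the decomposition \eqref{eq:expected_tot_generic} that appears in the proof of Theorem \ref{thm:marginal_CRV}, only with the extra multiplicative weight $\exp\{-\sum_q g_q \theta_q\}$ carried inside each factor.

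For the diffuse factor on $\X^*$, the L\'evy--Khintchine formula yields
\[
\exp\left\{-\lambda P_0(\X^*) \int_{(0,\infty)^d}\left(1 - e^{-\sum_{q} g_q(x)\theta_q} \prod_{q=1}^d P(0;\theta_q)^{n_q}\right) \rho_d(\dd\theta_1 \cdots \dd\theta_d)\right\}.
\]
Dividing by the corresponding factor in $\Pp(\Ecr)$ (which is recovered by setting all $g_q \equiv 0$) and letting $\varepsilon \to 0$ so that $\X^* \uparrow \X$, the ratio converges to the Laplace functional of a $\textsc{crv}(\rho_d', \lambda, P_0)$ with $\rho_d'(\dd\bm{\theta}) = \prod_{q=1}^d P(0;\theta_q)^{n_q} \rho_d(\dd\bm{\theta})$, establishing part (ii). For each local factor around $B_\varepsilon(X_\ell)$, the computation in the proof of Theorem \ref{thm:marginal_CRV} shows that the leading-order contribution (of order $\lambda P_0(B_\varepsilon(X_\ell))$) equals, once the Laplace weight is inserted, an integral whose integrand is $\exp\{-\sum_q g_q(X_\ell)\theta_q\} \prod_{q=1}^d \prod_{i=1}^{n_q} P(a_{i\ell q};\theta_q)$ against $\rho_d$. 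Dividing by the matching leading-order factor in $\Pp(\Ecr)$, the normalizing integral is precisely the one defining $H_\ell$ in \eqref{eq:fell}, so the ratio becomes the Laplace transform of a random vector $(\theta_{\ell 1}^*, \ldots, \theta_{\ell d}^*) \sim H_\ell$. Independence across $\ell$ follows automatically from the disjointness of the balls.

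The main obstacle I expect is the careful bookkeeping of the $P_0(B_\varepsilon(X_\ell))$ expansions: one must verify that the remainders $o(\prod_\ell P_0(B_\varepsilon(X_\ell)))$ in both numerator and denominator vanish at the correct rate, so that in the limit the Laplace weights from the diffuse component remain cleanly separated from those at the fixed atoms $X_\ell$, and the factorization of the full posterior Laplace functional as $\calL_{\bm{\mu}^*}(g_1,\ldots,g_d) \cdot \calL_{\bm{\mu}'}(g_1,\ldots,g_d)$ is preserved. Once this is in place, the identification of $\bm{\mu}^* = \sum_\ell (\theta_{\ell 1}^*,\ldots,\theta_{\ell d}^*)\delta_{X_\ell}$ with the vectors distributed as in \eqref{eq:fell} and $\bm{\mu}' \sim \textsc{crv}(\rho_d', \lambda, P_0)$ independent of $\bm{\mu}^*$ follows directly, completing the proof.
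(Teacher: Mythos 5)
Your proposal is correct and follows essentially the same route as the paper's proof: the paper likewise evaluates $\lim_{\varepsilon\downarrow 0}\Pp(\Ecr)^{-1}\,\E[\exp\{-\sum_q\int g_q\,\dd\tilde\mu_q\}\indicator_{\Ecr}]$, factorizes over the disjoint balls $B_\varepsilon(X_\ell)$ and $\X^*$ using the independence property of the \textsc{crv}, and identifies the limiting Laplace functional as the product $\calL_{\bm\mu^*}\cdot\calL_{\bm\mu'}$ with exactly the normalizations you describe. The bookkeeping of the $o(\prod_\ell P_0(B_\varepsilon(X_\ell)))$ remainders that you flag as the main obstacle is handled in the paper precisely as you anticipate, by reusing the leading-order expansion already established in the proof of the marginal distribution theorem.
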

\begin{proof}
In order to characterize the posterior distribution of $\tilde{\bm{\mu}} = (\tilde{\mu}_1,\ldots,\tilde{\mu}_d)$, we show that its Laplace functional coincides with the one of the sum on the right-hand side of \eqref{eq:posterior_CRV}. To this end, consider $d$ measurable functions $g_q: \X  \to (0,\infty)$, as $q=1, \ldots , d$ and  focus on the Laplace functional of the vector $\tilde{\bm{\mu}}$, conditionally to $\bm Z$, 
\begin{equation*}
    \calL_{\tilde{\bm{\mu}}\mid \bm Z}(g_1,\ldots,g_d) = \E\left[\exp \left\{- \sum_{q=1}^d \int_\X g_q (x) \tilde{\mu}_q (\dd x)  \right\}  \,\mid \bm Z \right].
\end{equation*}
This conditional Laplace functional may be evaluated as 
\begin{equation}\label{eq_post_laplace_def_generic}
  \calL_{\tilde{\bm{\mu}}\mid \bm Z}(g_1,\ldots,g_d) = \lim_{\varepsilon\, \downarrow\, 0}
  \frac{1 }{\Pp(\Ecr)}\E\left[ \exp \left\{- \sum_{q=1}^d \int_\X g_q (x) \tilde{\mu}_q (\dd x)  \right\} \cdot \indicator_\Ecr \right], 
\end{equation}
where $\Ecr$ is the event defined in \eqref{eq:ECR_DEF}, for proof of Theorem \ref{thm:marginal_CRV}, which depends on  $\varepsilon$, and $\indicator_\Ecr$ denotes the corresponding indicator function. The denominator in \eqref{eq_post_laplace_def_generic} has already been computed in the proof of Theorem \ref{thm:marginal_CRV}; a similar argument may be applied to find an expression for the expected value in \eqref{eq_post_laplace_def_generic}. 
First of all, the tower property of the conditional expectation implies
\begin{equation*}
\begin{split}
    & \E\left[ \exp \left\{- \sum_{q=1}^d \int_\X g_q (x) \tilde{\mu}_q (\dd x)  \right\} \cdot \indicator_\Ecr \right] =  \E\left[ \E \left[ \exp \left\{- \sum_{q=1}^d \int_\X g_q (x) \tilde{\mu}_q (\dd x)  \right\} \cdot \indicator_\Ecr \,\mid\, \tilde{\bm\mu} \right]\right] \\
    &\qquad = \E\left[ \exp \left\{- \sum_{q=1}^d \int_\X g_q (x) \tilde{\mu}_q (\dd x)  \right\} \cdot \Pp(\Ecr\mid \tilde{\bm\mu}) \right].
 \end{split}
\end{equation*}
The probability $\Pp(\Ecr\mid \tilde{\bm\mu})$ has already been computed in the proof of Theorem \ref{thm:marginal_CRV}, while the exponential term can be explicitly written as a function of the a.s. discrete random measures $\tilde{\mu}_q$, which leads to
\begin{equation*}
    \begin{split}
       & \E\left[ \exp \left\{- \sum_{q=1}^d \int_\X g_q (x) \tilde{\mu}_q (\dd x)  \right\} \cdot \indicator_\Ecr \right] \\
       & \qquad = \E \left[ \prod_{q=1}^d  \prod_{j\geq 1}   e^{-\theta_{jq} g_q (\tilde{X}_j)}  \cdot \prod_{q=1}^d  \prod_{\ell = 1}^k
    \prod_{i \not \in  \calB_{\ell q} } \prod_{j\geq 1}   P(0; \theta_{jq})^{ \delta_{\tilde{X}_j} (B_\varepsilon (X_\ell))}  \right.\\
    & \left.\quad\quad \qquad \times \prod_{q=1}^d   \prod_{\ell = 1}^k  \prod_{i \in \calB_{\ell q}} \Big( 1- \prod_{j\geq 1} (1-P(a_{i\ell q}; \theta_{jq}))^{\delta_{\tilde{X}_j} (B_\varepsilon (X_\ell))}   \Big) \cdot \prod_{q=1}^d \prod_{j\geq 1} \prod_{i=1}^{n_q}  
    P(0; \theta_{jq})^{\delta_{\tilde{X}_j} (\X^*)}   \right].  
    \end{split}
\end{equation*}
Along similar lines as in  the proof of Theorem \ref{thm:marginal_CRV}, we end up with:
\begin{equation*}
\begin{split}
   & \E\left[ \exp \left\{- \sum_{q=1}^d \int_\X g_q (x) \tilde{\mu}_q (\dd x)  \right\} \cdot \indicator_\Ecr \right] \\
    & \qquad =
   \exp \left\{  -\int_{\X^*}  \int_{(0,\infty)^d}  \left( 1-\prod_{q=1}^d \left( e^{-\theta_q g_q(x)} \prod_{i=1}^{n_q}P(0; \theta_q)\right)  \right)    \rho_d (\dd \theta_1\, \ldots \,\dd \theta_d)  \,  \lambda P_0 (\dd x)  \right\} \\
    &\quad \qquad \times  \prod_{\ell=1}^k \left\{ \int_{B_\varepsilon (X_\ell)} \int_{(0,\infty)^d}  \prod_{q=1}^d  \Big( e^{-\theta_q g_q(x)} \prod_{i \not\in \calB_{\ell q}} P(0; \theta_q) \right.  \\
  & \qquad\qquad\qquad\times \left.  \prod_{i \in \calB_{\ell q}} P(a_{i\ell q}; \theta_q) \Big)
    \rho_d (\dd \theta_1\, \ldots \,\dd \theta_d) \, \lambda P_0(\dd x) \right\}  +o \Big( \prod_{\ell=1}^k P_0 (B_\varepsilon (X_\ell)) \Big).
\end{split}
\end{equation*}
By using the just derived  expression and $\Pp (\Ecr)$, available in the proof of Theorem \ref{thm:marginal_CRV}, we can compute the limit in \eqref{eq_post_laplace_def_generic}. Since $P_0$ is non-atomic, standard limiting arguments lead to
\begin{equation}\label{eq_post_laplace_functional_generic}
\begin{split}
      & \calL_{\tilde{\bm{\mu}}\mid \bm Z}(g_1,\ldots,g_d) \\
       &\qquad = \exp \left\{  -  \int_{(0,\infty)^d \times \X}  \left( 1-\prod_{q=1}^d  e^{-\theta_q g_q(x)} \right) 
       \rho_d^\prime (\dd \theta_1\, \ldots \, \dd \theta_d) \, \lambda P_0 (\dd x)  \right\}\\
       & \qquad \quad \times \prod_{\ell=1}^k 
            \int_{(0,\infty)^d}  \prod_{q=1}^d   e^{-\theta_q g_q(X_\ell)} \cdot H_{\ell q} (\dd \theta_1\, \ldots \,\dd \theta_d),
\end{split}
\end{equation}
where $\rho_d^\prime (\cdot)$ and $H_{\ell q}(\cdot)$ have been defined in the statement of the present theorem.
First, the exponential term in \eqref{eq_post_laplace_functional_generic} corresponds to the Laplace functional of a \textsc{crv}, say $\bm \mu^\prime$, having L\'evy intensity measure given by
\[
\nu^\prime (\dd \theta_1 \, \ldots\, \dd \theta_d \, \dd x) =
\prod_{q=1}^d \prod_{i=1}^{n_q} P(0; \theta_q)\cdot \rho_d (\dd \theta_1\, \ldots \, \dd \theta_d)   \, \lambda P_0 (\dd x).
\]
Second, the product over $\ell=1, \ldots , k$ in  \eqref{eq_post_laplace_functional_generic} corresponds 
to the Laplace functional of the vector of random measures
\[
\bm \mu^* (\cdot) = \sum_{\ell=1}^k  (\theta_{\ell 1}^*, \ldots , \theta_{\ell d}^*) \, \delta_{X_\ell} (\cdot),
\]
where the vectors $(\theta_{\ell 1}^*, \ldots , \theta_{\ell d}^*)$ are independent across $\ell =1, \ldots , k$, with distribution $H_{\ell q}(\cdot)$.
Thus, the thesis follows.
\end{proof}


For completeness, we finally provide the predictive distribution of a vector of new observations $(Z_{(n_1+1)1}, \ldots , Z_{(n_d+1)d})$, conditionally to $\bm{Z}$. 
\begin{theorem}[Predictive distribution]
    \label{thm:predictive_CRV}
    Let $\bm{Z}$ be a sample from the statistical model~\eqref{eq:mixture_model_allocation_CRV}. If $\bm{Z}$ displays $K_n = k$ distinct traits labeled $X_1, \ldots , X_k$, with associated counts $\bm{a}$, then the predictive distribution of 
    $(Z_{(n_1+1)1}, \ldots , Z_{(n_d+1)d})$ satisfies the distributional equality
    \begin{equation}
        \label{eq:predictive_CRV}
        (Z_{(n_1+1)1}, \ldots , Z_{(n_d+1)d}) \mid \bm{Z} \stackrel{d}{=} (Z_{(n_1+1)1}^*, \ldots , Z_{(n_d+1)d}^*) + (Z_{(n_1+1)1}^\prime, \ldots , Z_{(n_d+1)d}^\prime),
    \end{equation}
    where the vectors on the right-hand side are independent, and in addition:
    \begin{itemize}
        \item[(i)] the generic component $Z_{(n_q+1)q}^*$ is defined as $Z_{(n_q+1)q}^* (\cdot)= \sum_{\ell =1}^{k} A^*_{(n_q+1)\ell q} \delta_{X_\ell}(\cdot)$, for $q=1, \ldots, d$, and the random variables $A^*_{(n_q+1)\ell q}$ are independent with distribution $P(\dd a ; \theta_{\ell q}^*)$, where the $\theta_{\ell q}^*$ are given in point (i) of Theorem \ref{thm:posterior_CRV};
        \item[(ii)] the generic component $Z_{(n_q+1)q}^\prime$ is defined as $Z_{(n_q+1)q}^\prime \mid \bm{\mu}^\prime \ind \CP (\mu_q^\prime)$, for $q=1, \ldots , d$, where $\bm{\mu}'$ is the \textsc{fcrv} defined in point (ii) of Theorem \ref{thm:posterior_CRV}.
    \end{itemize}
\end{theorem}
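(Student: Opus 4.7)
The plan is to obtain the predictive distribution by first conditioning on $\tilde{\bm{\mu}}$ and then integrating against its posterior law, which was fully characterized in Theorem~\ref{thm:posterior_CRV}. By the sampling model~\eqref{eq:mixture_model_allocation_CRV}, conditionally on $\tilde{\bm{\mu}}$ the new observations satisfy $Z_{(n_q+1)q} \mid \tilde{\mu}_q \ind \CP(\tilde{\mu}_q)$ for $q=1,\ldots,d$, and this is preserved under the further conditioning on $\bm Z$ because $Z_{(n_q+1)q}$ is independent of $\bm Z$ given $\tilde{\mu}_q$. Thus the predictive law of the vector is the mixture of products of $\CP(\tilde{\mu}_q)$'s over the posterior of $\tilde{\bm{\mu}} \mid \bm{Z}$.

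Next I would plug in the posterior decomposition $\tilde{\mu}_q \stackrel{d}{=} \mu_q^* + \mu_q^\prime$ from Theorem~\ref{thm:posterior_CRV}, where $\bm{\mu}^*$ and $\bm{\mu}^\prime$ are independent. The key probabilistic fact I would invoke is the additivity of the process of counts: if $\mu = \mu_A + \mu_B$ with $\mu_A$ and $\mu_B$ supported on disjoint sets, then a draw from $\CP(\mu)$ decomposes as the sum of two independent draws from $\CP(\mu_A)$ and $\CP(\mu_B)$, respectively. This is immediate from the definition \eqref{eq:Zi}--\eqref{eq:cond_prob_counts}, since the counts $\tilde A_{(n_q+1)jq}$ attached to different atoms are conditionally independent. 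In our case, $\mu_q^*$ is almost surely supported on the observed labels $\{X_1,\ldots,X_k\}$, whereas $\mu_q^\prime$, being a \textsc{crv} driven by the non-atomic base measure $P_0$, places its atoms almost surely away from $X_1,\ldots,X_k$. Therefore, conditionally on $(\bm{\mu}^*,\bm{\mu}^\prime)$,
\begin{equation*}
Z_{(n_q+1)q} \stackrel{d}{=} Z^*_{(n_q+1)q} + Z^\prime_{(n_q+1)q},
\end{equation*}
with $Z^*_{(n_q+1)q} \mid \bm{\mu}^* \ind \CP(\mu_q^*)$ and $Z^\prime_{(n_q+1)q} \mid \bm{\mu}^\prime \ind \CP(\mu_q^\prime)$, and moreover these two families are conditionally independent given $(\bm{\mu}^*,\bm{\mu}^\prime)$.

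To finish, I would identify the two components with the expressions in (i) and (ii). For (i), since $\mu_q^*(\cdot)=\sum_{\ell=1}^k \theta_{\ell q}^*\delta_{X_\ell}(\cdot)$, the definition of $\CP$ immediately gives $Z^*_{(n_q+1)q}(\cdot)=\sum_{\ell=1}^k A^*_{(n_q+1)\ell q}\delta_{X_\ell}(\cdot)$ with $A^*_{(n_q+1)\ell q}\mid \theta_{\ell q}^* \ind P(\cdot;\theta_{\ell q}^*)$; marginalizing $\theta^*_{\ell q}$ with the law $H_{\ell q}$ from Theorem~\ref{thm:posterior_CRV}(i) yields the statement. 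For (ii), the statement is just a restatement of $Z^\prime_{(n_q+1)q}\mid \bm{\mu}^\prime \ind \CP(\mu_q^\prime)$, with $\bm{\mu}^\prime$ being the posterior \textsc{crv} component. The unconditional independence between $(Z^*_{(n_q+1)q})_{q=1}^d$ and $(Z^\prime_{(n_q+1)q})_{q=1}^d$ follows from the independence between $\bm{\mu}^*$ and $\bm{\mu}^\prime$ together with the conditional independence just established.

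The main obstacle is the rigorous justification of the superposition/decomposition step: one has to verify that a $\CP$ driven by the sum of two independent random measures with almost surely disjoint supports splits as an independent sum of two $\CP$'s. This is essentially a measurability and disjoint-support argument exploiting the non-atomicity of $P_0$ (ensuring the atoms of $\bm{\mu}^\prime$ avoid the observed labels $X_1,\dots,X_k$ almost surely) and the conditional independence of the latent counts $\tilde A_{ijq}$ across $j$. Once this step is in place, the remainder of the proof is a direct bookkeeping exercise and can be made transparent by noting that the argument parallels the one used in the proof of Theorem~\ref{thm:posterior_CRV}, but for the predictive rather than the posterior functional.
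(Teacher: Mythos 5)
Your argument is correct and is essentially the detailed version of the paper's own (one-line) proof, which simply invokes Bayes' formula together with Theorem~\ref{thm:posterior_CRV}: you condition on the posterior of $\tilde{\bm{\mu}}$, apply its decomposition into $\bm{\mu}^*+\bm{\mu}'$, and use the disjoint-support splitting of the $\CP$ (guaranteed by the non-atomicity of $P_0$) to separate the observed-trait and unseen-trait components. No gaps; your superposition step is the right justification and matches the intended route.
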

\begin{proof}
The thesis follows by a straightforward application of Bayes formula and Theorem \ref{thm:posterior_CRV}.
\end{proof}

To conclude this section, we provide some details about how Theorems \ref{thm:marginal_fCRV}-\ref{thm:posterior_fCRV} follow as simple corollaries of Theorems \ref{thm:marginal_CRV}-\ref{thm:posterior_CRV}, respectively. Indeed, as already commented in Section \ref{app:fCRV}, the \textsc{fcrv}s analyzed in the main paper are special examples of \textsc{crv}s such that $\rho_d = H^{(d)}$, where $H^{(d)}(\cdot)$ is a probability distribution on $(0,\infty)^d$. In the particular case of Theorems \ref{thm:marginal_fCRV}-\ref{thm:posterior_fCRV}, we focus on the special choice where $H^{(d)}(\cdot\,;\psi) = H(\cdot\,;\psi) \times \cdots \times H(\cdot\,; \psi)$, with $H(\cdot\,; \psi)$ any probability distribution on $(0,\infty)$.
Thus, Theorems \ref{thm:marginal_fCRV}-\ref{thm:posterior_fCRV} follow by specializing Theorems \ref{thm:marginal_CRV}-\ref{thm:posterior_CRV} to this situation.

\begin{remark}
In addition to the example of \textsc{fcrv}s discussed in the main paper, there are many other tractable classes of \textsc{crv}s that can be analyzed based on the general theory of the present section, and that are of potential interest in applications. For example, Theorems \ref{thm:marginal_CRV}-\ref{thm:posterior_CRV}-\ref{thm:predictive_CRV} may be applied to obtain marginal, posterior, and predictive distributions of additive \textsc{crm}s introduced by \cite{Nip(14)}. Another example of interest that can be addressed with the provided theory relates to compound random measures \citep{Gri(17)}.
\end{remark}

\subsection{Extension of the modeling framework to general parameter space}\label{app:ext_genereral_space}


As highlighted in Remark \ref{rmk:single_parameter_count}, throughout the paper we assume that the parametric distribution $P(\cdot\,; \theta)$ is governed by a single positive parameter $\theta > 0$. This is related to the use of the technical tool of \textsc{crv}s for the general proofs in Section \ref{app:distribution_theory_CRV}. However, this assumption can be relaxed with basically no additional cost, allowing for a general parameter space $\mathbb{S}$, just by moving from \textsc{crv}s to Poisson processes. To be rigorous, this shift only calls for a slight change of notation, which we describe next. Starting from the exchangeable trait allocation model expressed by \eqref{eq:Zi}-\eqref{eq:cond_prob_counts}-\eqref{eq:mui_def}, it is here convenient to organize the parameters $(\theta_j)_{j \ge 1}$ in a point process $\tilde{\Psi}(\cdot) = \sum_{j \geq 1} \delta_{(\tilde{X}_j, \theta_j)} (\cdot)$ on $\X\times \mathbb{S}$, i.e., a random (locally finite) counting measure on $\X\times \mathbb{S}$, with $\mathbb{S}$ a Polish space. Remarkably, note that $\tilde{\Psi}$ and the discrete measure $\tilde{\mu}$ in \eqref{eq:mui_def} contain the exact same information. In the partially exchangeable trait allocation models in \eqref{eq:partially_ex_traits}, which are the main focus of the general theory, the group-specific parameters of interest can be similarly collected in the point processes $\tilde{\Psi}_q (\cdot) = \sum_{j \geq 1} \delta_{(\tilde{X}_j, \theta_{jq})} (\cdot)$, for $q=1,\ldots,d$, instead of using the discrete measures $\tilde{\mu}_q$. With a slight abuse of notation, we can generalize model \eqref{eq:partially_ex_traits} by writing
\begin{equation}    \label{eq:partially_ex_traits_point_process}
\begin{aligned}
    Z_{iq} \mid \tilde{\mu}_q &\ind \CP (\tilde{\Psi}_q), \qquad i\geq 1, \quad q=1,\ldots,d,\\
    (\tilde{\Psi}_1,\ldots,\tilde{\Psi}_d) &\sim \mathcal{Q}_d,
\end{aligned}
\end{equation}
where $\mathcal{Q}_d$ denotes here the de Finetti measure of the vector of point processes $(\tilde{\Psi}_1,\ldots,\tilde{\Psi}_d)$. The generalization with respect to model \eqref{eq:partially_ex_traits} is only in that $\theta_{jq} \in \mathbb{S}$ instead of $\theta_{jq} > 0$. With a slight abuse of notation, we will use without distinction the vector $(\tilde{\Psi}_1,\ldots,\tilde{\Psi}_d)$ of point processes on $\X\times \mathbb{S}$ and the point process $\tilde{\bm\Psi} = \sum_{j \geq 1} \delta_{(\tilde{X}_j, \theta_{j1},\ldots, \theta_{jd})}$ on $\X \times \mathbb{S}^d$. The natural choice for the prior distribution of $\tilde{\bm\Psi}$ in this setting is the class of Poisson point processes on $\X \times \mathbb{S}^d$. Indeed, in the special case where $\mathbb{S} = (0,\infty)$, the class of Poisson point processes induces the class of \textsc{crv} priors for the vector of random measures $\tilde{\bm\mu}$ in \eqref{eq:mixture_model_allocation_CRV}. Therefore, model \eqref{eq:partially_ex_traits_point_process} with $\tilde{\bm\Psi}$ distributed as a Poisson point process on $\X \times \mathbb{S}^d$ generalizes the main model \eqref{eq:mixture_model_allocation_CRV}, for a general parameter space $\mathbb{S}$.

A point process $\tilde{\bm\Psi}$ on $\X \times \mathbb{S}^d$, like any random measure, is uniquely characterized by its Laplace functional 
\[
\mathcal{L}_{\tilde{\bm\Psi}}(f):= \E \left[\exp \left\{ - \int_{\X \times \mathbb{S}^d} f (x, \theta_1,\ldots,\theta_d) \tilde{\bm\Psi}(\dd x\, \dd \theta_1\,\ldots\, \dd \theta_d)\right\}\right], 
\]
for any measurable function $f: \X \times \mathbb{S}^d \to (0,\infty)$. The class of Poisson point processes is characterized by the following representation of the Laplace functional,
\begin{equation*}
   \mathcal{L}_{\tilde{\bm\Psi}}(f) = 
    \exp \left\{ - \int_{\X\times \mathbb{S}^d} 
    \left(1-\exp\left\{- f(x, \theta_1,\ldots,\theta_d) \right\} \right)     \nu (\dd x\, \dd \theta_1 \, \ldots\, \dd \theta_d )     \right\},
\end{equation*}
where $\nu (\, \cdot \, )$ is a locally finite measure on $\X\times \mathbb{S}^d$ referred to as the intensity of $\tilde{\bm{\Psi}}$. To link with the general theory and notation of \textsc{crv}s in Section \ref{app:distribution_theory_CRV}, assume that the intensity measure factorizes as $\nu (\dd x\, \dd \theta_1\, \ldots \, \dd \theta_d )=
\lambda \rho_d (\dd \theta_1\, \ldots \, \dd \theta_d) P_0(\dd x)$, where $\lambda>0$ and $P_0(\, \cdot\,)$ is a non-atomic probability measure on $\X$. We write $\tilde{\bm{\Psi}} \sim \textsc{pp}(\rho_d, \lambda, P_0)$. Then, 
Theorems \ref{thm:marginal_CRV}-\ref{thm:posterior_CRV}-\ref{thm:predictive_CRV} hold identically with the only replacement of $\mathbb{S}$ instead of $(0,\infty)$ as parameter space. For completeness, we report here the most general formulation of Theorems \ref{thm:marginal_CRV}-\ref{thm:posterior_CRV} for the general parameter space $\mathbb{S}$.

\begin{theorem}[Marginal distribution, p\textsc{etpf}]
    Let $\bm{Z}$ be a sample from the statistical model~\eqref{eq:partially_ex_traits_point_process}, with $\tilde{\bm{\Psi}} \sim \textsc{pp}(\rho_d, \lambda, P_0)$. The probability that  $\bm{Z}$ displays $K_n = k$ distinct traits with counts $\bm A = \bm{a}$ equals
    \begin{equation*}
        \begin{split}
       \pi_n (\bm{a}) & =\frac{{\lambda}^k}{k!} \exp \left\{  -\lambda  \int_{\mathbb{S}^d}  \left( 1- \prod_{q=1}^d  P(0; \theta_q)^{n_q} \right)    \rho_d (\dd \theta_1\, \ldots \,\dd \theta_d)    \right\} \\
    & \qquad\times  \prod_{\ell=1}^k \int_{\mathbb{S}^d}  
    \prod_{q=1}^d   \prod_{i=1}^{n_q} P (a_{i\ell q} ; \theta_q) 
    \rho_d (\dd \theta_1\, \ldots \,\dd \theta_d)
        \end{split}
    \end{equation*}
    where $n=\sum_{q=1}^d n_q$ and $\bm{n}=(n_1, \ldots , n_d)$ are the sample sizes.
\end{theorem}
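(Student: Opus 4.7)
The plan is to replicate the proof of Theorem~\ref{thm:marginal_CRV} almost verbatim, replacing the Laplace functional of the completely random vector with that of the Poisson point process $\tilde{\bm\Psi}$ on $\X\times\mathbb{S}^{d}$. Since $P_0$ is non-atomic, the observed labels $X_1,\dots,X_k$ are almost surely distinct, so I would fix $\varepsilon>0$ small enough that the balls $B_\varepsilon(X_1),\dots,B_\varepsilon(X_k)$ are pairwise disjoint, set $\X^{*}=\X\setminus\bigcup_{\ell=1}^{k}B_\varepsilon(X_\ell)$, and introduce the infinitesimal event $\mathcal{E}$ defined exactly as in~\eqref{eq:ECR_DEF}. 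The target p\textsc{etpf} is then recovered in the limit via $\pi_n(\bm{a})=\lim_{\varepsilon\downarrow 0}\Pp(\mathcal{E})/(k!\prod_{\ell=1}^{k}P_0(B_\varepsilon(X_\ell)))$, the factorial accounting for the uniform ordering of the observed traits.

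To evaluate $\Pp(\mathcal{E})$ I would condition on $\tilde{\bm\Psi}$ and decompose $\mathcal{E}=\mathcal{E}_1\cap\mathcal{E}_2\cap\mathcal{E}_3$, where $\mathcal{E}_1$ concerns the subjects $(i,q)\in\mathcal{B}_{\ell q}$ responsible for the observed counts in each ball $B_\varepsilon(X_\ell)$, $\mathcal{E}_2$ asks that subjects outside $\mathcal{B}_{\ell q}$ produce zero counts inside the balls, and $\mathcal{E}_3$ imposes no counts on $\X^{*}$. These three events involve disjoint collections of the latent counts $\tilde{A}_{ijq}$ and are therefore conditionally independent given $\tilde{\bm\Psi}$. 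Each conditional probability becomes a product over latent atoms of simple expressions involving $P(\cdot;\theta_{jq})$: the factor associated with $\mathcal{E}_1$ contains terms of the form $1-\prod_{j}(1-P(a_{i\ell q};\theta_{jq}))^{\delta_{\tilde{X}_j}(B_\varepsilon(X_\ell))}$, which I would expand via the binomial identity with auxiliary variables $v_{i\ell q}\in\{0,1\}$ and signs $(-1)^{v_{i\ell q}}$. At this point the Laplace functional of the Poisson point process on $\X\times\mathbb{S}^{d}$, namely
\begin{equation*}
\E\left[\exp\left\{-\int_{\X\times\mathbb{S}^{d}} f\,\dd\tilde{\bm\Psi}\right\}\right]=\exp\left\{-\lambda\int_{\X\times\mathbb{S}^{d}}\bigl(1-e^{-f}\bigr)\,\rho_d(\dd\theta_1\,\ldots\,\dd\theta_d)\,P_0(\dd x)\right\},
\end{equation*}
can be applied; this is the sole step where the parameter space enters, and it is insensitive to whether $\mathbb{S}=(0,\infty)$ or a general Polish space.

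The Laplace functional factorizes the unconditional expectation as $E_{*}\cdot\prod_{\ell=1}^{k}E_\ell$, with $E_{*}$ producing the exponential prefactor in the target formula and each $E_\ell$ expanding as $\exp\{-\lambda P_0(B_\varepsilon(X_\ell))\,I_{\ell,v_{i\ell q}}\}=1-\lambda P_0(B_\varepsilon(X_\ell))\,I_{\ell,v_{i\ell q}}+o(P_0(B_\varepsilon(X_\ell)))$, for the same integrals $I_{\ell,v_{i\ell q}}$ that appear in the proof of Theorem~\ref{thm:marginal_CRV}. The final step is the bookkeeping exercise of collapsing the sum over the binary $v_{i\ell q}$: the identity $\sum_{v\in\{0,1\}^{m}}\prod_{i}(-1)^{v_i}=0$ kills the constant term in the expansion, retaining only the leading $O(\prod_{\ell}P_0(B_\varepsilon(X_\ell)))$ contribution, whose integrand simplifies to $\prod_{q=1}^{d}\prod_{i=1}^{n_q}P(a_{i\ell q};\theta_q)$. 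Dividing by $k!\prod_{\ell}P_0(B_\varepsilon(X_\ell))$ and letting $\varepsilon\downarrow 0$ yields the claimed p\textsc{etpf}. I expect the main obstacle to be exactly this cancellation-and-limit bookkeeping, which requires careful tracking of higher-order $o(\cdot)$ terms and the use of non-atomicity of $P_0$; no new probabilistic ingredient beyond the Laplace functional of a Poisson process on a general product Polish space is required, which is why the paper states the result as a direct analogue of Theorem~\ref{thm:marginal_CRV}.
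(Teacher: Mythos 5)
Your proposal is correct and coincides with the paper's own treatment: the paper proves this result simply by observing that the argument for Theorem~\ref{thm:marginal_CRV} carries over with trivial modifications, replacing the Laplace functional of the \textsc{crv} with that of the Poisson point process on $\X\times\mathbb{S}^d$, which is exactly the route you describe. Your identification of the Laplace-functional step as the only place where the parameter space enters, and of the sign-cancellation bookkeeping as the delicate part, matches the structure of the paper's proof of Theorem~\ref{thm:marginal_CRV}.
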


\begin{theorem}[Posterior distribution]\label{thm:posterior_pp}
       Let $\bm{Z}$ be a sample from the statistical model~\eqref{eq:partially_ex_traits_point_process}, with $\tilde{\bm{\Psi}} \sim \textsc{pp}(\rho_d, \lambda, P_0)$. If $\bm{Z}$ displays $K_n = k$ distinct traits labeled $X_1, \ldots , X_k$, with associated counts $\bm{a}$, then the posterior distribution of 
       $\tilde{\bm \Psi}$ satisfies the distributional equality        \begin{equation*}
            (\tilde{\Psi}_1, \ldots, \tilde{\Psi}_d) \mid \bm{Z} \stackrel{d}{=}  (\Psi_1^*, \ldots, \Psi_d^*) +  (\Psi_1', \ldots, \Psi_d'),
       \end{equation*}
       where $(\Psi_1^*, \ldots, \Psi_d^*)$ and $(\Psi_1', \ldots, \Psi_d')$ are independent random vectors such that
       \begin{itemize}
           \item[(i)] the generic $\Psi_q^*(\cdot)$ is  defined as
           $ \Psi_q^*(\cdot) = \sum_{\ell=1}^k  \delta_{(X_\ell, \theta_{\ell q}^*)}(\cdot)$, for $ q=1, \ldots , d
           $, and the random vectors $(\theta_{\ell 1}^* , \ldots , \theta_{\ell d}^*)$ are independent across $\ell=1, \ldots , k$, with distribution 
           \begin{equation*}
               H_{\ell q} (\dd \theta_1\, \ldots \,\dd \theta_d) \propto  \prod_{q=1}^d \prod_{i=1}^{n_q}  P(a_{i\ell q}; \theta_q) \cdot \rho_d (\dd \theta_1\, \ldots \, \dd \theta_d);
           \end{equation*}
           \item[(ii)] the vector of point processes $(\Psi_1^\prime, \ldots, \Psi_d^\prime)$ is a $\textsc{pp}(\rho_d^\prime, \lambda, P_0)$, with
           \[
           \rho_d^\prime (\dd \theta_1\, \ldots \,\dd \theta_d) =  \prod_{q=1}^d \prod_{i=1}^{n_q} P(0; \theta_q ) \cdot \rho_d (\dd \theta_1\, \ldots \,\dd \theta_d).
           \] 
       \end{itemize}
\end{theorem}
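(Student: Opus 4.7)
The plan is to mimic the proof of Theorem \ref{thm:posterior_CRV} with essentially no change of substance, since the only generalization is that the jump component of each atom lives in an arbitrary Polish space $\mathbb{S}$ rather than $(0,\infty)$. The fundamental object that makes both results go through is the Laplace functional characterization of the underlying Poisson point process on $\X \times \mathbb{S}^d$, which is available in the same form as for homogeneous \textsc{crv}s. So I would characterize the posterior law of $\tilde{\bm\Psi}$ by computing its conditional Laplace functional and then identifying the resulting expression as the Laplace functional of the sum $(\Psi_1^*, \ldots, \Psi_d^*) + (\Psi_1', \ldots, \Psi_d')$ with the stated laws.

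Concretely, first I would fix a measurable $f : \X \times \mathbb{S}^d \to (0, \infty)$ and set up the standard small-ball argument: for each observed trait label $X_\ell$, let $B_\varepsilon(X_\ell)$ be a ball of radius $\varepsilon$, taken small enough so that the balls are pairwise disjoint, and $\X^* = \X \setminus \bigcup_\ell B_\varepsilon(X_\ell)$. Define the ``infinitesimal'' event $\Ecr$ exactly as in \eqref{eq:ECR_DEF}, i.e., requiring that for each observed trait $\ell$ there exists an atom of $\tilde{\bm\Psi}$ with spatial location in $B_\varepsilon(X_\ell)$ producing the observed counts $a_{i\ell q}$ across groups, that no additional observed count arises from $B_\varepsilon(X_\ell)$, and that no atom in $\X^*$ contributes any count. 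Then write
\[
\mathcal{L}_{\tilde{\bm\Psi} \mid \bm Z}(f) \;=\; \lim_{\varepsilon \downarrow 0} \frac{\E\!\left[\exp\!\left\{-\!\int f\, \dd \tilde{\bm\Psi}\right\} \mathbf{1}_{\Ecr}\right]}{\Pp(\Ecr)}.
\]

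The next step is to condition on $\tilde{\bm\Psi}$ inside the numerator, whereupon $\mathbf{1}_{\Ecr}$ factorizes across atoms exactly as in the proof of Theorem~\ref{thm:marginal_CRV}, and then to apply the Laplace-functional formula for Poisson point processes (replacing $(0,\infty)^d$ by $\mathbb{S}^d$ wherever it appears in the existing computation). The denominator $\Pp(\Ecr)$ is already known from the p\textsc{etpf} proof, suitably restated on $\mathbb{S}^d$. After the usual inclusion-exclusion expansion over ``count zero vs.\ count $a_{i\ell q}$'' indicators and collecting the leading $O(\prod_\ell P_0(B_\varepsilon(X_\ell)))$ terms, the ratio decouples into a product over $\ell = 1, \ldots, k$ of integrals against the tilted densities $H_{\ell q}(\dd\theta_1 \ldots \dd\theta_d) \propto \prod_q \prod_i P(a_{i\ell q}; \theta_q)\, \rho_d(\dd\theta_1 \ldots \dd\theta_d)$, times an exponential factor whose exponent has the exact form of the Lévy exponent of a Poisson point process on $\X \times \mathbb{S}^d$ with intensity $\lambda \rho_d'(\dd\theta_1 \ldots \dd\theta_d) P_0(\dd x)$ for $\rho_d'$ as in part~(ii) of the statement.

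Finally, I would observe that the first factor is precisely the Laplace functional of the degenerate point process $\sum_{\ell=1}^k \delta_{(X_\ell, \theta_{\ell 1}^*, \ldots, \theta_{\ell d}^*)}$ with independent vectors $(\theta_{\ell 1}^*, \ldots, \theta_{\ell d}^*) \sim H_{\ell q}$, while the exponential factor is the Laplace functional of an independent $\textsc{pp}(\rho_d', \lambda, P_0)$. By independence and uniqueness of Laplace functionals on the space of boundedly finite measures on $\X \times \mathbb{S}^d$, this identifies the posterior distribution and yields the claimed decomposition. The only real subtlety is verifying that the inclusion-exclusion bookkeeping and the limit $\varepsilon \downarrow 0$ carry over intact when the jump space is a generic Polish $\mathbb{S}$; but since every step uses the atoms only through integrals against $\rho_d$ on $\mathbb{S}^d$, and never through the ordered structure of $(0,\infty)$, no substantive modification to the earlier argument is required.
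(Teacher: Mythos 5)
Your proposal is correct and matches the paper's own treatment: the paper proves this result by asserting that the small-ball/Laplace-functional argument of Theorem~\ref{thm:posterior_CRV} carries over verbatim once $(0,\infty)$ is replaced by $\mathbb{S}$ and the L\'evy--Khintchine representation of the \textsc{crv} is replaced by the Laplace functional of the Poisson point process on $\X\times\mathbb{S}^d$, which is exactly the route you take. Your observation that every step uses the jumps only through integrals against $\rho_d$ on $\mathbb{S}^d$, never through the order structure of $(0,\infty)$, is precisely the justification the paper relies on, and your use of a general test function $f$ on $\X\times\mathbb{S}^d$ is the appropriate adaptation for the point-process formulation.
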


The computations we followed in Section \ref{app:distribution_theory_CRV} to prove Theorems \ref{thm:marginal_CRV}-\ref{thm:posterior_CRV} hold for the more general model \eqref{eq:partially_ex_traits_point_process}, with $\tilde{\bm{\Psi}} \sim \textsc{pp}(\rho_d, \lambda, P_0)$, with trivial modifications such as the use of Laplace functionals of Poisson point processes instead of the ones of \textsc{crv}s. The finite-dimensional trait models analyzed in Section \ref{sec:partially_ex_setting}, corresponding to $\tilde{\bm{\mu}} \sim \textsc{fcrv}(H^{(d)}, \lambda, P_0)$, are generalized to a generic parameter space via finite Poisson point processes, i.e., $\tilde{\bm{\Psi}} \sim \textsc{pp}(\rho_d, \lambda, P_0)$ where $\rho_d = H^{(d)}$ and $H^{(d)}$ is a probability distribution on $\mathbb{S}^d$. As in Section \ref{sec:partially_ex_setting}, we commonly assume $H^{(d)}(\cdot\,;\psi) = H(\cdot\,;\psi) \times \cdots \times H(\cdot\,;\psi)$, with $H(\cdot\,;\psi)$ a probability distribution on $\mathbb{S}$, parameterized by $\psi$.

To conclude this discussion on the  extension of the proposed modeling framework to general parameter space $\mathbb{S}$ for the count distribution $P(\cdot; \theta)$, $\theta \in \mathbb{S}$, we provide an example. Besides specifying $P(\cdot; \theta)$, we derive the marginal of a sample $\bm Z$ and the posterior distribution of $\tilde{\bm\Psi}$, assuming $\tilde{\bm{\Psi}} \sim \textsc{pp}(H^{(d)}, \lambda, P_0)$, and   $H^{(d)}(\cdot\,;\psi) = H(\cdot\,;\psi) \times \cdots \times H(\cdot\,;\psi)$. This results are derived by specializing the previous general theorems.

\begin{example}[Zero-inflated shifted negative binomial counts]
    
In the setting of count data with support on $\{0, 1, 2, \dots \}$, we propose to consider a zero-inflated shifted negative binomial (\textsc{zi-snb}) distribution. More precisely, let $X$ be a negative binomial random variable with parameters $c > 0$ (number of trials) and $p \in (0,1]$ (success probability). 
We say that  $Y:= X+1$ has a  shifted negative binomial random distribution, whose probability mass function $p_{\mathrm{sNB}} (\, \cdot \, ; c, p)$, depending on the parameters $(c,p)$, is supported by the set of  natural numbers. Then, we specify the distribution $P (\, \cdot \,; \theta)$ as $P (\, \cdot \,; \theta) = (1-w) \delta_0 (\, \cdot \, ) + w\, p_{\mathrm{sNB}}(\, \cdot \, ; c, p)$, with $\theta = (c,w,p)$ and $w \in (0,1) $. This choice allows to independently model the presence of a trait and its associated measurement. That is, the pattern of occurrence of traits is captured as in the binary traits setting. On top of that, when a trait is displayed, the shifted negative binomial distribution models the magnitude of the associated expression. The shifted version of the negative binomial is considered to correctly identify the actual expression of a trait.

To complete the specification of this model, we consider $H(\cdot\,;\psi)$ such that $c$ is fixed, with $w$ and $p$ following independent beta laws with parameters $(a_w,b_w)$ and $(a_p, b_p)$, respectively, so that $\psi = (c,a_w,b_w, a_p, b_p)$. In this case, the marginal distribution of $\bm Z$ depends on the whole collection of counts $\bm a$ and equals
\begin{equation}\label{eq:pEPPF_NB}
\begin{split}
    \pi_n (\bm{a}; \lambda, \psi) & = \frac{\lambda^k}{k!} \exp \left\{  - \lambda \left[  1- \prod_{q=1}^d \frac{B(a_w, b_w + n_q)}{B(a_w,b_w)}  \right]  \right\} \\
    &\qquad \times \prod_{\ell=1}^k \prod_{q=1}^d  \prod_{i: a_{i\ell q} > 1} \binom{a_{i\ell q} + c - 2}{a_{i\ell q} -1} \frac{b_{\ell q}^{(p)}}{B(a_p, b_p)} \frac{b_{\ell q}^{(w)}}{B(a_w,b_w)},
\end{split}
\end{equation}
where $b_{\ell q}^{(p)} = B(a_p + c\, m_{\ell q}, b_p + \sum_{i=1}^{n_q} a_{i\ell q} - m_{\ell q})$ and $b_{\ell q}^{(w)} = B(a_w + m_{\ell q}, b_w + n_q - m_{\ell q})$.

Moving to the posterior distribution of $\tilde{\bm\Psi}$,  each $\theta^*_{\ell q}$ in Theorem \ref{thm:posterior_pp} is a vector of three component $(c, w^*_{\ell q}, p^*_{\ell q})$, where $w^*_{\ell q}$ has a beta distribution  with parameters $(a_w + m_{\ell q}, b_w + n_q - m_{\ell q})$, and $p^*_{\ell q}$ is again a  beta with parameters $(a_p + c\, m_{\ell q}, b_p + \sum_{i=1}^{n_q} a_{i\ell q} - m_{\ell q})$, further independent between them. In addition, for each $q=1,\ldots,d$,
\begin{equation*} 
           \Psi^\prime_q(\cdot) = \sum_{j=1}^{N^\prime}  \delta_{(\tilde{X}_j^\prime, \theta_{j q}^\prime)}(\cdot), \qquad \theta^\prime_{jq} \iid H^\prime_{q}, \qquad \tilde{X}^\prime_j \iid P_0, \qquad j=1,\ldots,N^\prime, 
\end{equation*}
where $H^\prime_{q}$ is such that $c$ is fixed, $w^\prime_{jq}$ has a beta distribution with parameters $(a_w, b_w + n_q )$, independent of $p^\prime_{jq}$, which follows the prior beta law. Moreover, $N^\prime \sim \mathrm{Poisson}(\lambda^\prime)$ with
\[
\lambda^\prime = \lambda\prod_{q=1}^d B(a_w, b_w + n_q) / B (a_w,b_w) . 
\]  
\end{example}

\section{Proofs of Section \ref{sec:learning_clustering}}
\label{app:effect_unseen}

The main theoretical result of Section \ref{sec:learning_clustering} is Proposition \ref{prop:cluster_comparison}, which illustrates the effect on clustering estimation of accounting for potentially unseen traits by comparing the proposed model  in~\eqref{eq:mixture_model2} with and the na\"ive model almost identical to \eqref{eq:mixture_model2}, in which it is assumed that there are no unseen traits, i.e. we suppose $N = k$. Before showing the proof of Proposition \ref{prop:cluster_comparison}, we recall a lemma which is instrumental for it.

\begin{lemma}\label{lemma:inequality}
    Let $X$ be an almost surely non-negative random variable and $n = 1,2,\ldots$. The following holds true:
    \[
    \E(X^{n+1}) \geq \E(X^{n}) \E(X).
    \]
\end{lemma}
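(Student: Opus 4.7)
The plan is to recognize that the inequality is equivalent to the non-negativity of the covariance $\Cov(X^n, X) = \E(X^{n+1}) - \E(X^n)\E(X)$, which follows from the fact that $X^n$ and $X$ are comonotonic functions of $X$ on $[0, \infty)$.

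Concretely, I would introduce an independent copy $Y$ of $X$ defined on the same probability space (or on a product space), so that $Y$ is also almost surely non-negative and has the same distribution as $X$. The key identity to exploit is
\begin{equation*}
2 \bigl[\E(X^{n+1}) - \E(X^n)\,\E(X)\bigr] = \E\bigl[(X^n - Y^n)(X - Y)\bigr],
\end{equation*}
which is obtained by expanding the right-hand side and using $\E(X^n Y) = \E(X^n)\E(Y) = \E(X^n)\E(X)$ together with $\E(Y^{n+1}) = \E(X^{n+1})$. The final step is to observe that, since $X, Y \geq 0$ almost surely and the map $t \mapsto t^n$ is non-decreasing on $[0, \infty)$ for $n \geq 1$, the two factors $X^n - Y^n$ and $X - Y$ always share the same sign. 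Hence their product is almost surely non-negative, its expectation is non-negative, and the inequality follows.

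There is essentially no obstacle here: the only subtle point is making sure we can form an independent copy of $X$, which is standard, and noting that $n \geq 1$ is used precisely to ensure monotonicity of $t \mapsto t^n$ on $[0, \infty)$ (the case $n = 0$ is trivial anyway). One could alternatively invoke Chebyshev's sum inequality or the FKG inequality directly, but the coupling argument above is the most self-contained.
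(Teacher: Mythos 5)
Your proof is correct, but it takes a genuinely different route from the paper. The paper combines two classical inequalities: H\"older's inequality gives $\E(X^{n}) \leq \E(X^{n+1})^{n/(n+1)}$, hence $\E(X^{n})\,\E(X^{n+1})^{1/(n+1)} \leq \E(X^{n+1})$, and Jensen's inequality gives $\E(X^{n+1})^{1/(n+1)} \geq \E(X)$; chaining the two yields the claim. Your argument is instead the standard coupling proof of Chebyshev's sum (or FKG-type) inequality: introduce an independent copy $Y$ of $X$, expand $\E\bigl[(X^{n}-Y^{n})(X-Y)\bigr] = 2\bigl[\E(X^{n+1})-\E(X^{n})\E(X)\bigr]$, and note the integrand is almost surely non-negative because $t\mapsto t^{n}$ is non-decreasing on $[0,\infty)$. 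Both are short and correct; yours is more self-contained (no named inequalities beyond linearity of expectation) and generalizes immediately to $\E[f(X)g(X)] \geq \E[f(X)]\,\E[g(X)]$ for any pair of non-decreasing functions, while the paper's version exploits the specific power structure. One small point worth a sentence in either proof: if $\E(X^{n+1})=\infty$ the inequality is trivial, and otherwise all moments appearing in your expansion are finite (since $\E(X^{n})$ and $\E(X)$ are controlled by $\E(X^{n+1})$ via Jensen/H\"older), so splitting the expectation is legitimate; this caveat is glossed over in both arguments and is harmless for the application in the paper, where $X = P(0;\theta) \in [0,1]$.
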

\begin{proof}
For any $n = 1,2,\ldots$, by Holder inequality, it holds that
\[
 \E(X^{n}) \leq \E(X^{n+1})^{\frac{n}{n+1}} = \E(X^{n+1})^{1 - \frac{1}{n+1}},
\]
and consequently
\begin{equation}\label{eq:holder_ineq}
    \E(X^{n}) \E(X^{n+1})^{\frac{1}{n+1}} \leq \E(X^{n+1}).
\end{equation}
By Jensen inequality, 
\[
\E(X^{n+1}) \geq \E(X)^{n+1}
\]
and, from \eqref{eq:holder_ineq}, we get
\[
\E(X^{n+1}) \geq \E(X^{n}) \E(X^{n+1})^{\frac{1}{n+1}} \geq  \E(X^{n}) \E(X).
\]
The thesis is proven.
\end{proof}

\subsection{Proof of Proposition \ref{prop:cluster_comparison}}\label{proof:cluster_comparison}

In Proposition \ref{prop:cluster_comparison}, we compare the predictive allocation probabilities for a generic subject $i$ under the proposed model defined in~\eqref{eq:mixture_model2}, with a Pitman-Yor prior for $\xi_h$, and a na\"ive model almost identical to \eqref{eq:mixture_model2}, in which it is assumed that there are no unseen traits, i.e. we suppose $N = k$. Recalling the notation used in the statement, let 
\begin{equation*}
p_{iq} = \mathds{P}(\mu_i = \tilde{\mu}_q \mid \bm{Z}, \bm{\mu}_{-i}), \qquad p_{i,\text{new}} = \mathds{P}(\mu_i = \text{``new''} \mid \bm{Z}, \bm{\mu}_{-i}), \qquad i=1,\dots,n,
\end{equation*}
where $\bm{\mu}_{-i} = (\mu_1,\dots\mu_{i-1},\mu_{i+1},\dots,\mu_n)$ with $d_{-i}$ distinct values $\tilde{\mu}_1,\dots,\tilde{\mu}_{d_{-i}}$, for $q = 1,\dots,d_{-i}$. Here, $p_{iq}$ denotes the predictive probability that subject $i$ belongs to the $q$th cluster formed by the remaining subjects, and $p_{i,\text{new}}$ is the predictive probability for subject $i$ to form its own cluster. Similarly, we define $p_{iq}^*$ and $p_{i,\text{new}}^*$ for the na\"ive model.

For the proposed model defined in~\eqref{eq:mixture_model2}, the predictive allocation probabilities $p_{iq}$ and $p_{i,\text{new}}$ are expressed in \eqref{eq:update_clustering}. For the associated na\"ive model, since this is identical to \eqref{eq:mixture_model2} with the specification $N = k$, the predictive allocation probabilities $p_{iq}^*$ and $p_{i,\text{new}}^*$ are obtained from \eqref{eq:update_clustering} by replacing $\pi_n(\bm a; \lambda, \psi)$ with $\pi_n(\bm a; N = k, \psi)$ in \eqref{eq:conditional_ETFP}. 

Consider the ratio between the probability that subject $i$ creates a new cluster and the probability that it is allocated to the $q$-th existing cluster. For the proposed model defined in~\eqref{eq:mixture_model2}, this is equal to
\[
\frac{p_{i,\text{new}}}{p_{iq}} = \frac{\pi_n(\bm{a}_{i,\textup{new}}; \lambda, \psi) (\gamma + d_{-i}\sigma)/(n+\gamma-1)}{\pi_n(\bm{a}_{iq}; \lambda, \psi) (n_{q,-i} - \sigma)/(n + \gamma - 1)},
\]
while the same quantity, for the na\"ive model, is equal to 
\[
\frac{p^*_{i,\text{new}}}{p^*_{iq}} = \frac{\pi_n(\bm{a}_{i,\textup{new}}; N=k, \psi) (\gamma + d_{-i}\sigma)/(n+\gamma-1)}{\pi_n(\bm{a}_{iq}; N=k, \psi) (n_{q,-i} - \sigma)/(n + \gamma - 1)}.
\]
Denote with ${\bm n}_{i,\text{new}}$ the sample sizes associated with $\bm{a}_{i,\textup{new}}$, and with ${\bm n}_{iq}$ the sample sizes associated with $\bm{a}_{iq}$. By comparing the previous two ratios, we obtain
\begin{equation*}
    \begin{aligned}
        \frac{p_{i,\text{new}}/p_{iq}}{p^*_{i,\text{new}}/p^*_{iq}} &= \frac{\pi_n(\bm{a}_{i,\textup{new}}; \lambda, \psi) }{\pi_n(\bm{a}_{iq}; \lambda, \psi) } \cdot \frac{\pi_n(\bm{a}_{iq}; N=k, \psi) }{ \pi_n(\bm{a}_{i,\textup{new}}; N=k, \psi)}\\
        &= \frac{\exp \left\{ -\lambda \left( 1-\prod_{h=1}^{d_{-i}+1}  \int_{(0,\infty)}  P (0 ; \theta)^{({\bm n}_{i,\text{new}})_h}
        H (\dd \theta; \psi) \right) \right\}}{\exp \left\{ -\lambda \left( 1-\prod_{h=1}^{d_{-i}}  \int_{(0,\infty)} P (0; \theta)^{({\bm n}_{iq})_h}
        H (\dd \theta; \psi) \right) \right\}}\\
        &= \exp\left\{ -\lambda \left[ \prod_{h=1}^{d_{-i}}  \int_{(0,\infty)}  P (0 ; \theta)^{({\bm n}_{iq})_h}
        H (\dd \theta; \psi) - \prod_{h=1}^{d_{-i}+1}  \int_{(0,\infty)}  P (0 ; \theta)^{({\bm n}_{i,\text{new}})_h}
        H(\dd \theta; \psi) \right] \right\}\\
        &= \exp\left\{ -\lambda \prod\limits_{\substack{h=1\\ h\neq q}}^{d_{-i}} \int_{(0,\infty)} P(0; \theta)^{n_{h,-i}} H(\dd \theta; \psi)  \right. \\
        &\qquad \qquad \left. \times \left[  \int_{(0,\infty)}  P (0; \theta)^{n_{q,-i} +1}
        H (\dd \theta; \psi) - \int_{(0,\infty)}  P (0; \theta)
        H (\dd \theta; \psi) \int_{(0,\infty)}  P (0; \theta)^{n_{q,-i}}
        H (\dd \theta; \psi) \right] \right\}.
    \end{aligned}
\end{equation*}
Using the inequality in Lemma \ref{lemma:inequality}, it holds that
\[
 \int_{(0,\infty)}  P (0; \theta)^{n_{q,-i} +1}
        H (\dd \theta; \psi) - \int_{(0,\infty)}  P (0; \theta)
        H (\dd \theta; \psi) \int_{(0,\infty)}  P (0; \theta)^{n_{q,-i}}
        H (\dd \theta; \psi) \geq 0,
\]
and consequently
\[
 \frac{p_{i,\text{new}}}{p_{iq}} < \frac{p^*_{i,\text{new}}}{p^*_{iq}}.
\]
The proof is complete. Note that the result is not restricted to the Pitman-Yor prior for $\xi_h$; the same argument holds for any arbitrary prior on $\xi_h$.

\section{Simulation studies}
\label{sec:simulations}

\subsection{Assessing the impact of unseen traits in clustering}\label{sec:sim_k_fixed}

Before moving to more structured simulation studies, we first provide empirical evidence on the discrepancy highlighted in Proposition~\ref{prop:cluster_comparison}, which may indeed lead to substantial inferential differences. To this end, we consider a simulated binary-outcomes example, where we compare the inference obtained from our proposed model \eqref{eq:mixture_model} with that of a latent class model in which $N = k$ is fixed, thus disregarding the possibility of unseen traits, as in \citet{Dunson2009}. We consider data organized into $d = 5$ groups, with a total of $N = 500$ traits that may potentially be observed. Group-specific probability vectors $\theta_{jq}$ are generated by drawing iid values from a $\text{Beta}(0.1, 10)$ distribution, for $q = 1,\ldots, 5$ and $j = 1,\ldots,500$. Subject-specific binary vectors are then obtained by independently sampling Bernoulli random variables $\tilde{A}_{ijq}$ with success probabilities $\theta_{jq}$, determined by the group allocation. Samples are drawn from the five groups with different sizes, namely $n_1 = 100$, $n_2 = 60$, $n_3 = 40$, $n_4 = 20$, and $n_5 = 20$, where $n_q$ denotes the sample size of group $q$. In the resulting dataset, $K_n = k = 293$ traits are observed out of the total $N = 500$. 

We compare model \eqref{eq:mixture_model} with binary traits, as in Example~\ref{ex:binary_marginal}, with the na\"ive model that disregards unseen traits. To ensure a fair comparison, both models are fitted using identical hyperparameters, set to their oracle values. Specifically, the parameters of the Beta prior $H(\cdot;\psi)$ are fixed at $\psi = (0.1, 10)$, while the Poisson prior on $N$ is specified with $\lambda = 500$. For the clustering prior, we adopt a Dirichlet process with concentration parameter $1$, corresponding to a Pitman–Yor process with parameters $\sigma = 0$ and $\gamma = 1$. The \textsc{mcmc} algorithm is run for $5{,}000$ iterations, discarding the first $500$ as burn-in, and applying thinning every $2$ iterations. Figure~\ref{fig:sim_k_fixed_num_clusters} displays the posterior distribution of the number of clusters under the two models. Consistent with the analytical result in Proposition~\ref{prop:cluster_comparison}, our proposed model \eqref{eq:mixture_model} yields fewer clusters than the naïve specification that ignores unseen traits. The discrepancy is substantial in this example, with posterior modes equal to $6$ and $10$, respectively.

\begin{figure}[tbp]
    \centering    
    \includegraphics[width = 0.4\linewidth]{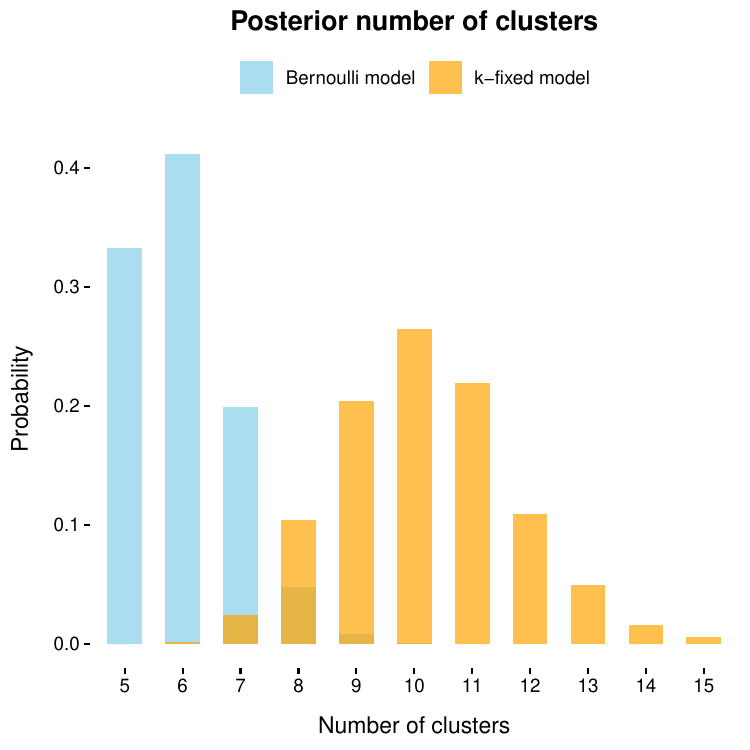}    
    
    \caption{Posterior distribution of the number of clusters under model \eqref{eq:mixture_model} with binary traits (in skyblue) and \citet{Dunson2009} model (in orange). The true number of clusters in the generating mechanism is $5$.  
    }
    \label{fig:sim_k_fixed_num_clusters}
\end{figure}

\subsection{Synthetic network data}\label{sec:simulations_network}

This section evaluates the performance of the latent class model with unseen traits described in Section~\ref{sec:learning_clustering}, referred to as \emph{unknown-groups} model. Motivated by our focus on criminal network analysis, we generate synthetic datasets designed to capture a realistically complex structure. The analysis of criminal networks \citep{esbm_rigon, Lu2025} typically aims at detecting groups of affiliates who share similar connectivity patterns. These patterns are encoded in a weighted adjacency matrix, which records the frequency of interactions between pairs of individuals. An interaction is defined as the co-attendance of two affiliates at the same meeting. The weighted adjacency matrix thus summarizes the raw data, originally collected as multivariate binary observations, where each affiliate is associated with the list of meetings they attended.  Formally, if the dataset comprises $n$ affiliates and $k$ distinct meetings, the attendance information may be arranged in an $n \times k$ binary matrix $\bm{A}$, with entries $A_{i\ell}$ indicating whether individual $i$ attended meeting $\ell$. The corresponding observed weighted adjacency matrix, denoted by $\bm{W}$, is then obtained as $\bm{W} = \bm{A}\bm{A}^T$. Its generic entry $W_{ii'}$ represents the number of meetings jointly attended by affiliates $i$ and $i'$, that is, $W_{ii'} = \sum_{\ell=1}^k A_{i\ell} A_{i'\ell}$ for $i\neq i'$. Our framework induces a probabilistic structure on $\bm{W}$: each entry $W_{ii'}$ can be interpreted as the realization of a sum of Bernoulli random variables. To assess the flexibility of the proposed methodology in the analysis of criminal networks, we consider two simulated scenarios, each characterized by distinct properties of the resulting weighted adjacency matrix. These scenarios are constructed to produce structures akin to those investigated in \cite{Lu2025}, thereby closely reflecting patterns observed in real-world criminal networks. A detailed description of the two scenarios is provided below. 

For both scenarios, the analysis is carried out using the binary traits specification of the mixture model \eqref{eq:mixture_model}, referred to as the unknown-groups model, where $H(\cdot;\psi)$ is the beta distribution with parameters $(-\alpha, \alpha + \beta)$, with $\psi = (\alpha, \beta)$. Prior distributions for the model parameters $\lambda$, $-\alpha$, and $\alpha + \beta$ are specified as follows. The parameter $\lambda$, governing the Poisson-distributed total number of meetings, is assigned a gamma prior with parameters $(\alpha_\lambda, \beta_\lambda)$, as detailed in Section \ref{sec:hyper_fitting}. The hyperparameters $(\alpha_\lambda, \beta_\lambda)$ are chosen so that the prior expected value of $N$ equals $\hat{N} = 1.5k$, where $k$ denotes the observed number of meetings in the sample, and the prior variance of $N$ is set to $10\hat{N}$. For the parameters $(a, b) = (-\alpha, \alpha + \beta)$ of the beta distribution $H$, independent gamma priors are assumed with hyperparameters $(\alpha_a, \beta_a)$ and $(\alpha_b, \beta_b)$, respectively. Specifically, $(\alpha_a, \beta_a)$ are set to induce a prior mean of $0.2$ for $a$ with a large variance, while $(\alpha_b, \beta_b)$ are chosen so that the prior mean of $b$ is $10$, also with high variance. Additionally, the clustering structure is modeled using a Dirichlet process with concentration parameter equal to $\gamma = 1$. Posterior inference relies on 10,000 iterations of the \textsc{mcmc} algorithm described in Section~\ref{sec:clustering_estimation}, with the first 1,000 samples discarded as burn-in and a thinning interval of 2.

We benchmark our approach against the \emph{negative binomial mixture of \textsc{bb}s} introduced in \cite{Ghilotti2025}, which represents the natural competitor for estimating the number of unseen traits. This model, however, is specifically tailored to homogeneous exchangeable settings. 
It falls within the class described in Example~\ref{example:ex_bin_traits}, corresponding to the binary traits specification of model~\eqref{eq:partially_ex_traits} with prior given in equations~\eqref{eq:finite_CRV}–\eqref{eq:param_H}, under the special case $d=1$.
In this framework,  $H(\cdot;\psi)$ is the beta distribution with parameters $(-\alpha, \alpha + \beta)$. The beta parameters $(a, b) = (-\alpha, \alpha + \beta)$ are equipped with independent gamma priors with hyperparameters $(\alpha_a, \beta_a)$ and $(\alpha_b, \beta_b)$, respectively. The model parameter $\lambda$, governing the Poisson-distributed total number of meetings, is assigned a gamma prior with parameters $(\alpha_\lambda, \beta_\lambda)$, as detailed in Section \ref{sec:hyper_fitting}.
By construction, the negative binomial mixture of \textsc{bb}s enforces homogeneity, resulting in posterior expectations for $\bm{W}$ that are constant across entries. Consequently, the model cannot capture potential heterogeneity in connectivity patterns that may emerge from the observed adjacency matrix.
For all comparisons presented in this paper, hyperparameters are fixed as in \cite{Ghilotti2025}, with the prior variance for $N$ set to be comparable to that used in the proposed unknown-groups model for each application. Posterior inference relies on 100,000 iterations, with the first 10,000 samples discarded as burn-in and a thinning interval of 5.

\begin{figure}[tbp]
    \centering
    \begin{subfigure}[t]{0.45\linewidth}
        \centering
        \includegraphics[width=\linewidth]{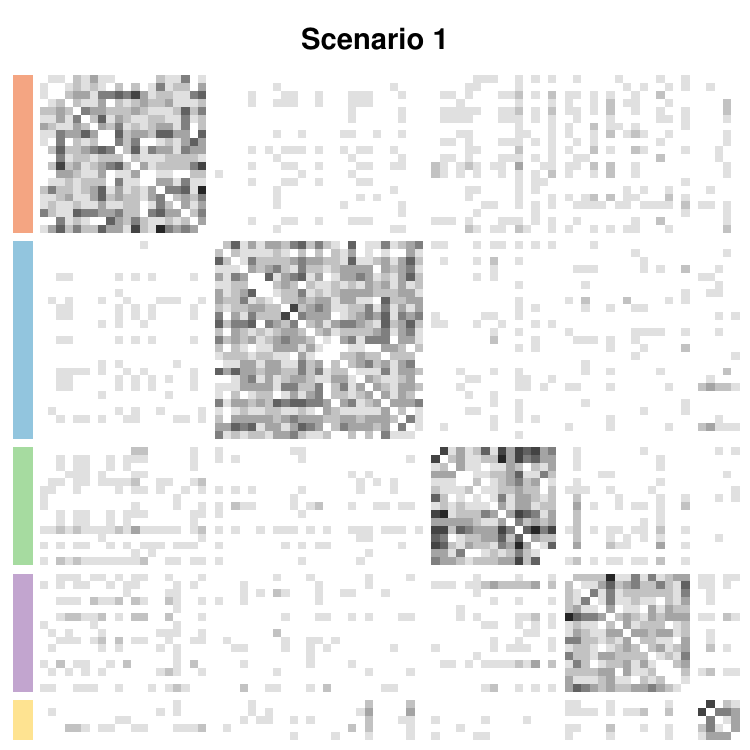}
        \caption{Simulated weighted adjacency matrix in Scenario~1. Side colors represent the true clustering.}
        \label{fig:true_adj_matrix_scenario_1}
    \end{subfigure}
    \hfill
    \begin{subfigure}[t]{0.45\linewidth}
        \centering
        \includegraphics[width=\linewidth]{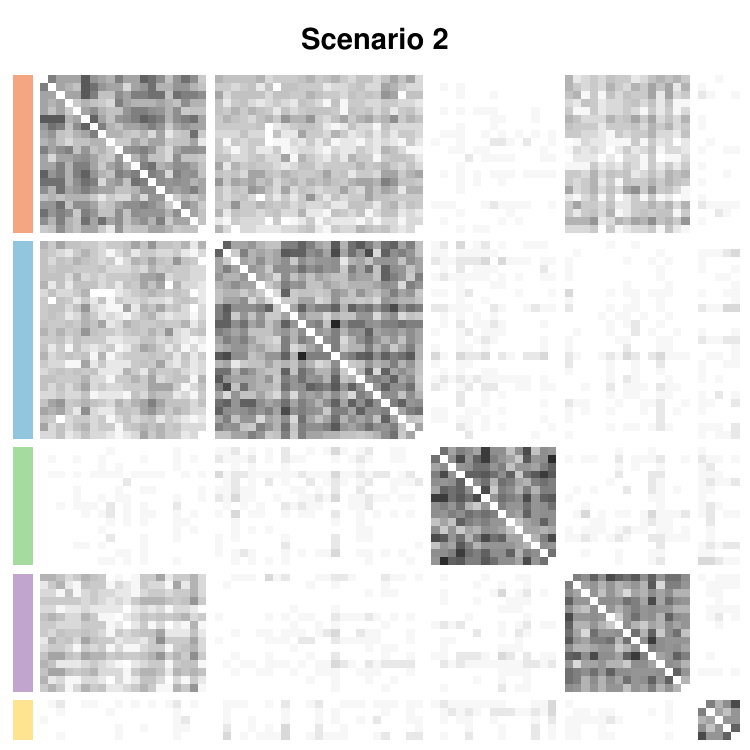}
        \caption{Simulated weighted adjacency matrix in Scenario~2. Side colors represent the true clustering.}
        \label{fig:true_adj_matrix_scenario_2}
    \end{subfigure}

    \vspace{1em}

    \begin{subfigure}[t]{0.45\linewidth}
        \centering
        \includegraphics[width=\linewidth]{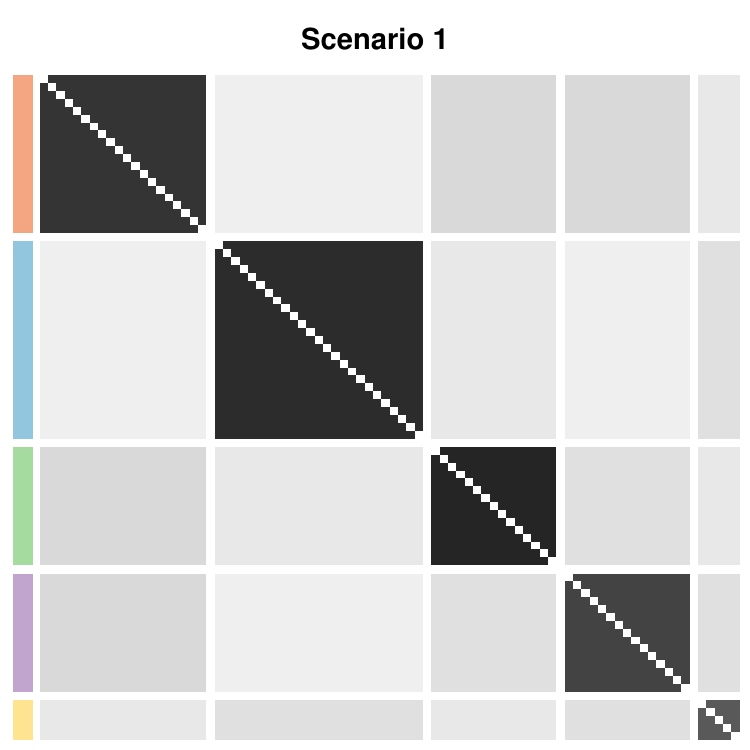}
        \caption{Posterior expectation of the weighted adjacency matrix in Scenario~1. Side colors represent the estimated clustering.}
        \label{fig:post_adj_matrix_scenario_1}
    \end{subfigure}
    \hfill
    \begin{subfigure}[t]{0.45\linewidth}
        \centering
        \includegraphics[width=\linewidth]{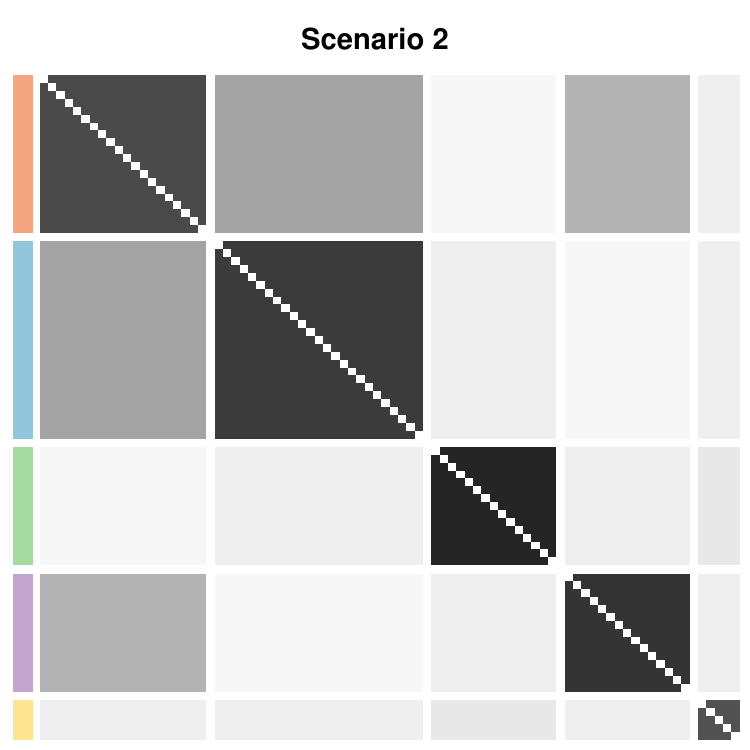}
        \caption{Posterior expectation of the weighted adjacency matrix in Scenario~2. Side colors represent the estimated clustering.}
        \label{fig:post_adj_matrix_scenario_2}
    \end{subfigure}

    \caption{Synthetic network data. Simulated and estimated weighted adjacency matrices under the two scenarios. 
    Top row: simulated adjacency matrices, with side colors denoting the true clustering. 
    Bottom row: posterior expectations of the adjacency matrices under the proposed model, with side colors denoting the estimated clustering. 
    In all panels, the color of each entry ranges from white to black as the number of co-attended meetings increases.}
    \label{fig:adj_matrix_comparison}
\end{figure}

\textbf{Scenario 1.} Data are generated with a total of $N = 500$ meetings and $d = 5$ criminal groups. Each group is associated with a core set of $15$ meetings, in which affiliates have a higher probability of being observed ($0.3$). Beyond these core meetings, each group is randomly assigned $300$, $125$, $50$, and $10$ additional meetings, with corresponding probabilities of $0.002$, $0.01$, $0.05$, and $0.3$. The total number of criminals is $n = 80$, partitioned into groups of size $n_1 = 20$, $n_2 = 25$, $n_3 = n_4 = 15$, and $n_5 = 5$. The number of observed meetings in the sample is $k = 304$. Figure~\ref{fig:true_adj_matrix_scenario_1} shows a graphical representation of the resulting adjacency matrix.

Posterior inference under the proposed unknown-groups model demonstrates strong recovery of the underlying structure. The posterior distribution of the number of clusters (not shown) correctly concentrates on the true value $d = 5$, confirming the ability of the model to detect the latent group partition. Beyond validating the clustering estimates, Figure~\ref{fig:post_adj_matrix_scenario_1} shows the posterior expectation of the weighted adjacency matrix, which can be visually compared to its simulated counterpart in Figure~\ref{fig:true_adj_matrix_scenario_1}. The close alignment between these matrices demonstrates the model’s ability to recover the connectivity structure of the data. Notably, posterior estimates of the weighted adjacency matrix are straightforward to compute, leveraging the closed-form posterior expressions in Theorem~\ref{thm:posterior_fCRV} and their specializations for the binary traits model. 

We further compare our methodology to the negative binomial mixture of \textsc{bb}s introduced in \cite{Ghilotti2025}. Both models allow inference on the number of unseen meetings $N^\prime$, which in the framework of Theorem~\ref{thm:posterior_fCRV} corresponds to the number of atoms $N^\prime$ in the measures $\mu_q^\prime$. Figure~\ref{fig:M_prime_scenario_1} reports the posterior distributions of $N^\prime$ for the two models. Our proposed model provides an accurate estimate, with uncertainty appropriately quantified. In contrast, the negative binomial mixture of \textsc{bb}s substantially overestimates $N^\prime$, failing to address the unseen features problem in this heterogeneous setting.

Model comparison based on the \textsc{waic} supports the same conclusion. The proposed unknown-groups model yields a \textsc{waic} of $68546$, while the negative binomial mixture of \textsc{bb}s yields a much larger value of $923830$.

\begin{figure}[tbp]
    \centering
    
    \begin{subfigure}[t]{0.45\linewidth}
        \centering
        \includegraphics[width=\linewidth]{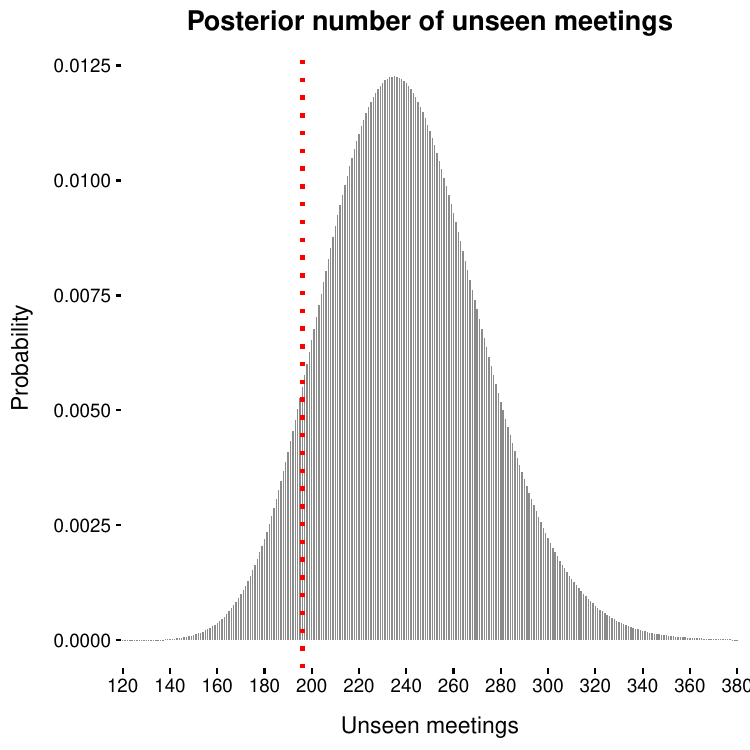}
        \caption{Unseen meetings under the unknown-groups model.}
        \label{fig:unseen_scen1_mix}
    \end{subfigure}%
    \hfill
    \begin{subfigure}[t]{0.45\linewidth}
        \centering
        \includegraphics[width=\linewidth]{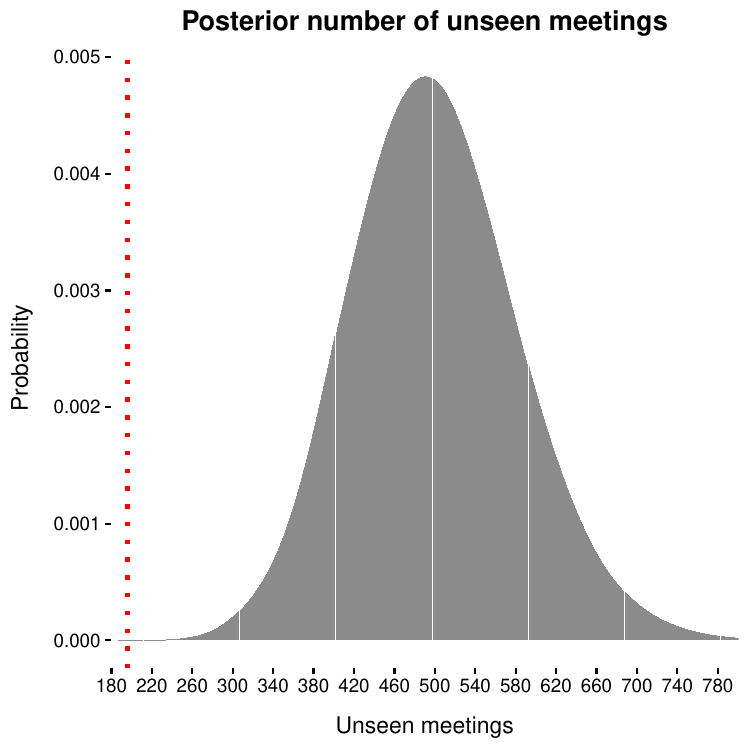}
        \caption{Unseen meetings under the negative binomial mixture of \textsc{bb}s.}
        \label{fig:unseen_scen1_negbinbb}
    \end{subfigure}

    \caption{Synthetic network data, Scenario 1. Posterior distribution of the number of unseen meetings under (a) the unknown-groups model and (b) the negative binomial mixture of \textsc{bb}s in \cite{Ghilotti2025}. The vertical red dotted lines denote the true number of unseen meetings.}
    \label{fig:M_prime_scenario_1}
\end{figure}

\textbf{Scenario 2.} This scenario also considers $N = 500$ meetings and $d = 5$ criminal groups, but introduces more heterogeneous and asymmetric connectivity patterns. Each group is associated with $250$, $150$, $50$, and $50$ meetings, with observation probabilities of $0.002$, $0.01$, $0.05$, and $0.4$, respectively. To induce additional overlap, 20 meetings have the highest probability ($0.4$) of being attended by criminals from both groups 1 and 2, while another distinct set of 20 meetings has the highest probability ($0.4$) for both groups 1 and 4. In general, all high-probability meetings do not overlap between different group pairs. As in Scenario 1, the total number of criminals is $n = 80$, partitioned into groups of size $n_1 = 20$, $n_2 = 25$, $n_3 = n_4 = 15$, and $n_5 = 5$. The number of observed meetings is $k = 340$. Figure~\ref{fig:true_adj_matrix_scenario_2} illustrates the resulting adjacency matrix.

As in Scenario 1, the proposed unknown-groups model correctly identifies the true number of clusters $d = 5$. The corresponding clustering estimates are reported in Figure~\ref{fig:post_adj_matrix_scenario_2}, alongside the posterior expectation of the weighted adjacency matrix. A visual comparison with the simulated matrix in Figure~\ref{fig:true_adj_matrix_scenario_2} confirms that the method accurately recovers the heterogeneous connectivity patterns.

We next examine inference on the number of unseen meetings $N^\prime$. Figure~\ref{fig:M_prime_scenario_2} presents the posterior distribution of $N^\prime$ under the proposed unknown-groups model and the negative binomial mixture of \textsc{bb}s in \cite{Ghilotti2025}. Figure~\ref{fig:unseen_scen2_mix} shows that our unknown-groups model recovers the number of unseen meetings accurately, while also providing a coherent quantification of uncertainty.
At first glance, the competitor appears to perform well, as its posterior distribution is centered closer to the true value (red vertical line), albeit with substantially greater dispersion. However, this apparent accuracy is largely coincidental.
In fact, since the negative binomial mixture of \textsc{bb}s enforces homogeneity, it cannot accommodate the heterogeneity clearly visible in the simulated adjacency matrix (Figure~\ref{fig:true_adj_matrix_scenario_2}). 
For ease of comparison, Figure~\ref{fig:scenario2_post_adj_mat_comparison} juxtaposes the simulated adjacency matrix, the posterior expectation of the adjacency matrix under our unknown-groups model, and the posterior expectation of the adjacency matrix under the competitor. The figure highlights that while our model reproduces the complex connectivity structure, the competitor collapses to an unrealistic homogeneous pattern.

This conclusion is further supported by a model comparison based on the \textsc{waic}. The proposed unknown-groups model achieves a \textsc{waic} of $-19107$, while the negative binomial mixture of \textsc{bb}s yields a much larger value of $217142$. Such a difference provides strong evidence that the unknown-groups model offers a substantially better fit for these data.

\begin{figure}[tbp]
    \centering
    
    \begin{subfigure}[t]{0.45\linewidth}
        \centering
        \includegraphics[width=\linewidth]{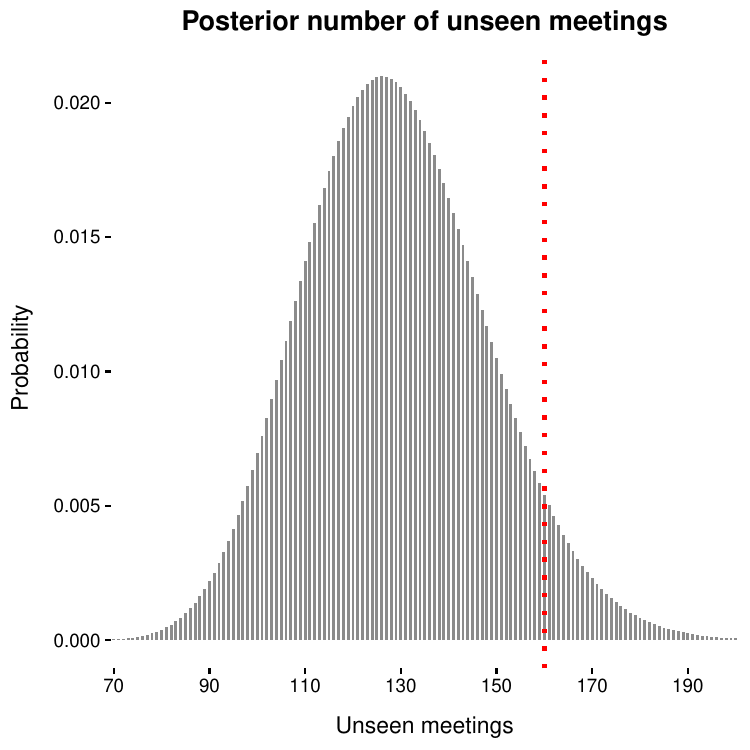}
       \caption{Unseen meetings under the unknown-groups model.}
        \label{fig:unseen_scen2_mix}
    \end{subfigure}%
    \hfill
    \begin{subfigure}[t]{0.45\linewidth}
        \centering
        \includegraphics[width=\linewidth]{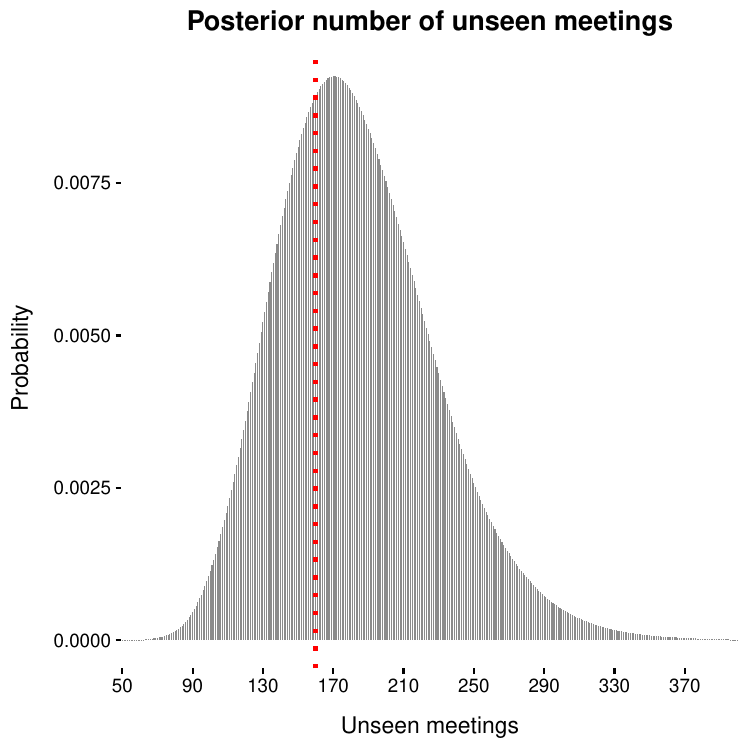}
       \caption{Unseen meetings under the negative binomial mixture of \textsc{bb}s.}
        \label{fig:unseen_scen2_negbinbb}
    \end{subfigure}

    \caption{Synthetic network data, Scenario 2. Posterior distribution of the number of unseen meetings under (a) the unknown-groups model and (b) the negative binomial mixture of \textsc{bb}s in \cite{Ghilotti2025}. The vertical red dotted lines denote the true number of unseen meetings.}
    \label{fig:M_prime_scenario_2}
\end{figure}

\begin{figure}[tbp]
    \centering
    
    \begin{subfigure}[t]{0.3\linewidth}
        \centering
        \includegraphics[width=\linewidth]{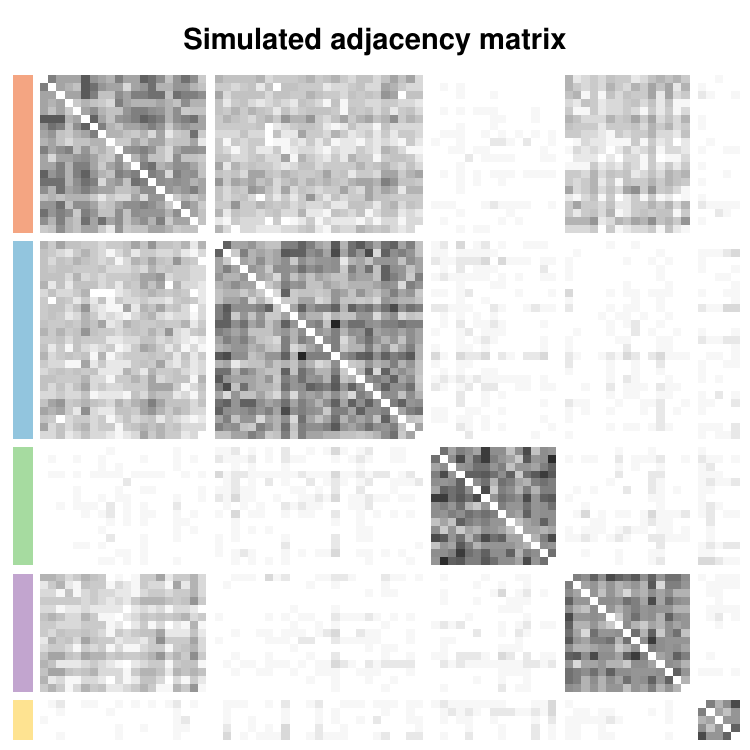}
        \caption{Simulated weighted adjacency matrix.}
        \label{fig:scenario2_simulated_adj_mat}
    \end{subfigure}%
    \hfill
    \begin{subfigure}[t]{0.3\linewidth}
        \centering
        \includegraphics[width=\linewidth]{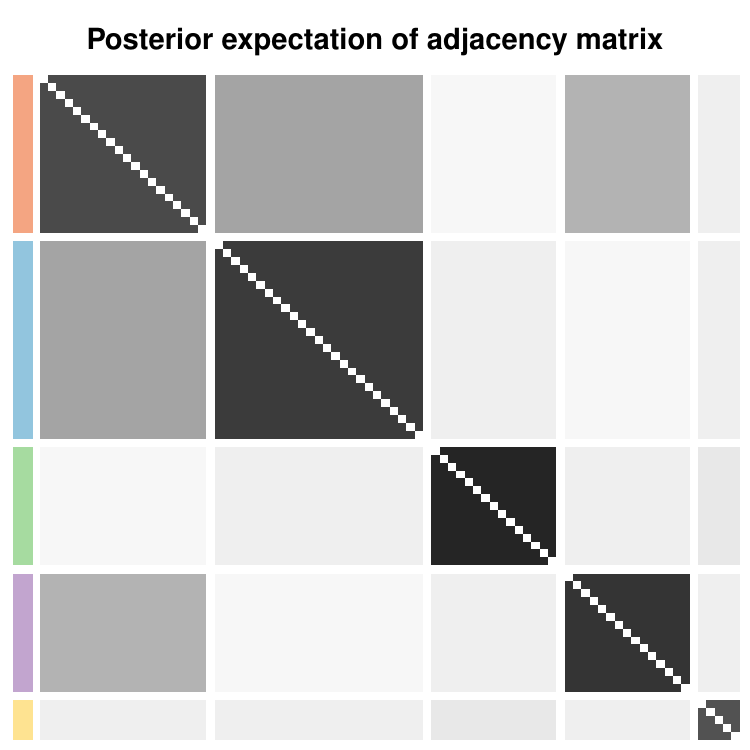}
        \caption{Posterior expectation of weighted adjacency matrix under the proposed unknown-groups model.}
        \label{fig:scenario2_post_exp_mixture}
    \end{subfigure}%
    \hfill
    \begin{subfigure}[t]{0.3\linewidth}
        \centering
        \includegraphics[width=\linewidth]{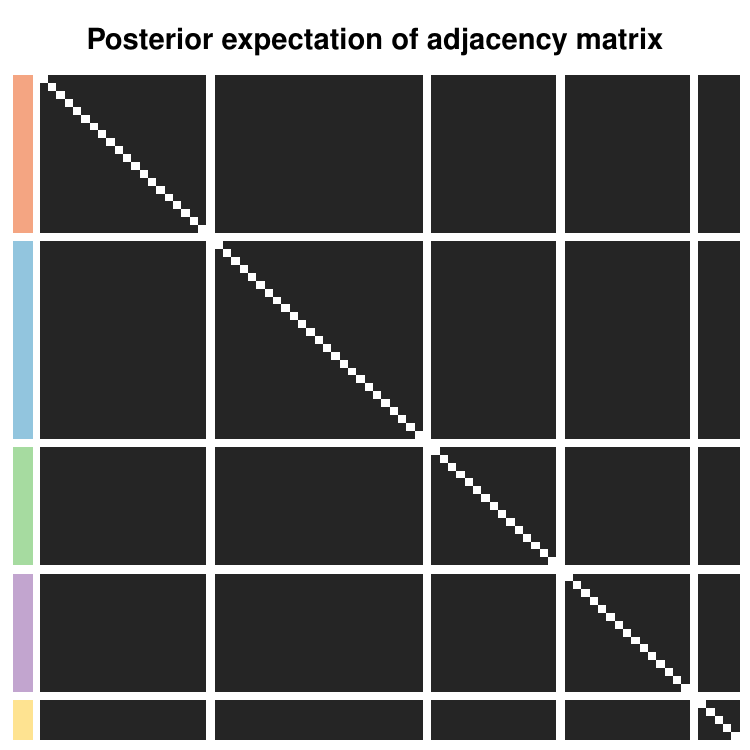}
        \caption{Posterior expectation of weighted adjacency matrix under the negative binomial mixture of \textsc{bb}s.}
        \label{fig:post_summary_ndrangheta_negbin_bb}
    \end{subfigure}

    \caption{Synthetic network data, Scenario 2. Simulated weighted adjacency matrix and its estimates under the competing models.}
    \label{fig:scenario2_post_adj_mat_comparison}
\end{figure}

\end{document}